\documentclass[letter, 11pt]{article}

\newif\ifdraft
\drafttrue

\usepackage{dsfont}
\usepackage{amsmath}
\usepackage{amsthm}
\usepackage{amssymb}
\usepackage{algorithmicx}
\usepackage{algorithm}
\usepackage{algpseudocode}
\usepackage{graphicx}
\usepackage{url}
\usepackage{mathabx}
\usepackage[letterpaper,margin=1in]{geometry}
\usepackage{authblk}
\usepackage{subcaption}
\usepackage{tikz}
\usepackage{xspace}

\newcommand{\R}{\mathbb{R}}    
\theoremstyle{plain}
\newtheorem{theorem}{Theorem}[section]
\newtheorem{lemma}[theorem]{Lemma}
\newtheorem{definition}[theorem]{Definition}
\newtheorem{defn}[theorem]{Definition}

\newtheorem{proposition}[theorem]{Proposition}
\newtheorem{remark}[theorem]{Remark}

\newtheorem{fact}[theorem]{Fact}
\newtheorem{claim}[theorem]{Claim}


\newcommand{\etc}{\textit{etc.}\xspace}
\newcommand{\reals}{\mathbb{R}}
\newcommand{\naturals}{\mathbb{N}}
\newcommand{\calC}{\mathcal{C}}

\newcommand{\calD}{\mathcal{D}}
\newcommand{\calI}{\mathcal{I}}
\newcommand{\calN}{\mathcal{N}}

\newcommand{\calW}{\mathcal{W}}
\newcommand{\calV}{\mathcal{V}}
\usepackage{hyperref}

\theoremstyle{remark}

\renewcommand{\th}{^{\footnotesize{\it th}}}

\newcommand{\tT}{\tilde{T}}

\newcommand{\eps}{\varepsilon}
\newcommand{\lca}{\text{LCA}}
\newcommand{\opt}{\text{OPT}}
\newcommand{\OPT}{\text{OPT}}

\newcommand{\dist}{\text{dist}}

\newcommand{\cut}{w}

\newcommand{\cost}{\text{cost}}
\newcommand{\val}{\text{val}}

\renewcommand{\tilde}{\widetilde}
\newcommand{\twonorm}[1]{\Vert #1 \Vert_{\mbox{\normalfont\scriptsize 2}}}
\renewcommand{\Pr}[1]{\mathbb{P}\left[\,#1\,\right]}
\newcommand\E[1]{\mathbb{E}\left[\,#1\,\right]}

\usepackage[round]{natbib}

\newcommand{\structinput}{ground-truth input\xspace}
\newcommand{\structinputs}{ground-truth inputs\xspace}

\newcommand{\StructInputs}{Ground-Truth Inputs\xspace}

\newcommand{\tvi}{\tilde{V_i}}

\newcommand{\hy}{\hat{y}}

\def\eg{{\it e.g.,}\xspace}
\def\ie{{\it i.e.,}\xspace}

\usepackage{color}

\definecolor{blueblack}{rgb}{0,0,.7}

\newcounter{sideremark}
\newcommand{\marrow}{\stepcounter{sideremark}\marginpar{$\boldsymbol{\longleftarrow\scriptstyle\arabic{sideremark}}$}}

\newcommand{\vknote}[1]{
\ifdraft
\textcolor{red}{VK: \marrow #1}
\fi}

\newcommand{\costnode}{\gamma}
\newcommand{\Costtree}{\Gamma}

\newcommand{\cho}{\text{child}_1}
\newcommand{\choc}[1]{\text{child}_1(#1)}

\newcommand{\cht}{\text{child}_2}
\newcommand{\chtc}[1]{\text{child}_2(#1)}

\makeatletter
\newtheorem*{rep@theorem}{\rep@title}
\newcommand{\newreptheorem}[2]{%
  \newenvironment{rep#1}[1]{%
    \def\rep@title{\textbf{\emph{#2 \ref{##1}}}}%
    \begin{rep@theorem}}%
    {\end{rep@theorem}}}
\makeatother


\newreptheorem{theorem}{Theorem}
\newreptheorem{thm}{Theorem}
\newreptheorem{lem}{Lemma}
\newreptheorem{defn}{Definition}
\newreptheorem{lemma}{Lemma}
\newreptheorem{proposition}{Proposition}

\newreptheorem{algorithm}{Algorithm}

\newif\ifshort
  \shortfalse 

%
%
\title{Hierarchical Clustering: Objective Functions and Algorithms}
\author[1]{Vincent Cohen-Addad}
\author[2,3]{Varun Kanade}
\author[4,5]{Frederik Mallmann-Trenn}
\author[4]{Claire Mathieu}
\affil[1]{University of Copenhagen}
\affil[2]{University of Oxford}
\affil[3]{The Alan Turing Institute}
\affil[4]{\'{E}cole normale sup\'{e}rieure, CNRS, {\tiny PSL Research University}}
\affil[5]{Simon Fraser University}
\date{}
\begin{document}
\begin{titlepage}
  \maketitle
  \thispagestyle{empty}
\begin{abstract}
	Hierarchical clustering is a recursive partitioning of a dataset into
	clusters at an increasingly finer granularity.  Motivated by the fact
	that most work on hierarchical clustering was based on providing
	algorithms, rather than optimizing a specific objective,
	\cite{Das:2016} framed similarity-based hierarchical clustering as a
	combinatorial optimization problem, where a `good' hierarchical
	clustering is one that minimizes some cost function. He showed that
	this cost function has certain desirable properties, such as in order
	to achieve optimal cost disconnected components must be separated
	first and that in `structureless' graphs, \ie cliques, all clusterings
	achieve the same cost.  
	
	We take an axiomatic approach to defining `good' objective functions
	for both similarity and dissimilarity-based hierarchical clustering.
	We characterize a set of \emph{admissible} objective functions (that
	includes the one introduced by Dasgupta) that have the property that
	when the input admits a `natural' ground-truth hierarchical
	clustering, the ground-truth clustering has an optimal value.

	Equipped with a suitable objective function, we analyze the performance of
	practical algorithms, as well as develop better and faster algorithms for
	hierarchical clustering. For similarity-based hierarchical clustering,
	\cite{Das:2016} showed that a simple recursive sparsest-cut based approach
	achieves an $O(\log^{3/2} n)$-approximation on worst-case inputs.  We give a
	more refined analysis of the algorithm and show that it in fact achieves an
	$O(\sqrt{\log n})$-approximation\footnote{\cite{CC16} independently proved
	that the sparsest-cut based approach achieves a $O(\sqrt{\log n})$
	approximation.}. This improves upon the LP-based $O(\log n)$-approximation
	of~\cite{Roy16}. For dissimilarity-based hierarchical clustering, we show
	that the classic average-linkage algorithm gives a factor $2$ approximation,
	and provide a simple and better algorithm that gives a factor $3/2$
	approximation. This aims at explaining the success of this heuristics in
        practice.
	%
	Finally, we consider `beyond-worst-case' scenario through 
        a generalisation of the stochastic block model for
	hierarchical clustering. We show that Dasgupta's cost function also has
	desirable properties for these inputs and we provide a simple algorithm
	that for graphs generated according to this model yields a 1 + o(1) factor
	approximation.

\end{abstract}
\end{titlepage}
\setcounter{page}{1}

\newpage
\tableofcontents 
\bigskip

  

\newpage

\section{Introduction}
\label{S:intro}
A \emph{hierarchical clustering} is a recursive partitioning of a dataset
into successively smaller clusters.  The input is a weighted graph whose
edge weights represent pairwise similarities or dissimilarities between
datapoints. A hierarchical clustering is 
represented by a rooted tree where each leaf represents a datapoint
and each internal node represents a cluster containing its
descendant leaves. Computing a hierarchical clustering is a fundamental
problem in data analysis; it is routinely used to analyze, classify, and
pre-process large datasets. A hierarchical clustering provides useful
information about data that can be used, \eg to divide a digital image
into distinct regions of different granularities, to identify communities
in social networks at various societal levels, or to determine the
ancestral tree of life. Developing robust and efficient algorithms for
computing hierarchical clusterings is of importance in several research
areas, such as machine learning, big-data analysis, and bioinformatics.

Compared to flat {partition-based clustering} (the problem 
of dividing the dataset into $k$ parts), hierarchical clustering has
received significantly less attention from a theory perspective.
Partition-based clustering is typically framed as minimizing a
well-defined objective such as $k$-means, $k$-medians, \etc and
(approximation) algorithms to optimize these objectives have been a focus
of study for at least two decades. On the other hand, hierarchical
clustering has rather been studied at a more procedural level in terms of
algorithms used in practice. Such algorithms can be broadly classified
into two categories, \emph{agglomerative} heuristics which build the
candidate cluster tree bottom up, \eg average-linkage, single-linkage,
and complete-linkage, and \emph{divisive} heuristics which build the tree
top-down, \eg bisection $k$-means, recursive sparsest-cut \etc
\citet{Das:2016} identified the lack of a well-defined objective function
as one of the reasons why the theoretical study of hierarchical
clustering has lagged behind that of partition-based clustering.

\subsubsection*{Defining a Good Objective Function.} 

What is a `good' output tree for hierarchical clustering?  Let us suppose that
the edge weights represent similarities
(similar datapoints are connected by edges of high weight)\footnote{This
entire discussion can equivalently be phrased in terms of dissimilarities
without changing the essence.}. \citet{Das:2016} frames
hierarchical clustering as a combinatorial optimization problem, where a good
output tree is a tree that minimizes some cost function; but which function
should that be? Each (binary) tree node is naturally associated to a cut that
splits the cluster of its descendant leaves into the cluster of its left
subtree on one side and the cluster of its right subtree on the other, and
Dasgupta defines the objective to be the sum, over all tree nodes, of the total
weight of edges crossing the cut multiplied by the cardinality of the node's
cluster.  In what sense is this good?  Dasgupta argues that it has several attractive properties:  (1) if the graph is
disconnected, \ie data items in different connected components have nothing to
do with one another, then the hierarchical clustering that minimizes the
objective function begins by first pulling apart the connected components from
one another; (2) when the input is a (unit-weight) clique then no particular
structure is favored and all binary trees have the same cost; and (3) the cost
function also behaves in a desirable manner for data containing a planted
partition. Finally, an attempt to generalize the cost function leads to
functions that violate property (2). 

In this paper, we take an axiomatic approach to defining a `good' cost
function.  We remark that in many application, for example in
phylogenetics, there exists an unknown `ground truth' hierarchical
clustering--- the actual ancestral tree of life---from which the similarities
are generated (possibly with noise), and the goal is to infer the underlying
ground truth tree from the available data. In this sense, a cluster tree is
good insofar as it is isomorphic to the (unknown) ground-truth cluster tree,
and thus a natural condition for a `good' objective function is one such that
for inputs that admit a `natural'  ground-truth cluster tree, the value of the
ground-truth tree is optimal. We provide a formal definition of inputs that
admit a ground-truth cluster tree in Section~\ref{SubSec:ultrametric}.



We consider, as potential objective functions, the class of all functions
that sum, over all the nodes of the tree, the total weight of edges
crossing the associated cut times some function of the cardinalities of
the left and right clusters (this includes the class of functions
considered by~\cite{Das:2016}).  In Section~\ref{sec:cost-functions} we
characterize the `good' objective functions in this class and call them
\emph{admissible} objective functions. We prove that for any objective
function, for any \structinput, the ground-truth tree has optimal cost
(w.r.t to the objective function) \emph{if and only if} the objective
function (1) is symmetric (independent of the left-right order of
children), (2) is increasing in the cardinalities of the child clusters,
and (3) for (unit-weight) cliques, has the same cost for all binary trees
(Theorem~\ref{thm:cost-func}).   Dasgupta's objective
function is admissible in terms of the criteria described above. 

In Section~\ref{S:randominputs}, we consider random graphs that induce a
natural clustering. This model can be seen as a noisy version
of our notion of ground-truth inputs and a hierarchical stochastic block 
model. We show that the
ground-truth tree has optimal \emph{expected cost} for any admissible
objective function. Furthermore, we show that the ground-truth tree has
cost at most $(1+o(1)) \opt$ with high probability for the objective
function introduced by~\cite{Das:2016}.


\subsection*{Algorithmic Results}

The objective functions identified in Section~\ref{S:costfun} allow us to (1)
quantitatively compare the performances of algorithms used in practice and (2)
design better and faster approximation algorithms.%
\footnote{For the objective function proposed in his work, \citet{Das:2016}
shows that finding a cluster tree that minimizes the cost function is NP-hard.
This directly applies to the admissible objective functions for the
dissimilarity setting as well. Thus, the focus turns to developing
approximation algorithms.} 
\smallskip

\noindent{\bf Algorithms for Similarity Graphs}:
\citet{Das:2016} shows that the \emph{recursive $\phi$-approximate
sparsest cut} algorithm, that recursively splits the input graph using a
$\phi$-approximation to the sparsest cut problem, outputs a tree whose cost is
at most $O(\phi \log n \cdot \OPT)$. \cite{Roy16} recently gave an $O(\log
n)$-approximation by providing a linear programming relaxation for the problem
and providing a clever rounding technique.  \cite{CC16} showed that the
recursive $\phi$-sparsest cut algorithm of Dasgupta gives an
$O(\phi)$-approximation.  In Section~\ref{S:sim:worstcase}, we obtain an
independent proof showing that the $\phi$-approximate sparsest cut algorithm is
an $O(\phi)$-approximation (Theorem~\ref{thm:SCapprox})\footnote{Our analysis
shows that the algorithm achieves a $6.75\phi$-approximation and the analysis
of~\cite{CC16} yields a $8\phi$-approximation guarantee.  This minor difference
is of limited impact since the best approximation guarantee for sparsest-cut is
$O(\sqrt{\log n})$.  }. Our proof is quite different from the proof
of~\cite{CC16} and relies on a charging argument.  Combined with the celebrated
result of~\cite{AroraRV09}, this yields an $O(\sqrt{\log n})$-approximation.
The results stated here apply to Dasgupta's objective function; the
approximation algorithms extend to other objective functions, though the ratio
depends on the specific function being used.   We conclude our analysis of the
worst-case setting by showing that all the linkage-based algorithms commonly
used in practice can perform rather poorly on worst-case inputs (see
Sec.~\ref{ssec:wc-lower-bounds}). \smallskip

\noindent{\bf Algorithms for Dissimilarity Graphs}:
Many of the algorithms commonly used in practice, \eg linkage-based methods,
assume that the input is provided in terms of pairwise dissimilarity (\eg
points that lie in a metric space).  As a result, it is of interest to
understand how they fare when compared using admissible objective functions for
the dissimilarity setting.  When the edge weights of the input graph represent
dissimilarities, the picture is considerably different from an approximation
perspective.  For the analogue of Dasgupta's objective function in the
dissimilarity setting, we show that the average-linkage algorithm (see
Algorithm~\ref{alg:avglinkage}) achieves a $2$-approximation
(Theorem~\ref{T:avgapprox}). This stands in contrast to other practical
heuristic-based algorithms, which may have an approximation guarantee as bad as
$\Omega(n^{1/4})$ (Theorem~\ref{T:badcase:dis:pract}). Thus, using this
objective-function based approach, one can conclude that the average-linkage
algorithm is the more robust of the practical algorithms, perhaps explaining
its success in practice.  We also provide a new, simple, and better algorithm,
the locally densest-cut algorithm,%
\footnote{We say that a cut $(A,B)$ is locally dense if moving a vertex from
$A$ to $B$ or from to $B$ to $A$ does not increase the density of the cut. One
could similarly define locally-sparsest-cut.}
which we show gives a $3/2$-approximation (Theorem~\ref{T:dis:betterapprox}).
Our results extend to any admissible objective function,
though the exact approximation factor depends on the specific choice. \smallskip

\noindent\textbf{Structured Inputs and Beyond-Worst-Case Analysis}: 
The recent work of~\cite{Roy16} and~\cite{CC16} have shown that obtaining
constant approximation guarantees for worst-case inputs is beyond current
techniques (see Section~\ref{ssec:related-work}).  Thus, we consider
inputs that admit a `natural' ground-truth cluster tree. For such inputs,
we show that essentially all the practical algorithms do the right thing,
in that they recover the ground-truth cluster tree. Since real-world
inputs might exhibit a noisy structure, we consider more general scenarios:

\begin{itemize}
	\item We consider a natural generalization of the classic stochastic block
		model that generates random graphs with a hidden ground-truth
		hierarchical clustering. We provide a simple algorithm based on singular
		value decomposition (SVD) and agglomerative methods that achieves a
		$(1+o(1))$-approximation for Dasgupta's objective function (in fact, it
		recovers the ground-truth tree) with high probability. Interestingly,
		this algorithm is very similar to approaches used in practice for
		hierarchical clustering.
  
\item We introduce the notion of a $\delta$-\emph{adversarially
    perturbed} ground-truth input, which can be viewed as being
  obtained from a small perturbation to an input that admits a
  natural ground truth cluster tree. This approach bears similarity
  to the stability-based conditions used by~\citet{BalcanBV10}
  and~\citet{BiL12}.  We provide an algorithm that achieves a
  $\delta$-approximation in both the similarity and dissimilarity
  settings, independent of the objective function used as long as it
  is admissible according to the criteria used in
  Section~\ref{sec:cost-functions}.
\end{itemize}

\subsection{Summary of Our Contributions}
\label{ssec:result-summary}

Our work makes significant progress towards providing a more complete
picture of objective-function based hierarchical clustering and understanding
the success of the classic heuristics for hierarchical clustering.

\begin{itemize}
	\item 
		Characterization of `good' objective functions. We prove
		that for any \structinput, the ground-truth tree has strictly
		optimal cost for an objective function {\em if and only if}, the
		objective function (1) is symmetric (independent of the left-right
		order of children), (2) is monotone in the cardinalities of the
		child clusters, and (3) for unit-weight cliques, gives the same
		weight to all binary trees (Theorem~\ref{thm:cost-func}).  We refer
		to such objective functions as \emph{admissible}; according to
		these criteria Dasgupta's objective function is admissible.
	\item Worst-case approximation. First, for similarity-based inputs, we provide a new proof that the
		recursive $\phi$-approximate sparsest cut algorithm is an
		$O(\phi)$-approximation (hence an $O(\sqrt{\log
		n})$-approximation) (Theorem~\ref{thm:SCapprox}) for Dasgupta's
		objective function.  Second, for dissimilarity-based inputs, we show that the
		classic average-linkage algorithm is a $2$-approximation
		(Theorem~\ref{T:avgapprox}), 
		 and provide a new
		algorithm which we prove is a $3/2$-approximation (Theorem~\ref{T:dis:betterapprox}).
		All those results extend to other cost functions
		but the approximation ratio is function-dependent. 
	\item Beyond worst-case. First, stochastic models. 
		We
		consider the \emph{hierarchical stochastic block model} 
                (Definition~\ref{defn:hsbm}). 
		We give a simple algorithm based on SVD and
		classic agglomerative methods that, with high probability,
		recovers the ground-truth tree and show that this tree has
		cost that is $(1 + o(1)) \opt$ with respect to Dasgupta's objective
		function (Theorem~\ref{thm:random:recovery}). Second, adversarial models. We introduce the notion of 
		$\delta$-{perturbed} inputs, obtained by
		a small adversarial perturbation to \structinputs, and give a simple 
		$\delta$-approximation algorithm 
		(Theorem~\ref{T:robustsimple}).  
		%

	\item Perfect inputs, perfect reconstruction. For \structinputs, 
		we note that the algorithms used in practice (the
		linkage algorithms, the bisection $2$-centers, etc.) correctly
		reconstruct a ground truth tree
		(Theorems~\ref{T:linkageAlgs:Perfectdata},~\ref{T:bisectionkcenter},
		\ref{T:sc:perfect}).  We introduce a simple, faster 
		algorithm 
		that is also optimal on \structinputs
		(Theorem~\ref{T:alg:fastsimple}).  
\end{itemize}

\subsection{Related Work}
\label{ssec:related-work}
The recent paper of~\citet{Das:2016} served as the starting point of this
work. \citet{Das:2016} defined an objective function for hierarchical
clustering and thus formulated the question of constructing a cluster
tree as a combinatorial optimization problem.  Dasgupta also showed that
the resulting problem is NP-hard and that the recursive
$\phi$-sparsest-cut algorithm achieves an $O(\phi \log n)$-approximation.
Dasgupta's results have been improved in two subsequent papers.
\cite{Roy16} wrote an integer program for the
hierarchical clustering problem using a combinatorial
characterization of the ultrametrics induced by Dasgupta's cost function.
They also provide a spreading metric LP
and a rounding algorithm based on sphere/region-growing that yields an $O(\log
n)$-approximation. Finally, they show that no polynomial size
SDP can achieve a constant factor approximation for the problem and that
under the Small Set Expansion (SSE) hypothesis, no polynomial-time
algorithm can achieve a constant factor approximation.

\citet{CC16} also gave a proof that the problem is hard to
approximate within any constant factor under the Small Set Expansion
hypothesis. They also proved that the recursive
$\phi$-sparsest cut algorithm produces a hierarchical clustering with
cost at most $O(\phi \OPT)$; 
their techniques appear to be significantly different
from ours. Additionally, \cite{CC16} introduce a spreading metric SDP
relaxation for the hierarchical clustering problem introduced by Dasgupta
that has integrality gap $O(\sqrt{\log n})$ and a spreading metric LP
relaxation that yields an $O(\log n)$-approximation to the problem. 

\paragraph{On hierarchical clustering more broadly.}
There is an extensive literature on hierarchical clustering and its
applications. It will be impossible to discuss most of it here; for some
applications the reader may refer to
\eg~\citep{jardine1972mathematical,sneath1962numerical,felsenstein2004inferring,
castro2004likelihood}.
Algorithms for hierarchical clustering have received a lot of attention from a
practical perspective. For a definition and overview of 
\emph{agglomerative} algorithms (such as average-linkage, complete-linkage, and
single-linkage) see \eg~\citep{friedman2001elements} and for \emph{divisive
algorithms} see \eg~\cite{Steinbach00acomparison}.

Most previous theoretical work on hierarchical clustering aimed at
evaluating the cluster tree output by the linkage algorithms using the
traditional objective functions for partition-based clustering, \eg
considering $k$-median or $k$-means cost of the clusters induced by the
top levels of the tree (see \eg \citep{Plaxton03,DL05,LNRW06}).  Previous
work also proved that average-linkage can be useful to recover an
underlying partition-based clustering when it exists under certain
stability conditions (see~\citep{BalcanBV10,BaL16}). The approach of this
paper is different: we aim at associating a cost or a value to each
hierarchical clustering and finding the best hierarchical clustering with
respect to these objective functions.

In Section~\ref{S:costfun}, we take an axiomatic approach toward
\emph{objective functions}. Axiomatic approach toward a \emph{qualitative
analysis of algorithms} for clustering where taken before.  For example, the
celebrated result of~\cite{kleinberg2002impossibility} (see
also~\cite{zadeh2009uniqueness}) showed that there is no algorithm satisfying
three natural axioms simultaneously.  This approach was applied to hierarchical
clustering algorithms by~\cite{CM:2010} who showed that in the case of
hierarchical clustering one gets a positive result, unlike the impossibility
result of Kleinberg. Their focus was on finding an ultrametric (on the
datapoints) that is the closest to the metric (in which the data lies) in terms
of the Gromov-Hausdorf distance.  Our approach is completely different as we
focus on defining objective functions and use these for \emph{quantitative
analyses of algorithms}.

Our condition for inputs to have a ground-truth cluster tree, and especially
their $\delta$-adversarially perturbed versions, can be to be in the same
spirit as that of the stability condition of~\citet{BiL12} or \citet{BDLS13}:
the input induces a natural clustering to be recovered whose cost is optimal.
It bears some similarities with the ``strict separation'' condition
of~\citet{BalcanBV10}, while we do not require the separation to be strict, we
do require some additional hierarchical constraints. There are a variety of
stability conditions that aim at capturing some of the structure that
real-world inputs may exhibit (see \eg~\citep{ABS12,BBG13,BalcanBV10,ORSS12}).
Some of them induce a condition under which an underlying clustering can be
mostly recovered (see \eg~\citep{BiL12,BBG09,BBG13} for deterministic
conditions and \eg~\cite{ArK01,BrV08,DaS07,Das99,BRT09} for probabilistic
conditions).  Imposing other conditions allows one to bypass
hardness-of-approximation results for classical clustering objectives (such as
$k$-means), and design efficient approximation algorithms (see,
\eg~\citep{ABS10,AwS12,KuK10}).  \cite{EBW16} also investigate the question of
understanding hierarchical cluster trees for random graphs generated from
graphons. Their goal is quite different from ours---they consider the
``single-linkage tree'' obtained using the graphon as the ground-truth tree and
investigate how a cluster tree that has low \emph{merge distortion} with
respect to this be obtained.%
\footnote{This is a simplistic characterization of their work. However, a more
precise characterization would require introducing a lot of terminology from
their paper, which is not required in this paper.}  
This is quite different from the approach taken in our work which is primarily
focused on understanding performance with respect to admssible cost functions. 


\section{Preliminaries}
\label{sec:prelim}
\subsection{Notation}

An undirected weighted graph $G = (V, E, w)$ is defined by a finite set of
vertices $V$, a set of edges $E \subseteq \{\{u, v\} ~|~ u, v \in V \}$ and a
weight function $w : E \rightarrow \reals_+$, where $\reals_+$ denotes
non-negative real numbers. We will only consider graphs with positive weights
in this paper. To simplify notation (and since the graphs 
are undirected) we let 
$w(u,v) = w(v,u) = w(\{u,v\})$.
 When the weights on the edges are not pertinent, we simply
denote graphs as $G = (V,E)$. When $G$ is clear from the context,
we denote $|V|$ by $n$ and $|E|$ by $m$.
We define $G[U]$ to be the subgraph induced by the nodes of $U$.

A \emph{cluster tree} or \emph{hierarchical clustering} 
$T$ for graph $G$ is a rooted
binary tree with exactly $|V|$ leaves, each of which is labeled by a distinct
vertex $v \in V$.\footnote{In general, one can look at trees that are not
binary.  However, it is common practice to use binary trees in the context of
hierarchical trees.  Also, for results presented in this paper nothing is
gained by considering trees that are not binary.} Given a graph $G = (V, E)$
and a cluster tree $T$ for $G$, for nodes $u, v \in V$ we denote by
$\lca_{T}(u, v)$ the lowest common ancestor (furthest from the root) of $u$ and
$v$ in $T$.

For any internal node $N$ of $T$, we denote the subtree of $T$ rooted at $N$ by
$T_N$.\footnote{For any tree $T$, when we refer to a subtree $T'$ (of $T$)
rooted at a node $N$, we mean the connected subgraph containing all the leaves
of $T$ that are descendant of $N$.}  Moreover, for any node $N$ of $T$, define
$V(N)$ to be the set of leaves of the subtree rooted at $N$.  Additionally, for
any two trees $T_1,T_2$, define the \emph{union} of $T_1,T_2$ to be the tree
whose root has two children $C_1,C_2$ such that the subtree rooted at $C_1$ is
$T_1$ and the subtree rooted at $C_2$ is $T_2$.

Finally, given a weighted graph $G=(V,E,w)$, for any set of vertices $A
\subseteq V$, let $w(A) = \sum_{a,b \in A} w(a,b)$ and for any set of edges
$E_0$, let $w(E_0) = \sum_{e \in E_0} w(e)$. Finally, for any sets of vertices
$A,B \subseteq V$, let $w(A,B) = \sum_{a \in A,b \in B} w(a,b)$.

\subsection{Ultrametrics}
\label{SubSec:ultrametric}
\begin{definition}[Ultrametric] \label{defn:ultrametric} 
  A metric space $(X, d)$ is an ultrametric if for every $x, y, z \in X$,
  $d(x, y) \leq \max \{ d(x, z), d(y, z) \}$.  
\end{definition}

\subsubsection*{Similarity Graphs Generated from Ultrametrics}

We say that a weighted graph $G = (V, E, w)$ is a \emph{similarity graph
generated from an ultrametric}, if there exists an ultrametric $(X, d)$, such
that $V \subseteq X$, and for every $x, y \in V, x \neq y$, $e = \{x, y\}$
exists, and $w(e) = f(d(x, y))$, where $f : \reals_+ \rightarrow \reals_+$ is a
non-increasing function.\footnote{In some cases, we will say that $e = \{x, y\}
\not\in E$, if $w(e) = 0$. This is fine as long as $f(d(x, y)) = 0$.}

\subsubsection*{Dissimilarity Graphs Generated from Ultrametrics}

We say that a weighted graph $G = (V, E, w)$ is a \emph{dissimilarity graph
generated from an ultrametric}, if there exists an ultrametric $(X, d)$, such
that $V \subseteq X$, and for every $x, y \in V, x \neq y$, $e = \{x, y\}$
exists, and $w(e) = f(d(x, y))$, where $f : \reals_+ \rightarrow \reals_+$ is a
non-decreasing function.


\subsubsection*{Minimal Generating Ultrametric}

For a weighted undirected graph $G = (V, E, w)$ generated from an ultrametric
(either similarity or dissimilarity), in general there may be several
ultrametrics and the corresponding function $f$ mapping distances in the
ultrametric to weights on the edges, that generate the same graph. It is useful
to introduce the notion of a minimal ultrametric that generates $G$. 

We focus on similarity graphs here; the notion of minimal 
generating ultrametric for dissimilarity
graphs is easily obtained by suitable modifications. Let $(X, d)$ be an
ultrametric that generates $G = (V, E, w)$ and $f$ the corresponding function
mapping distances to similarities. Then we consider the ultrametric $(V,
\tilde{d})$ defined as follows: (i) $\tilde{d}(u, u) = 0$ and (ii) for $u \neq
v$, 
\begin{align}
	\tilde{d}(u, v) = \tilde{d}(v, u) = \max_{u^\prime, v^\prime} \{ d(u^\prime, v^\prime) ~|~ f(d(u^\prime, v^\prime)) = f(d(u, v)) \} \label{eqn:minUM}
\end{align}
It remains to be seen that $(V, \tilde{d})$ is indeed an ultrametric. First,
notice that by definition, $\tilde{d}(u, v) \geq d(u, v)$ and hence clearly
$\tilde{d}(u, v) = 0$ if and only if $u = v$ as $d$ is the distance in an
ultrametric. The fact that $\tilde{d}$ is symmetric is immediate from the
definition. The only part remaining to check is the so called \emph{isosceles
triangles with longer equal sides} conditions---the ultrametric requirement
that for any $u, v, w$, $d(u, v) \leq \max\{d(u, w), d(v, w) \}$ implies that
all triangles are isosceles and the two sides that are equal are 
at least as large as the
third side. Let $u, v, w \in V$, and assume without loss of generality that
according to the distance $d$ of $(V, d)$, $d(u, w) = d(v, w) \geq d(u, v)$.
From~\eqref{eqn:minUM} it is clear that $\tilde{d}(u, w) =  \tilde{d}(v, w)
\geq d(u, w)$. Also, from~\eqref{eqn:minUM} and the non-increasing nature of
$f$ it is clear that if $d(u, v) \leq d(u^\prime, v^\prime)$, then
$\tilde{d}(u, v) \leq \tilde{d}(u^\prime, v^\prime)$. Thence, $(V, \tilde{d})$
is an ultrametric. The advantage of considering the minimal ultrametric is the
following: if $\calD= \{ \tilde{d}(u, v) ~|~ u, v \in V, u \neq v \}$ and
$\calW = \{ w(u, v) ~|~ u, v \in V, u \neq v \}$, then the restriction of $f$
from  $\calD \rightarrow \calW$ is actually a bijection. This allows the notion
of a generating tree to be defined in terms of distances in the ultrametric or
weights, without any ambiguity. Applying an analogous definition and 
reasoning yields a similar notion for the dissimilarity case.

\begin{definition}[Generating Tree]  \label{defn:generating-tree}
	Let $G = (V, E, w)$ be a graph generated by a minimal ultrametric $(V, d)$
	(either a similarity or dissimilarity graph). Let $T$ be a rooted binary
	tree with $|V|$ leaves and $|V| - 1$ internal nodes; let $\calN$ denote the
	internal nodes and $L$ the set of leaves of $T$ and let $\sigma : L
	\rightarrow V$ denote a bijection between the leaves of $T$ and nodes of
	$V$. We say that $T$ is a generating tree for $G$, if there exists a weight
	function $W : \calN \rightarrow \reals_+$, such that for $N_1, N_2 \in \calN$,
	if $N_1$ appears on the path from $N_2$ to the root, $W(N_1) \leq W(N_2)$.
	Moreover for every $x, y \in V$, $w(\{x, y\}) = W(LCA_{T}(\sigma^{-1}(x),
	\sigma^{-1}(y)))$. 
\end{definition}


The notion of a generating tree defined above more or less corresponds to what
is referred to as a \emph{dendogram} in the machine learning literature (see
\eg \citep{CM:2010}). More formally, a dendogram is a rooted tree (not
necessarily binary), where the leaves represent the datapoints. Every internal
node in the tree has associated with it a height function $h$ which is the
distance between any pairs of datapoints for which it is the least common
ancestor. It is a well-known fact that a set of points in an ultrametric can be
represented using a dendogram (see \eg \citep{CM:2010}).  A dendogram can
easily be modified to obtain a generating tree in the sense of
Definition~\ref{defn:generating-tree}: an internal node with $k$ children
is
replace by an arbitrary binary tree with $k$ leaves and 
the children of the
nodes in the dendogram are attached to these $k$ leaves. 
The height $h$ of this
node is used to give the weight $W = f(h)$ to all the $k-1$ 
internal nodes
added when replacing this node. Figure~\ref{fig:dendogram} shows this
transformation.

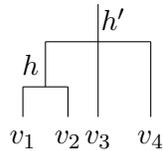
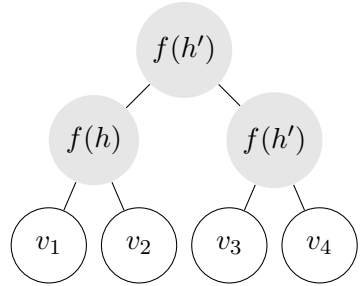
\begin{figure*}[t!]
	\centering
	\begin{subfigure}[t]{.49\textwidth}
		\centering
		\begin{tikzpicture}
			\draw (0, -0.5) -- (0, -1);
			\draw (-0.7, -1) -- (0.7, -1);
			\draw (-0.7, -1) -- (-0.7, -1.6);
			\draw (0, -1) -- (0, -2.0);
			\draw (0.7, -1) -- (0.7, -2.0);
			\draw (-1, -1.6) -- (-0.4, -1.6);
			\draw (-1, -1.6) -- (-1, -2);
			\draw (-0.4, -1.6) -- (-0.4, -2);
			\node at (-1, -2.3) {$v_1$};
			\node at (-0.4, -2.3) {$v_2$};
			\node at (0, -2.3) {$v_3$};
			\node at (0.7, -2.3) {$v_4$};
			\node at (-0.9, -1.3) {$h$};
			\node at (0.2, -0.7) {$h^\prime$};
		\end{tikzpicture}
		\caption{Dendogram on $4$ nodes. \label{fig:dend}}
	\end{subfigure}
	\begin{subfigure}[t]{.45\textwidth}
		\centering
		\begin{tikzpicture}
			\node[fill=gray!20, circle, minimum width=1.2cm] (root) at (0, 0) {$f(h^\prime)$};
			\node[fill=gray!20, circle, minimum width=1.2cm] (l1) at (-1.2, -1.2) {$f(h)$};
			\node[fill=gray!20, circle, minimum width=1.2cm] (r1) at (1.2, -1.2) {$f(h^\prime)$};
			\node[draw, circle, minimum width=1.0cm] (v1) at (-1.8, -2.6) {$v_1$};
			\node[draw, circle, minimum width=1.0cm] (v2) at (-0.6, -2.6) {$v_2$};
			\node[draw, circle, minimum width=1.0cm] (v3) at (0.6, -2.6) {$v_3$};
			\node[draw, circle, minimum width=1.0cm] (v4) at (1.8, -2.6) {$v_4$};

			\draw (root) -- (l1);
			\draw (root) -- (r1);
			\draw (l1) -- (v1);
			\draw (l1) -- (v2);
			\draw (r1) -- (v3);
			\draw (r1) -- (v4);

		\end{tikzpicture}
		\caption{Generating tree equivalent to dendogram\label{fig:gentree}}
	\end{subfigure}
	\caption{\label{fig:dendogram}Dendogram and equivalent generating tree.}
\end{figure*}

\subsubsection*{Ground-Truth Inputs}

\begin{definition}[Ground-Truth Input.]
  We say that a graph $G$ is a \emph{\structinput} if it is a similarity or
	dissimilarity graph generated from an ultrametric. Equivalently, there
	exists a tree $T$ that is generating for $G$.
\end{definition}

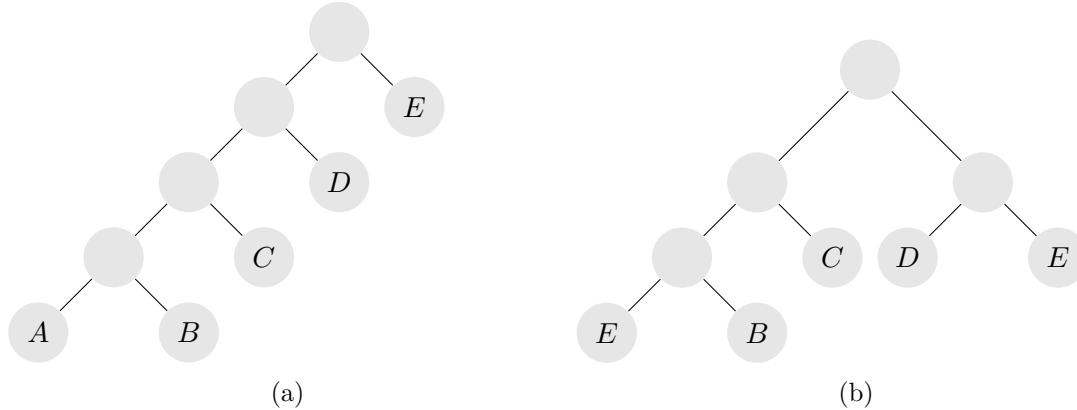
\begin{figure*}[t!]
	\centering
	\begin{subfigure}[t]{0.45\textwidth}
	\begin{tikzpicture}
		\node[fill=gray!20, circle, minimum width=0.8cm] (root) at (0, 0) {};
		\node[fill=gray!20, circle, minimum width=0.8cm] (l1) at (-1, -1) {};
		\node[fill=gray!20, circle, minimum width=0.8cm] (l2) at (-2, -2) {};
		\node[fill=gray!20, circle, minimum width=0.8cm] (l3) at (-3, -3) {};
		\node[fill=gray!20, circle, minimum width=0.8cm] (v5) at (1, -1) {$E$};
		\node[fill=gray!20, circle, minimum width=0.8cm] (v4) at (0, -2) {$D$};
		\node[fill=gray!20, circle, minimum width=0.8cm] (v3) at (-1, -3) {$C$};
		\node[fill=gray!20, circle, minimum width=0.8cm] (v2) at (-2, -4) {$B$};
		\node[fill=gray!20, circle, minimum width=0.8cm] (v1) at (-4, -4) {$A$};
		\draw (root) -- (l1);
		\draw (root) -- (v5);
		\draw (l1) -- (l2);
		\draw (l1) -- (v4);
		\draw (l2) -- (l3);
		\draw (l2) -- (v3);
		\draw (l3) -- (v2);
		\draw (l3) -- (v1);
	\end{tikzpicture}
		\caption{ }
	\end{subfigure}
	\begin{subfigure}[t]{0.45\textwidth}
	\begin{tikzpicture}
		\node[fill=gray!20, circle, minimum width=0.8cm] (root) at (0, 0) {};
		\node[fill=gray!20, circle, minimum width=0.8cm] (l1) at (-1.5, -1.5) {};
		\node[fill=gray!20, circle, minimum width=0.8cm] (l2) at (-2.5, -2.5) {};
		\node[fill=gray!20, circle, minimum width=0.8cm] (r1) at (1.5, -1.5) {};
		\node[fill=gray!20, circle, minimum width=0.8cm] (v5) at (2.5, -2.5) {$E$};
		\node[fill=gray!20, circle, minimum width=0.8cm] (v4) at (0.5, -2.5) {$D$};
		\node[fill=gray!20, circle, minimum width=0.8cm] (v3) at (-0.5, -2.5) {$C$};
		\node[fill=gray!20, circle, minimum width=0.8cm] (v2) at (-1.5, -3.5) {$B$};
		\node[fill=gray!20, circle, minimum width=0.8cm] (v1) at (-3.5, -3.5) {$E$};
		\draw (root) -- (l1);
		\draw (root) -- (r1);
		\draw (r1) -- (v4);
		\draw (r1) -- (v5);
		\draw (l1) -- (l2);
		\draw (l1) -- (v3);
		\draw (l2) -- (v2);
		\draw (l2) -- (v1);
	\end{tikzpicture}
		\caption{ }
	\end{subfigure}
	\caption{(a) Caterpillar tree on $5$ nodes with unit-weight edges used to define a
	tree metric. (b) A candidate cluster tree for the data generated using the
	tree metric \label{fig:cater1}}
\end{figure*}

\noindent\textbf{Motivation}. We briefly describe the motivation for defining
graphs generated from an ultrametric as ground-truth inputs. We'll focus the
discussion on similarity graphs, though essentially the same logic holds for
dissimilarity graphs. As described earlier, there is a natural notion of a
\emph{generating tree} associated with graphs generated from ultrametrics. This
tree itself can be viewed as a cluster tree. The clusters obtained using the
generating tree have the property that any two nodes in the same cluster are at
least as similar to each other as they are to points outside this cluster; and
this holds at every level of granularity. Furthermore, as observed
by~\citet{CM:2010}, many practical hierarchical clustering algorithms such as
the linkage based algorithms, actually output a dendogram equipped with a
height function, that corresponds to an ultrametric embedding of the data.
While their work focuses on algorithms that find embeddings in ultrametrics,
our work focuses on finding cluster trees. We remark that these problems are
related but also quite different.

Furthermore, our results show that the linkage algorithms (and some other
practical algorithms), recover a generating tree when given as input graphs
that are generated from an ultrametric. Finally, we remark that relaxing the
notion further leads to instances where it is hard to define a `natural'
ground-truth tree.  Consider a similarity graph generated by a
\emph{tree-metric} rather than an ultrametric, where the tree is the
caterpillar graph on 5 nodes (see Fig.~\ref{fig:cater1}(a)). Then, it is hard
to argue that the tree shown in Fig.~\ref{fig:cater1}(b) is not a more suitable
cluster tree. For instance, $D$ and $E$ are more similar to each other than $D$
is to $B$ or $A$. In fact, it is not hard to show that by choosing a suitable
function $f$ mapping distances from this tree metric to similarities,
Dasgupta's objective function is minimized by the tree shown in
Fig.~\ref{fig:cater1}(b), rather than the `generating' tree in
Fig.~\ref{fig:cater1}(a).


\section{Quantifying Output Value: An Axiomatic Approach}
\label{S:costfun}
\label{sec:cost-functions}
\subsection{Admissible Cost Functions}

Let us focus on the similarity case; in this case we use \emph{cost} and
\emph{objective} interchangeably.  Let $G = (V, E, w)$ be an undirected
weighted graph and let $T$ be a cluster tree for graph $G$. We want to consider
cost functions for cluster trees that capture the quality of the hierarchical
clustering produced by $T$. Following the recent work of~\citet{Das:2016}, we
adopt an approach in which a cost is assigned to each internal node of the tree
$T$ that corresponds to the quality of the split at that node. \smallskip

\noindent\textbf{The Axiom}. A natural property we would like the cost function to
satisfy is that a cluster tree $T$ has minimum cost if and only if $T$ is a
generating tree for $G$.  Indeed, the objective function can then be used to
indicate whether a given tree is generating and so, whether it is an underlying
ground-truth hierarchical clustering.  Hence, the objective function acts as a
``guide'' for finding the correct hierarchical classification.  Note that there
may be multiple trees that are generating for the same graph.  For example, if
$G = (V, E, w)$ is a clique with every edge having the same weight then every
tree is a generating tree.  In these cases, all the generating tree are
\emph{valid} ground-truth hierarchical clusterings. 

Following~\cite{Das:2016}, we restrict the search space for such cost
functions. For an internal node $N$ in a clustering tree $T$, let $A, B
\subseteq V$ be the leaves of the subtrees rooted at the left and right child
of $N$ respectively.  We define the cost $\Costtree$ of the tree $T$ as the sum of the cost at every internal node $N$
in the tree,  and at an individual node $N$ we consider cost functions $\costnode$ of
the form 
\begin{align}
	\Costtree(T) &= \sum_{N} \costnode(N),  \label{eqn:costfuncfull}
\end{align}
\begin{align}
	\costnode(N) &= \left( \sum_{x \in A, y \in B} w(x, y) \right) \cdot g( |A|, |B|)
	\label{eqn:costfunc}
\end{align}
We remark that~\cite{Das:2016} defined $g(a,b) = a+b$.

\begin{definition}[Admissible Cost Function] \label{defn:admissibleCF}
  We say that a cost function $\gamma$ of the 
  form~(\ref{eqn:costfuncfull},\ref{eqn:costfunc}) is
  \emph{admissible} if it satisfies the condition that for all 
  similarity graphs $G =(V, E, w)$
  generated from a minimal ultrametric $(V, d)$, a cluster tree $T$ for $G$
  achieves the minimum cost if and only if it is a generating tree for $G$.
\end{definition}

\begin{remark}
  \label{rem:costfun:diss}
  Analogously, for the dissimilarity setting we define \emph{admissible
  value functions} to be the functions of the form~(\ref{eqn:costfuncfull},\ref{eqn:costfunc})
  that satisfy: for all dissimilarity graph $G$ generated 
  from a minimal ultrametric $(V, d)$, a cluster tree $T$ for $G$
  achieves the maximum value if and only if it is a generating 
  tree for $G$.
\end{remark}

\begin{remark}
	The RHS of~\eqref{eqn:costfunc} has linear dependence on the weight of the
	cut $(A, B)$ in the subgraph of $G$ induced by the vertex set $A \cup B$ as
	well as on an arbitrary function of the number of leaves in the subtrees of
	the left and right child of the internal node creating the cut $(A, B)$.
	For the purpose of hierarchical clustering this form is fairly natural and
	indeed includes the specific cost function introduced by~\citet{Das:2016}.
	We could define the notion of admissibility for other forms of the cost
	function similarly and  it would be of interest to understand whether they
	have properties that are desirable from the point of view of hierarchical
	clustering.
\end{remark}

\subsection{Characterizing Admissible Cost Functions}

In this section, we give an almost complete characterization of admissible cost
functions of the form~\eqref{eqn:costfunc}. The following theorem shows
that cost functions of this form are admissible if and only if they satisfy
three conditions: that all cliques must have the same cost, symmetry and
monotonicity.

\begin{theorem} \label{thm:cost-func}
	Let $\costnode$ be a cost function of the form~\eqref{eqn:costfunc} and let
	$g$ be the corresponding function used to define $\costnode$. Then
	$\costnode$ is admissible if and only if it satisfies the following three
	conditions.
	\begin{enumerate}
		\item Let $G = (V, E, w)$  be a clique, \ie for every $x, y \in V$, $e =
			\{x, y \} \in E$ and $w(e) = 1$ for every $e \in E$. Then the cost
			$\Costtree(T)$ for every cluster tree $T$ of $G$ is identical.
		\item For every $n_1, n_2 \in \naturals$, $g(n_1, n_2) = g(n_2, n_1)$.
		\item For every $n_1, n_2 \in \naturals$, $g(n_1 + 1, n_2) > g(n_1, n_2)$.
	\end{enumerate}
\end{theorem}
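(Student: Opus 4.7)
The plan is to prove each direction of the ``iff'' separately.

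Forward direction: assume $\gamma$ is admissible. Condition~(1) is immediate: a unit-weight clique is generated by the ultrametric on $V$ with all pairwise distances equal, under which \emph{every} binary tree on $V$ is a generating tree; admissibility forces all trees to share the optimal cost. For~(2), fix $a, b \geq 1$ and take $V = A \sqcup B$ with $|A| = a, |B| = b$, minimal ultrametric having intra-cluster distance $1$ and inter-cluster distance $2$, and $f$ strictly decreasing. The essentially unique generating tree has root split $(A, B)$ with cross weight $ab \cdot f(2) > 0$; swapping the root's two children yields another generating tree (the generating property is invariant under left/right swaps), so by admissibility both are optimal of equal cost, giving $w(A, B)[g(a, b) - g(b, a)] = 0$ and hence $g(a, b) = g(b, a)$.

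For~(3), fix $a, b \geq 1$ and consider $V = X \sqcup Y \sqcup \{u\}$ with $|X| = a$, $|Y| = b$, and the ultrametric $d(x, x') = d(y, y') = 1$, $d(x, y) = 2$, $d(u, x) = d(u, y) = 3$. Compare the two trees $T$ (root $\{u\}$ vs.\ $X \cup Y$, then $(X, Y)$) and $T'$ (root $X$ vs.\ $Y \cup \{u\}$, then $(Y, \{u\})$), applying admissibility to two inputs on this vertex set. First, let $f$ be non-increasing with $f(2) = f(3) = \beta$ so that the minimal ultrametric collapses inter-cluster distances to $3$ and $\{u\}, X, Y$ become three top-level clusters; then both $T$ and $T'$ are generating, admissibility equates their costs, and a direct calculation yields the identity
\[
a\,g(a, b{+}1) + b\,g(b, 1) - (a{+}b)\,g(1, a{+}b) = -ab\,[g(a, b{+}1) - g(a, b)].
\]
Second, take $f$ strictly decreasing with $\alpha = f(1) > \beta = f(2) > \gamma = f(3) > 0$; then $T$ is the unique generating tree ($T'$ is not, as its root would require two distinct weights), so admissibility gives $\Costtree(T') > \Costtree(T)$. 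Computing the difference and substituting the identity above,
\[
\Costtree(T') - \Costtree(T) = ab(\beta - \gamma)\,[g(a, b{+}1) - g(a, b)] > 0,
\]
whence $g(a, b{+}1) > g(a, b)$. Combined with~(2), this yields the full monotonicity~(3).

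Converse: assume (1)--(3) and prove admissibility by induction on $n = |V|$. Let $(V, d)$ be the minimal generating ultrametric of a ground-truth input $G$, set $d_{\max} = \max_{u, v} d(u, v)$, and let $V_1, \ldots, V_k$ be the maximal subsets of intra-diameter strictly less than $d_{\max}$; every cross-cluster pair then has weight $f(d_{\max})$. A generating tree must place each $V_i$ entirely on one side of the root split and recursively generate each side; combined with~(1),(2) and the inductive hypothesis, all generating trees achieve a common cost. For strict suboptimality of non-generating trees, suppose the root split $(A, B)$ of some cluster tree $T$ cuts some $V_i$. A vertex-shift argument moves one vertex of $V_i$ across the root while using the clique-style freedom inside each $V_i$ (permitted at zero net cost by~(1),(2)) to cancel all effects outside the root; because all cross-$V_i$ edges carry the uniform weight $f(d_{\max})$, the root-cost change reduces to a clean comparison of $g$-values at shifted sizes whose sign is pinned down by~(3). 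Iterating yields a tree whose root respects the partition $\{V_i\}$, and applying the inductive hypothesis to each side completes the proof. The hardest part is this last step: the vertex shift simultaneously changes the root's cross-weight and both child sizes, so monotonicity cannot be invoked naively; only after exploiting the uniformity of $f(d_{\max})$ on cross-$V_i$ edges and the interior freedom within constant-weight sub-cliques of $V_i$ does the change reduce to a single $g$-comparison on which~(3) delivers the strict decrease.
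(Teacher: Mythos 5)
Your forward direction is correct, and your proof of condition~(3) is a genuine variant of the paper's. You first take $f(2)=f(3)$ (both $T$ and $T'$ generating) to extract the identity $a\,g(a,b{+}1)+b\,g(b,1)-(a{+}b)\,g(1,a{+}b)=-ab[g(a,b{+}1)-g(a,b)]$, then take $f$ strictly decreasing (only $T$ generating) and substitute. The paper reaches the same conclusion by simply setting $f(3)=0$: then the root cut of $T$ and the $N'$-cut of $T'$ both contribute nothing, the comparison collapses directly to $n_1n_2\,g(n_1{+}1,n_2)>n_1n_2\,g(n_1,n_2)$, and no auxiliary identity is needed. Your route is longer but avoids zero weights; both are valid.

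The converse direction has a genuine gap---precisely the step you yourself flag as ``the hardest part.'' Your plan is to shift one misplaced vertex $v$ across the root and argue a strict cost decrease, but nothing you have established pins down the sign of the change from a single shift. Three quantities move at once: (i)~the root cut weight changes by $-w(v,V_L)+w(v,V_R\setminus\{v\})$, a mix of heavy intra-$V_i$ edges and light cross-$V_i$ edges whose net sign is not controlled; (ii)~$g$ is now evaluated at $(|V_L|{+}1,|V_R|{-}1)$, which moves its two arguments in opposite directions so condition~(3) gives no one-sided bound; and (iii)~the optimal subtree costs on $G[V_L\cup\{v\}]$ and $G[V_R\setminus\{v\}]$ bear no simple relation to those on $G[V_L]$ and $G[V_R]$. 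The claimed cancellation via ``clique-style freedom inside $V_i$'' does not apply: the $V_i$ are arbitrary ground-truth subgraphs, not unit-weight cliques, so condition~(1) does not make the subtree terms drop out. Until the net change is reduced to a single monotone $g$-comparison, the induction does not close.

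The paper sidesteps the one-vertex-at-a-time difficulty by a different decomposition: set $A=V^*_L\cap V_L$, $B=V^*_L\cap V_R$, $C=V^*_R\cap V_L$, $D=V^*_R\cap V_R$ (where $(V^*_L,V^*_R)$ is the root cut of a generating tree $T^*$ and $(V_L,V_R)$ is that of the arbitrary $T$); introduce an intermediate tree $\tilde T$ with $T^*$'s root cut and second-level splits $(A,B)$ and $(C,D)$; and use the consequence of condition~(1) that $n_1n_2\,g(n_1,n_2)=\kappa(n_1{+}n_2)-\kappa(n_1)-\kappa(n_2)$. With this $\kappa$-identity, $\Costtree(T)-\Costtree(\tilde T)$ telescopes to $(x-z)\,ab\,[g(a{+}c,b{+}d)-g(a,b)]+(y-z)\,cd\,[g(a{+}c,b{+}d)-g(c,d)]$, which is nonnegative by~(3) and the ultrametric structure, with equality only when $T$ is generating. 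That telescoping is exactly the cancellation your sketch presumes but does not supply; rather than re-derive it one vertex at a time, adopt the intermediate tree and the $\kappa$-identity.
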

\begin{proof} We first prove the only if part and then the if part. \smallskip \\
	\noindent\underline{Only If Part}: Suppose that $\costnode$ is indeed an
	admissible cost function. We prove that all three conditions must be
	satisfied by $\costnode$. \smallskip \\
	\noindent \emph{1. All cliques have same cost.} We observe that a clique $G
	= (V, E, w)$ can be generated from an ultrametric. Indeed, let $X = V$ and
	let $d (u, v) = d(v, u) = 1$ for every $u, v\in X$ such that $u \neq v$ and
	$d(u, u) = 0$. Clearly, for $f : \reals^+ \rightarrow \reals^+$ that is
	non-increasing and satisfying $f(1) = 1$, $(V, d)$ is a minimal ultrametric
	generating $G$. 

	Let $T$ be any binary rooted tree with leaves labeled by $V$, \ie a cluster
	tree for graph $G$. For any internal node $N$ of $T$ define $W(N) = 1$ as
	the weight function. This satisfies the definition of generating tree
	(Defn.~\ref{defn:generating-tree}). Thus, every cluster tree $T$ for $G$ is
	generating and hence, by the definition of admissibility all of them must be
	optimal, \ie they all must have exactly the same cost. \smallskip \\
	\noindent \emph{2. $g(n_1, n_2) = g(n_2, n_1)$.} This part follows more or
	less directly from the previous part. Let $G$ be a clique on $n_1 + n_2$
	nodes. Let $T$ be any cluster tree for $G$, with subtrees $T_1$ and $T_2$
	rooted at the left and right child of the root respectively, such that $T_1$
	contains $n_1$ leaves and $T_2$ contains $n_2$ leaves. The number of edges,
	and hence the total weight of the edges,  crossing the cut induced by the
	root node of $T$ is $n_1 \cdot n_2$.  Let $\tilde{T}$ be a tree obtained by
	making $T_2$ be rooted at the left child of the root and $T_1$ at the right
	child. Clearly $\tilde{T}$ is also a cluster tree for $G$ and induces the
	same cut at the root node, hence using the property that all cliques have
	the same cost,  $\Costtree(T) = \Costtree(\tilde{T})$. But $\Costtree(T) =
	n_1 \cdot n_2 \cdot g(n_1, n_2) + \Costtree(T_1) + \Costtree(T_2)$ and
	$\Costtree(\tilde{T}) = n_1 \cdot n_2 \cdot g(n_2, n_1) + \Costtree(T_1) +
	\Costtree(T_2)$. Thence, $g(n_1, n_2) = g(n_2, n_1)$.  \smallskip \\
	\noindent \emph{3. $g(n_1 + 1, n_2) > g(n_1, n_2)$.} Consider a graph on
	$n_1 + n_2 + 1$ nodes generated from an ultrametric as follows. Let $V_1 =
	\{ v_1, \ldots, v_{n_1} \}$, $V_2 = \{ v^\prime_1, \ldots, v^\prime_{n_2}
	\}$ and consider the ultrametric $(V_1 \cup V_2 \cup \{v^*\}, d)$ defined by
	$d(x, y) = 1$ if $x \neq y$ and $x, y \in V_1$ or $x, y \in V_2$, $d(x, y) =
	2$ if $x \neq y$ and $x \in V_1, y \in V_2$ or $x \in V_2, y \in V_1$,
	$d(v^*, x) = d(x, v^*) = 3$ for $x \in V_1 \cup V_2$, and $d(u, u) = 0$ for
	$u \in V_1 \cup V_2 \cup \{v^* \}$. It can be checked easily by enumeration
	that this is indeed an ultrametric. Furthermore, if $f : \reals^+
	\rightarrow \reals^+$ is non-increasing and satisfies $f(1) = 2$, $f(2) = 1$
	and $f(3) = 0$, \ie $w(\{u, v\}) = 2$ if $u$ and $v$ are both either in
	$V_1$ or $V_2$, $w(\{u, v\}) = 1$ if $u \in V_1$ and $v \in V_2$ or the
	other way around, and $w(\{v^*, u \}) = 0$ for $u \in V_1 \cup V_2$, then
	$(V_1 \cup V_2, \{ v^* \}, d)$ is a minimal ultrametric generating $G$.

	Now consider two possible cluster trees defined as follows: Let $T_1$ be an
	arbitrary tree on nodes $V_1$, $T_2$ and arbitrary tree on nodes $V_2$. $T$
	is obtained by first joining $T_1$ and $T_2$ using internal node $N$ and
	making this the left subtree of the root node $\rho$ and the right subtree
	of the root node is just the singleton node $v^*$.  $T^\prime$ is obtained
	by first creating a tree by joining $T_1$ and the singleton node $v^*$ using
	internal node $N^\prime$, this is the left subtree of the root node
	$\rho^\prime$ and $T_2$ is the right subtree of the root node. (See
	Figures~\ref{fig:treeT} and~\ref{fig:treeTp}.)

	\begin{figure*}[t!]
		\centering
		\begin{subfigure}[t]{0.45 \textwidth}
			\centering
			\begin{tikzpicture}
				\node[fill=gray!20, circle, minimum width=0.9cm] (root) at (0, 0) {$\rho$};
				\node[fill=gray!20, circle, minimum width=0.9cm] (lchild) at (-1, -1) {$N$};
				\node[draw, circle, minimum width=0.9cm] (leaf) at (1.5, -1.5) {$v^*$};
				\node (T1) at (-1.7, -3.2) {$T_1$};
				\node (T2) at (-0.3, -3.2) {$T_2$};
				\node[fill=gray!20, circle, minimum width=3pt] (T1head) at (-1.7, -2) {};
				\node[fill=gray!20, circle, minimum width=3pt] (T2head) at (-0.3, -2) {};
				\draw (-1.7, -2.1) -- (-2.2, -3.5) -- (-1.2, -3.5) -- (-1.7, -2.1);
				\draw (-0.3, -2.1) -- (-0.8, -3.5) -- (0.2, -3.5) -- (-0.3, -2.1);
				\draw (lchild) -- (T1head);
				\draw (lchild) -- (T2head);
				\draw (root) -- (leaf);
				\draw (root) -- (lchild);
			\end{tikzpicture} \smallskip

			\caption{Tree $T$ \label{fig:treeT}}
		\end{subfigure}
		\begin{subfigure}[t]{0.45 \textwidth}
			\centering
			\begin{tikzpicture}
				\node[fill=gray!20, circle, minimum width=0.9cm] (root) at (0, 0) {$\rho^\prime$};
				\node[fill=gray!20, circle, minimum width=0.9cm] (lchild) at (-1, -1) {$N^\prime$};
				\node[fill=gray!20, circle, minimum width=3pt] (T1head) at (-1.7, -2) {};
				\draw (-1.7, -2.1) -- (-2.2, -3.5) -- (-1.2, -3.5) -- (-1.7, -2.1);
				\node (T1) at (-1.7, -3.2) {$T_1$};
				\node[draw, circle, minimum width=0.9cm] (leaf) at (-0.3, -2.5) {$v^*$};

				\node[fill=gray!20, circle, minimum width=3pt] (T2head) at (1, -1) {};
				\draw (1, -1.1) -- (0.5, -2.5) -- (1.5, -2.5) -- (1, -1.1);

				\draw (lchild) -- (T1head);
				\draw (lchild) -- (leaf);
				\draw (root) -- (T2head);
				\draw (root) -- (lchild);
				\node (T2) at (1, -2.2) {$T_2$};
			\end{tikzpicture} \smallskip 

			\caption{Tree $T^\prime$ \label{fig:treeTp}}
		\end{subfigure}
		\caption{\label{fig:trees}Trees $T$ and $T^\prime$ used to show monotonicity of $g$.}

	\end{figure*}
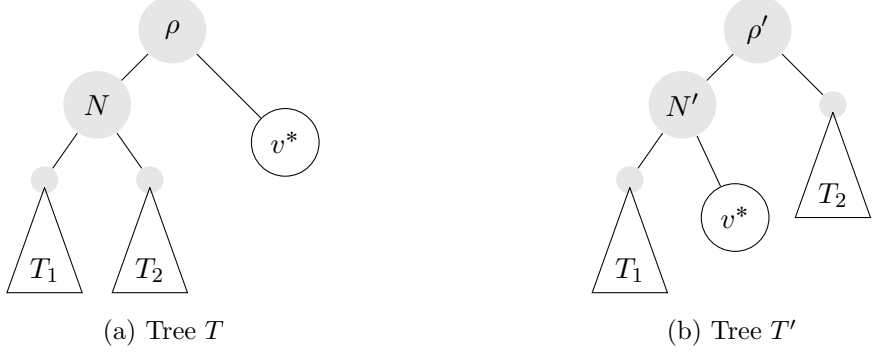

	Now it can be checked that $T$ is generating by defining the following
	weight function. For every internal node $M$ of $T_1$, let $W(M) = 1$,
	similarly for every internal node $M$ of $T_2$, let $W(M) = 1$, define $W(N)
	= 2$ and $W(\rho) = 3$. Now, we claim that $T^\prime$ cannot be a generating
	tree. This follows from the fact that for a node $u \in V_1, v \in V_2$, the
	root node $\rho^\prime =\lca_{T^\prime}(u, v)$, but it is also the case that
	$\rho^\prime =  \lca_{T^\prime}(v^*, v)$. Thus, it cannot possibly be the
	case that $W(\rho) = w(\{u, v\})$ and $W(\rho) = w(\{v^*, v\})$  as $w(\{u,
	v\}) \neq w(\{v^*, v\})$.  By definition of admissibility, it follows that
	$\Costtree(T) < \Costtree(T^\prime)$, but $\Costtree(T) = \Costtree(T_1) +
	\Costtree(T_2) + n_1 \cdot n_2 \cdot g(n_1, n_2)$. The last term arises from
	the cut at node $N$; the root makes no contribution as the cut at the root
	node $\rho$ has weight $0$. On the other hand $\Costtree(T^\prime) =
	\Costtree(T_1) + \Costtree(T_2) + n_1 \cdot n_2 \cdot g(n_1 + 1, n_2)$.
	There is no cost at the node $N^\prime$, since the cut has size $0$;
	however, at the root node the cost is now $n_1 \cdot n_2 \cdot g(n_1 + 1,
	n_2)$ as the left subtree at the root contains $n_1 + 1$ nodes. It follows
	that $g(n_1 + 1, n_2) > g(n_1 ,n_2)$. \smallskip 

	\noindent\underline{If Part}: 
	For the other direction, we first use the following observation. By
	condition 2 in the statement of the theorem, every clique on $n$ nodes
	has the same cost irrespective of the tree used for hierarchical clustering;
	let $\kappa(n)$ denote said cost. Let $n_1, n_2 \geq 1$, then we have, 
	\begin{align}
		n_1 \cdot n_2 \cdot g(n_1, n_2) &= \kappa(n_1 + n_2) - \kappa(n_1) - \kappa(n_2) \label{eqn:defg}
	\end{align}

	We will complete the proof by induction on $|V|$. The minimum number of
	nodes required to have a cluster tree with at least one internal node is
	$2$. Suppose $|V| =  2$, then there is a unique (up to interchanging left
	and right children) cluster tree; this tree is also generating and hence by
	definition any cost function is admissible. Thus, the base case is covered
	rather easily.

	Now, consider a graph $G = (V, E, w)$ with $|V| = n > 2$. Let $T^*$ be a
	tree that is generating. Suppose that $T$ is any other tree. Let $\rho^*$
	and $\rho$ be the root nodes of the trees respectively. Let $V^*_L$ and
	$V^*_R$ be the nodes on the left subtree and right subtree of $\rho^*$;
	similarly $V_L$ and $V_R$ in the case of $\rho$. Let $A = V^*_L \cap V_L$,
	$B = V^*_L \cap V_R$, $C = V^*_R \cap V_L$, $D = V^*_R \cap V_R$. Let $a$,
	$b$, $c$ and $d$ denote the sizes of $A$, $B$, $C$ and $D$ respectively.

	We will consider the case when all of $a, b, c, d > 0$; the proof is similar
	and simpler in case some of them are $0$.  Let $\tilde{T}$ be a tree with
	root $\tilde{\rho}$ that has the following structure: Both children of the
	root are internal nodes, all of $A$ appears as leaves in the left subtree of
	the left child of the root, $B$ as leaves in the right subtree of the left
	child of the root, $C$ as leaves in the left subtree of the right child of
	the root and $D$ as leaves in the right subtree of the right child of the
	root. We assume that all four subtrees for the sets $A$, $B$, $C$, $D$ are
	generating and hence by induction optimal.  We claim that the cost of
	$\tilde{T}$ is at least as much as the cost of $T^*$. To see this note that
	$V^*_L = A \cup B$. Thus, the left subtree of $\rho^*$ is optimal for the
	set $V^*_L$ (by induction), whereas that of $\tilde{\rho}$ may or may not
	be. Similarly for all the nodes in $V^*_R$. The only other thing left to
	account for is the cost at the root. But since $\rho^*$ and $\tilde{\rho}$
	induce exactly the same cut on $V$, the cost at the root is the same. Thus,
	$\Costtree(\tilde{T}) \geq \Costtree(T^*)$. Furthermore, equality holds if
	and only if $\tilde{T}$ is also generating for $G$.

	Let $W^*$ denote the weight function for the generating tree $T^*$ such that
	for all $u, v \in V$, $W^*(\lca_{T^*}(u, v)) = w(\{u, v\})$. Let $\rho^*_L$ and
	$\rho^*_R$ denote the left and right children of the root $\rho^*$ of $T^*$.
	For all $u_a \in A, u_b \in B$, $w(\{u_a, u_b\}) \geq W^*(\rho^*_L)$. Let
	\[ x = \frac{1}{a b} \sum_{u_a \in A, u_b \in B} w(\{u_a, u_b \}) 
	\]
	denote the average weight of the edges going between $A$ and $B$; it follows that $x \geq
	W^*(\rho^*_L)$. Similarly for all $u_c \in C, u_d \in D$, $w(\{u_c, u_d\}) \geq
	W^*(\rho^*_R)$. Let 
	\[ y = \frac{1}{c d} \sum_{u_c \in C, u_d \in D} w(\{u_c, u_d \}) 
	\]
	denote the average weight of the edges going between $C$ and $D$; it follows
	that $y \geq W^*(\rho^*_R)$. Finally for every $u \in A \cup B, u^\prime \in
	C \cup D$, $w(\{u, u^\prime\}) = W^*(\rho^*)$; denote this common value by
	$z$. By the definition of generating tree, we know that $x \geq z$ and $y
	\geq z$.

	Now consider the tree $T$. Let $T_L$ and $T_R$ denote the left and right
	subtrees of $\rho$. By induction, it must be that $T_L$ splits $A$ and $C$
	as the first cut (or at least that's one possible tree, if multiple cuts
	exist), similarly $T_R$ first cuts $B$ and $D$. Both, $T$ and $\tilde{T}$
	have subtrees containing only nodes from $A$, $B$, $C$ and $D$. The costs
	for these subtrees are identical in both cases (by induction). Thus, we have
	\begin{align*}
		\Costtree(T) - \Costtree(\tilde{T}) &= z ac \cdot g(a, c) + z bd  \cdot g(b, d) +  (x ab + y cd + z ( ad + bc))\cdot g(a + c, b + d) \\
		&~~~~ - x ab \cdot g(a, b) + y \cdot cd g (c, d) - z (a + b) (c + d) \cdot g(a + b, c + d) \\
		&= (x - z) ab (g(a + c, b +d ) - g(a, b)) + (y - z) cd (g(a + c, b + d) - g(c, d)) \\
		&~~~~ + z ((a + c) (b + d) \cdot g(a + c, b + d) + ac \cdot g(a, c) + bd \cdot g(b, d)) \\
		&~~~~ - z((a + b)(c + d) \cdot g(a + b, c + d) + ab \cdot g(a, b) + cd \cdot g(c, d))
		\intertext{Using~\eqref{eqn:defg}, we get that the last two expressions above
		both evaluate to $z (\kappa(a + b + c + d) - \kappa(a) - \kappa(b) -
		\kappa(c) - \kappa(d))$, but have opposite signs. Thus, we get}
		\Costtree(T) - \Costtree(\tilde{T}) &= (x - z) ab (g(a + c, b +d ) - g(a,
		b)) + (y - z) cd (g(a + c, b + d) - g(c, d)) 
	\end{align*}
	It is clear that the above expression is always non-negative and is $0$ if
	and only if $x = z$ and $y = z$. If it is the latter case and it is also the
	case that $\Costtree(\tilde{T}) = \Costtree(T^*)$, then it must actually be
	the case that $T$ is a generating tree.
\end{proof}

\subsubsection{Characterizing $g$ that satisfy conditions of Theorem~\ref{thm:cost-func}}

Theorem~\ref{thm:cost-func} give necessary and sufficient conditions on
$g$ for cost functions of the form~\eqref{eqn:costfunc} be admissible. However,
it leaves open the question of the existence of functions satisfying the
criteria and also characterizing the functions $g$ themselves. The fact that
such functions exist already follows from the work of~\citet{Das:2016}, who
showed that if $g(n_1, n_2) = n_1 + n_2$, then all cliques have the same cost.
Clearly, $g$ is monotone and symmetric and thus satisfies the condition of
Theorem~\ref{thm:cost-func}.

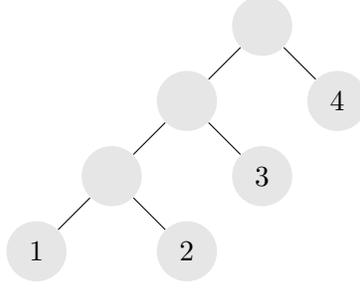
\begin{figure}
	\centering
	\begin{tikzpicture}
		\node[fill=gray!20, circle, minimum width=0.8cm] (root) at (0, 0) {};
		\node[fill=gray!20, circle, minimum width=0.8cm] (l1) at (-1, -1) {};
		\node[fill=gray!20, circle, minimum width=0.8cm] (l2) at (-2, -2) {};
		\node[fill=gray!20, circle, minimum width=0.8cm] (v4) at (1, -1) {$4$};
		\node[fill=gray!20, circle, minimum width=0.8cm] (v3) at (0, -2) {$3$};
		\node[fill=gray!20, circle, minimum width=0.8cm] (v2) at (-1, -3) {$2$};
		\node[fill=gray!20, circle, minimum width=0.8cm] (v1) at (-3, -3) {$1$};
		\draw (root) -- (l1);
		\draw (root) -- (v4);
		\draw (l1) -- (l2);
		\draw (l1) -- (v3);
		\draw (l2) -- (v2);
		\draw (l2) -- (v1);
	\end{tikzpicture}
	\caption{\label{fig:caterpillar} The caterpillar cluster tree for a clique with $4$ nodes.}
\end{figure}

In order to give a more complete characterization, we define $g$ as follows:
Suppose $g(\cdot, \cdot)$ is symmetric, we define $g(n, 1)$ for all $n \geq 1$
so that $g(n, 1)/(n + 1)$ is non-decreasing.\footnote{The function proposed
by~\citet{Das:2016} is $g(n, 1) = n + 1$, so this ratio is always $1$.}  We
consider a particular cluster tree for a clique that is defined using a
caterpillar graph, \ie a cluster tree where the right child of any internal
node is a leaf labeled by one of the nodes of $G$ and the left child is
another internal node, except at the very bottom. Figure~\ref{fig:caterpillar}
shows a caterpillar cluster tree for a clique on $4$ nodes.  The cost of the
clique on $n$ nodes, say $\kappa(n)$, using this cluster tree is given by 
\[ 
	\kappa(n) = \sum_{i = 0}^{n - 1} i \cdot g( i, 1) 
\]
Now, we enforce the condition that all cliques have the same cost by defining
$g(n_1, n_2)$ for $n_1, n_2 > 1$ suitably, in particular, 
\begin{align}\label{eq:honig}
	g(n_1, n_2) = \frac{\kappa(n_1 + n_2) - \kappa(n_1) - \kappa(n_2)}{n_1 \cdot n_2}
\end{align}
Thus it only remains to be shown that $g$ is strictly increasing. We show that
for $n_2 \leq n_1$, $g(n_1 + 1, n_2) > g(n_1, n_2)$. In order to show this it
suffices to show that,
\begin{align*}
	n_1 (\kappa(n_1 + n_2 + 1) - \kappa(n_1 + 1) - \kappa(n_2)) - (n_1 + 1) (\kappa(n_1 + n_2) - \kappa(n_1) - \kappa(n_2)) &> 0
\end{align*}

Thus, consider
\begin{align*}
	&n_1 (\kappa(n_1 + n_2 + 1) - \kappa(n_1 + 1) - \kappa(n_2)) - (n_1 + 1) (\kappa(n_1 + n_2) - \kappa(n_1) - \kappa(n_2))  \\
	&~~=n_1(\kappa(n_1 + n_2 + 1) - \kappa(n_1 + n_2) -  \kappa(1) - \kappa(n_1 + 1) + \kappa(n_1) + \kappa(1)) - (\kappa(n_1 + n_2) - \kappa(n_1)  - \kappa(n_2)) \\
	&~~=n_1  (n_1 + n_2) g(n_1 + n_2, 1) - n_1^2 g(n_1 , 1) - (\kappa(n_1 + n_2) - \kappa(n_1) - \kappa(n_2)) \\
	&~~\geq n_1 (n_1 + n_2) g(n_1 + n_2, 1) - n_1^2 g(n_1, 1) - \sum_{i =
	n_1}^{n_1 + n_2 - 1} i \cdot g(i, 1) \\
	&~~~\geq \frac{g(n_1 + n_2, 1)}{n_1 + n_2 +1 } \cdot \left(n_1 (n_1 + n_2) (n_1 + n_2 + 1) - n_1^2 (n_1 + 1) - \sum_{i = n_1}^{n_1 + n_2 -1} i (i + 1)\right) > 0\\
\end{align*}

Above we used the fact that $g(n, 1)/(n + 1)$ is non-decreasing 
in $n$ and some
elementary calculations. This shows that the objective function proposed
by~\citet{Das:2016} is by no means unique. Only in the last step, 
do we get an
inequality where we use the condition that $g(n, 1)/(n + 1)$ 
is increasing.
Whether this requirement can be relaxed further is 
also an interesting direction.

\subsubsection{Characterizing Objective Functions for Dissimilarity Graphs}

When the weights of the edges represent dissimilarities instead of
similarities, one can consider objective functions of the same form
as~\eqref{eqn:costfunc}. 
As mentioned in Remark~\ref{rem:costfun:diss}, 
the difference in this case is that the goal is to
maximize the objective function and hence the definition of admissibility now
requires that generating trees have a value of the objective that is strictly larger than any
tree that is not generating.

The characterization of admissible objective functions as given in
Theorem~\ref{thm:cost-func} for the similarity case continues to hold in the
case of dissimilarities. The proof follows in the same manner by appropriately
switching the direction of the inequalities when required. 



\section{Similarity-Based Inputs: Approximation Algorithms}
\label{S:sim:worstcase}
In this section, we analyze 
the recursive $\phi$-sparsest-cut
algorithm (see Algorithm~\ref{alg:phiSparsestCut}) 
that was described previously in~\cite{Das:2016}.
For clarity, we work with the cost function introduced 
by~\cite{Das:2016}:
The goal is to find a tree $T$ minimizing $\cost(T) = \sum_{N \in T} \cost(N)$
where for each node $N$ of $T$ with children $N_1$, $N_2$, 
$\cost(N) = w(V(N_1),V(N_2)) \cdot V(N)$.
We show that the $\phi$-sparsest-cut algorithm achieves
a $6.75\phi$-approximation.
(\citet{CC16} also proved an $O(\phi)$ approximation for Dasgupta's function.)
Our proof also yields an approximation guarantee not just for Dasgupta's cost function 
but more generally for any admissible cost function,
but the approximation ratio depends 
on the cost function. 

%
The $\phi$-sparsest-cut algorithm (Algorithm~\ref{alg:phiSparsestCut})
 constructs a binary tree 
 top-down  by 
recursively finding cuts using a $\phi$-approximate sparsest cut algorithm, where
the \emph{sparsest-cut} problem 
asks for a set $A$ minimizing the
\emph{sparsity} $w(A,V\setminus A)/(|A||V\setminus A|)$ of the cut $(A, V\setminus A)$.
\begin{algorithm}
  \caption{Recursive $\phi$-Sparsest-Cut Algorithm for Hierarchical Clustering}\label{alg:phiSparsestCut}
  \begin{algorithmic}[1]
    \State \textbf{Input}: An edge weighted graph $G = (V,E,w)$.
{   \State    $\{A,V \setminus A\} \gets$ cut with sparsity $\le \phi\cdot  \min\limits_{S \subset V}  w(S,V\setminus S)/(|S||V \setminus S|)$
    \State Recurse on  $G[A]$ and on $G[V \setminus A]$ to obtain trees $T_A$ and $T_{V\setminus A}$ \\
    \Return the tree whose root has two children, $T_A$ and $T_{V\setminus A}$.
    }
  \end{algorithmic}
\end{algorithm}

\begin{theorem}\footnote{For Dasgupta's function, this was already proved in \citet{CC16}  with a different constant. The present, independent proof, uses a different method.}
  \label{thm:SCapprox}
  For any graph $G=(V,E)$, and weight function $w : E \rightarrow \R_+ $,
  the $\phi$-sparsest-cut algorithm (Algorithm~\ref{alg:phiSparsestCut})
  outputs a solution of cost at 
  most $\frac{27}{4} \phi \opt$.  
\end{theorem}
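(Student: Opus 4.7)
The plan is a charging argument that relates each cut made by the algorithm to cuts in an optimal tree $T^*$. Fix an optimal tree $T^*$ and decompose $\cost(T)=\sum_N w(A_N,B_N)\cdot|V(N)|$ over internal nodes $N$ of $T$, where $(A_N,B_N)$ is the partition of $V(N)$ induced at $N$. I will bound each term by a multiple of a contribution in $\cost(T^*)$ and then control the total multiplicity of the charging.

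For each internal node $N$ with $V(N)=S$ and $|S|=s$, I would first extract a ``balanced witness cut'' of $S$ from $T^*$. Let $T^*|_S$ denote the binary tree obtained from $T^*$ by restricting to leaves in $S$ and suppressing degree-two internal nodes. A classical centroid argument---descend from the root of $T^*|_S$ into the larger child subtree as long as that child contains strictly more than $2s/3$ leaves of $S$, and stop otherwise---produces an internal node $C$ of $T^*|_S$ whose larger child's leaf set $A^*_S\subseteq S$ satisfies $|A^*_S|\in(s/3,\,2s/3]$, so that $B^*_S:=S\setminus A^*_S$ also has $|B^*_S|\ge s/3$. The $\phi$-sparsest-cut guarantee at $N$ applied with the candidate cut $(A^*_S,B^*_S)$ then yields
\[
w(A_N,B_N)\;\le\;\phi\cdot w(A^*_S,B^*_S)\cdot\frac{|A_N||B_N|}{|A^*_S||B^*_S|}\;\le\;\frac{9\phi}{8}\,w(A^*_S,B^*_S),
\]
since $|A_N||B_N|\le s^2/4$ and $|A^*_S||B^*_S|\ge 2s^2/9$. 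Multiplying by $|V(N)|=s$ gives
\[
w(A_N,B_N)\cdot|V(N)|\;\le\;\frac{9\phi}{8}\,w(A^*_S,B^*_S)\cdot s.
\]

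Next I would relate $w(A^*_S,B^*_S)\cdot s$ to contributions in $\cost(T^*)$. Because $A^*_S$ lies inside the subtree of $T^*$ rooted at the node corresponding to the larger child of $C$, every edge across $(A^*_S,B^*_S)$ is an edge leaving that subtree, and so it is charged by the cut of some ancestor in $T^*$ between this node and the LCA in $T^*$ of all leaves of $S$. Since $|V(M^*)|\ge s$ for every such ancestor $M^*$, the product $w(A^*_S,B^*_S)\cdot s$ can be bounded by contributions $w(V(M_1^*),V(M_2^*))\cdot|V(M^*)|$ of a short chain of internal nodes of $T^*$.

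The crux of the proof---and the main obstacle---is the bookkeeping for the total charge: distinct algorithmic clusters $V(N)$ may route charge through the same internal node of $T^*$. Summing the per-node bounds yields $\cost(T)\le(9\phi/8)\cdot(\text{total charge into }T^*)$, and the goal is to show that the charge directed to any internal node $M^*$ of $T^*$ is at most $6$ times $M^*$'s own contribution, giving $\cost(T)\le\tfrac{9}{8}\cdot 6\cdot\phi\cdot\cost(T^*)=\tfrac{27}{4}\phi\cdot\cost(T^*)$. I would argue this multiplicity bound by exploiting the laminarity of $\{V(N)\}_N$ in $T$ together with the balanced nature of the centroid witness: once the witness for $V(N)$ is realized at $M^*$, the size $|V(N)|$ is forced to be within a constant factor of $|V(M^*)|$, and the laminar family imposes that only constantly many $V(N)$'s can each be comparable to $|V(M^*)|$ while charging through $M^*$. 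This accounting is the delicate step and where the precise constant $27/4$ is decided.
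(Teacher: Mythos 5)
Your per-node bound is correct: combining the $\phi$-sparsest-cut guarantee against the centroid witness cut $(A^*_S,B^*_S)$ of $T^*|_S$ with $|A_N||B_N|\le s^2/4$ and $|A^*_S||B^*_S|\ge 2s^2/9$ indeed gives $w(A_N,B_N)\le\frac{9\phi}{8}w(A^*_S,B^*_S)$. The gap appears in the next step and is genuine. You claim that every ancestor $M^*$ of the larger-child node on the path up to $\lca_{T^*}(S)$ satisfies $|V(M^*)|\ge s$. This is false: such an $M^*$ need not contain all of $S$ (it must contain $A^*_S$, but $B^*_S$ may be spread over several side-branches above $M^*$), so $|V(M^*)|$ can be as small as roughly $|A^*_S|+1$, i.e., about $s/3$. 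Hence $w(A^*_S,B^*_S)\cdot s$ is \emph{not} dominated term-by-term by $w(\text{cut at }M^*)\cdot|V(M^*)|$.

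The multiplicity step is also unsubstantiated and, as stated, looks like it cannot give a constant. Rewriting the total charge as $\sum_N s_N\,w(A^*_{S_N},B^*_{S_N})=\sum_{e}w(e)\sum_{N:\,e\in\text{witness cut}(N)}s_N$, you need the inner sum to be $O(|V(\lca_{T^*}(e))|)$. The nodes $N$ for which $e$ can possibly appear in the witness cut are ancestors of $\lca_T(e)$ in the \emph{algorithm's} tree $T$, and the centroid argument only forces $s_N<3|V(\lca_{T^*}(e))|$ for each such $N$; since the $s_N$ strictly increase along the root path, their sum can be on the order of $|V(\lca_{T^*}(e))|^2$ rather than $O(|V(\lca_{T^*}(e))|)$. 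Laminarity of $\{V(N)\}$ alone does not rescue this: a single edge can lie across the witness cut at many consecutive levels of the recursion. Matching $27/4=(9/8)\cdot 6$ numerically does not by itself mean the bookkeeping closes.

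The paper avoids both problems by charging \emph{the edges actually separated by the algorithm's cut} at each node of $T$, not the edges of the witness cut. Because each edge is separated by the algorithm's cut exactly once along any root-to-leaf path of $T$, the ``heavy'' $n\,w(e)$ charge is incurred at most once per edge; edges on the ``diagonal'' parts $(A,C)$ and $(B,D)$ receive the smaller charges $|B\cup D|w(e)$ and $|A\cup C|w(e)$, and the full charge is controlled by an induction (Lemma~\ref{lemma:LCA}) that yields the per-edge bound $(9/2)\phi\min\big((3/2)|V(\lca_{T^*}(v_1,v_2))|,\,n\big)w(e)$. Crucially the paper does \emph{not} replace $|A_N||B_N|$ by $s^2/4$; keeping the actual sizes is what lets each class of edges pick up the correct, smaller multiplier. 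If you want to pursue your route, you would need a sharper structural statement about which ancestors $N$ in $T$ can have $e$ in their witness cut and a way to amortize the resulting telescoping sum of $s_N$; as written the argument does not establish the claimed bound.
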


\begin{proof}
Let $G = (V,E)$ be the input graph and $n$ denote the total number of vertices of $G$.
Let $T$ denote the tree output by the algorithm and $T^*$ be any arbitrary tree.
We will prove that $\cost(T) \le \frac{27}{4} \phi \cost(T^*)$. 
\footnote{The following paragraph bears similarities with the first part of 
the analysis of~\cite[Lemma 11]{Das:2016} but we obtain a more fine-grained
analysis by introducing a charging scheme.}

Recall that for an arbitrary tree $T_0$ and node $N$ of $T_0$, 
the vertices corresponding to the leaves of the subtree rooted
at $N$ is denoted by $V(N)$. 
Consider the node $N_0$ of $T^*$ that is the first node reached by the walk from the root that always goes to 
the child tree with the higher number of leaves, 
stopping when the subtree of $T^*$ rooted at $N_0$ contains fewer than $2n/3$ leaves.
The \emph{balanced cut} (BC) of $T^*$ 
is the cut $(V(N_0), V - V(N_0))$.
For a given node $N$ with children $N_1,N_2$, we say that the cut induced by $N$ is
the sum of the weights of the edges between that have one extremity in $V(N_1)$ 
and the other in $V(N_2)$.


Let $(A\cup C, B\cup D)$ 
be the cut induced by the root node $u$ of $T$, where $A,B,C,D$ are such that
$(A\cup B,C\cup D)$ 
is the {balanced cut}   of $T^*$. 
Since $(A\cup C,B\cup D)$  
is a $\phi$-approximate sparsest cut:
$$
\frac{w(A\cup C,B\cup D)}{|A\cup C|\cdot |B\cup D|} \le \phi 
\frac{w(A\cup B,C\cup D)}{|A\cup B|\cdot |C\cup D|}.
$$
By definition of $N_0$, $A\cup B$ and $C\cup D$ both have size in $[n/3,2n/3]$, 
so the product of their sizes is at least $(n/3)(2n/3)=2n^2/9$;  
developing $w(A\cup B,C\cup D)$ into four terms, we obtain
  \begin{eqnarray*}
  w(A\cup C,B\cup D)
&  \leq &\phi  \frac{9}{2n^2}   {|A\cup C| |B\cup D|} (w(A,C)+w(A,D)+w(B,C)+w(B,D) )\\
& \leq & \phi \frac{9}{2} [ \frac{|B\cup D|}{n}w(A,C)+w(A,D)+w(B,C) +  \frac{|A\cup C|}{n}w(B,D)  ],
\end{eqnarray*}
and so the cost induced by node $u$ of $T^*$ satisfies
  $$ 
  n \cdot w (A\cup C,B\cup D) \le \frac{9}{2} \phi |B\cup D| w(A,C) + \frac{9}{2} \phi |A\cup C| w (B,D) +\frac{9}{2} \phi n (w(A,D) + w(B,C)).
  $$
  To account for the cost induced by $u$, we thus assign a charge of $({9}/{2}) \phi |B\cup D| w(e)$ to each edge $e$ of $(A,C)$, 
  a charge of $({9}/{2}) \phi |A\cup C|w(e)$ to each edge $e$ of $(B,D)$, and a charge of $({9}/{2}) \phi n w(e)$ to each edge $e$ of $(A,D)$ or $(B,C)$.
  
  When we do this for every node $u$ of $T$, how much does each edge get charged? 
  \begin{lemma}\label{lemma:LCA}
  Let $G=(V,E)$ be a graph on $n$ nodes. We consider the above charging scheme for $T$ and $T^*$.
  Then, an edge $(v_1,v_2)\in E$ 
  gets charged at most $(9/2)\phi  \min ((3/2) |V(\lca_{T^*}(v_1,v_2))| ,n) w(e)$ 
  overall, where $\lca_{T^*}(v_1,v_2)$ denotes the lowest common ancestor of $v_1$ and $v_2$ in $T^*$. 
  \end{lemma}

  We temporarily defer the proof and first see how Lemma~\ref{lemma:LCA} implies the theorem.   
  Observe (as in~\cite{Das:2016}) 
  that   
  $\cost(T^*)=\sum_{\{u,v\}\in E} |V(\lca_{T^*}(u,v))| w(u,v)$.
  Thanks to Lemma~\ref{lemma:LCA}, when we sum charges assigned because 
  of every node $N$ of $T$, overall we obtain
  $$\cost(T)\leq  \frac{9}{2}\phi \sum_{\{v_1,v_2\}\in E} \frac{3}{2}|V(\lca_{T^*}(v_1,v_2))| 
  w(v_1,v_2)  =  \frac{27}{4}\phi  \cost(T^*).
  $$
\end{proof}

  \begin{proof}[Proof of Lemma~\ref{lemma:LCA}]

  The lemma is proved by induction on the number of nodes of the graph. 
  (The base case is obvious.) For the inductive step, consider the cut $(A\cup C,B\cup D)$ induced by the root node $u$ of $T$. 
  \begin{itemize}
  \item 
 Consider the edges that cross the cut. First, observe that edges of $(A,B)$ or of $(C,D)$ never get charged at all. Second, an edge $e=\{v_1,v_2\}$ of $(A,D)$ or of $(B,C)$ gets charged $({9}/{2}) \phi n w(e)$ when considering the cost induced by node $u$, and does not get charged when considering any other node of $T$. In $T^*$, edge $e$
 is separated by the cut $(A\cup B,C\cup D)$ induced by $N_0$, so the least common ancestor of $v_1$ and $v_2$ is the 
 parent node of $N_0$ (or above), and by definition of $N_0$ we have $|V(\lca_{T^*}(v_1,v_2))|\geq 2n/3$, hence the lemma holds for $e$.
    \item 
    An edge $e=\{ v_1,v_2\}$ of $G[A]\cup G[C]$ does not get charged when considering the cut induced by node $u$. Apply Lemma~\ref{lemma:LCA} to $G[A\cup C]$ for the tree $T^*_{A \cup C}$ defined as the subtree of $T^*$ induced by the vertices of $A \cup C$\footnote{note that $T^*_{A \cup C}$ is not necessarily the optimal tree for $G[A\cup C]$, which is why the lemma was stated in terms of every tree $T^*$, not just on the optimal tree.}. By induction, the overall charge to $e$ due to  the recursive calls for $G[A\cup C]$  is at most $(9/2)\phi \min((3/2) |V(\lca_{T^*_{A \cup C}}(v_1,v_2))| ,|A\cup C|) w(e).$ By definition of $T^*_{A \cup C}$, we have $|V(\lca_{T^*_{A \cup C}}(v_1,v_2))|\leq |V(\lca_{T^*}(v_1,v_2))|$, and $|A\cup C|\leq n$, so the lemma holds for $e$.
  \item 
  An edge $\{v_1,v_2\}$ of $(A,C)$ gets a charge of  $(9/2)\phi |B\cup D|  w(e)$ plus the total charge to $e$ coming from the recursive calls for $G[A\cup C]$ and the tree $T^*_{A \cup C}$. By induction  the latter is at most $$(9/2)\phi  \min((3/2) |V(\lca_{T^*_{A \cup C}}(v_1,v_2))| ,|A\cup C|) w(e)\leq (9/2)\phi  |A\cup C| w(e).$$ Overall the charge to $e$ is at most $(9/2)\phi n w(e)$. Since the cut induced by node $u_0$ of $T^*$ separates $v_1$ from $v_2$, we have $|V(\lca_{T^*}(v_1,v_2))|\geq 2n/3$, hence the lemma holds for $e$.
     For edges of $(B,D)$ or of $G[B]\cup G[D]$, a symmetrical argument applies.
    \end{itemize}
\end{proof}

\begin{remark}
  The recursive $\phi$-sparsest-cut algorithm
  achieves an $O(f_n \phi)$-approximation for any admissible
  cost function $f$, where 
  $f_n = \max_n f(n)/f(\lceil n/3 \rceil)$.
  Indeed,
  adapting the definition of the balanced cut as in~\cite{Das:2016}
  and rescaling the charge by a factor of $f_n$ 
  imply the result.
\end{remark}

 We complete our study of classical algorithms for hierarchical 
  clustering by showing that the standard agglomerative heuristics
  can perform poorly 
  (Theorems~\ref{T:sim:hard:sandc},~\ref{T:sim:hard:avg}).
  Thus, the sparsest-cut-based approach seems to be more reliable 
  in the worst-case.
  To understand better the success of the agglomerative
  heuristics, we restrict our attention
  to \structinputs (Section~\ref{S:perfectdata}), 
  and random graphs (Section~\ref{S:randominputs}),
  and show that in these contexts 
  these algorithms are efficient.

\section{Admissible Objective Functions and Algorithms for Random Inputs}
\label{S:randominputs}
\ifdraft
\newcommand{\hsbm}{\operatorname{HSBM}}

In this section, we initiate a \emph{beyond-worst-case} analysis of the
hierarchical clustering problem (see also Section~\ref{S:sub:robust}). We
study admissible objective functions in the context of random graphs that
have a natural hierarchical structure; for this purpose, we consider a
suitable generalization of the stochastic block model to hierarchical
clustering.

We show that, for admissible cost functions, an underlying ground-truth
cluster tree has optimal expected cost.  Additionally, for a subfamily of
admissible cost functions (called \emph{smooth}, see
Defn.~\ref{defn:smooth})
which includes the cost function introduced by
Dasgupta, we show the following: The cost of the ground-truth cluster 
tree is with high probability
sharply concentrated (up to a factor of $(1+o(1))$ around its
expectation), and so of cost at most $(1+o(1))\opt$. 
This is further evidence that optimising admissible cost
functions is an appropriate strategy for hierarchical clustering.

We also provide a simple algorithm based on the SVD based approach of
\cite{McS01} followed by a standard agglomerative heuristic  yields
a hierarchical clustering which is, up to a factor $(1+o(1))$, optimal
with respect to smooth admissible cost functions.

\subsection{A Random Graph Model For Hierarchical Clustering}

\label{sec:muble}

We describe the random graph model for hierarchical clustering, called
the hierarchical block model. This model has already been studied
earlier, e.g.,~\cite{LTAPP:2017}. However, prior work has mostly focused
on statistical hypothesis testing and exact recovery in some regimes. We
will focus on understanding the behaviour of admissible objective
functions and algorithms to output cluster trees that have almost optimal
cost in terms of the objective function.  

We assume that there are $k$ ``bottom''-level clusters that are then
arranged in a hierarchical fashion. In order to model this we will use a
similarity graph on $k$ nodes generated from an ultrametric (see
Sec.~\ref{SubSec:ultrametric}). 
There are $n_1, \ldots, n_k$ nodes in each of the $k$ clusters. Each edge
is present in the graph with a probability that is a function of the clusters
in which their endpoints lie and the underlying graph on $k$ nodes
generated from the ultrametric. The formal definition follows.

\begin{definition}[Hierarchical Stochastic Block Model (HSBM)] \label{defn:hsbm}
	A hierarchical stochastic block model with $k$ bottom-level clusters is
	defined as follows:
	\begin{itemize}
		\item Let $\tilde{G}_k = (\tilde{V}_k, \tilde{E}_k, w)$ be a graph
			generated from an ultrametric (see Sec.~\ref{SubSec:ultrametric}),
			where $|\tilde{V}_k| = k$ for each $e \in \tilde{E}_k$, $w(e)
			\in (0, 1)$.%
			\footnote{In addition to $\tilde{G}_k$ being generated from an
			ultrametric, we make the further assumption that the function $f :
			\reals_+ \rightarrow \reals_+$, that maps ultrametric distances to
			edge weights, has range $(0, 1)$, so that the weight of an edge can be
			interpreted as a probability of an edge being present. We rule out
			$w(e) = 0$ as in that case the graph is disconnected and each
			component can be treated separately.} 
			Let $\tilde{T}_k$ be a tree on $k$ leaves, let
			$\tilde{\calN}$ denote the internal nodes of $\tilde{T}$ and
			$\tilde{L}$ denote the leaves; let $\tilde\sigma : \tilde{L}
			\rightarrow [k]$ be a bijection. Let $\tilde{T}$ be generating
			for $\tilde{G}_k$ with weight function $\tilde{W} :
			\tilde{\calN} \rightarrow [0, 1)$ (see
			Defn.~\ref{defn:generating-tree}).  
		\item For each $i \in [k]$, let $p_i \in (0, 1]$ be such that $p_i >
			\tilde{W}(N)$, if $N$ denotes the parent of $\tilde\sigma^{-1}(i)$ in
			$\tilde{T}$.
		\item For each $i \in [k]$, there is a fixed constant $f_{i} \in
			(0, 1)$; furthermore $\sum_{i = 1}^k f_i = 1$.
	\end{itemize}
	Then a random graph $G = (V, E)$ on $n$ nodes with sparsity parameter
	$\alpha_n \in (0, 1]$ is defined as follows: $(n_1, \ldots, n_k)$ is drawn from
	the multinomial distribution with parameters $(n, (f_1, \ldots,
	f_k))$. Each vertex $i \in [n]$ is assigned a label $\psi(i) \in [k]$,
	so that exactly $n_j$ nodes are assigned the label $j$ for $j\in[k]$.
	 An edge $(i,
	j)$ is added to the graph with probability $\alpha_n p_{\psi(i)}$ if
	$\psi(i) = \psi(j)$ and with probability $\alpha_n \tilde{W}(N)$ if $\psi(i)
	\neq \psi(j)$ and $N$ is the least common ancestor of $\tilde\sigma^{-1}(i)$ and
	$\tilde\sigma^{-1}(j)$ in $\tilde{T}$.  The graph $G = (V, E)$ is returned without
	any labels.
\end{definition}

As the definition is rather long and technical, a few remarks are in
order. 
\begin{itemize}
	\item Rather than focusing on an arbitrary hierarchy on $n$ nodes, we assume
		that there are $k$ clusters (which exhibit no further hierarchy) and
		there is a hierarchy on these $k$ clusters. The model assumes that $k$ is
		fixed, but in future work, it may be interesting to study models where
		$k$ itself may be a (modestly growing) function of $n$. The condition
		$p_i > \tilde{W}(N)$ (where $N$ is the parent of $\tilde\sigma^{-1}(i)$ )
		ensures that nodes in cluster $i$ are strictly more likely to connect to
		each other than to node from any other cluster.
	\item The graphs generated can be of various sparsity, depending on
		the parameter $\alpha_n$. If $\alpha_n \in (0, 1)$ is a fixed
		constant, we will get dense graphs (with $\Omega(n^2)$ edges),
		however if $\alpha_n \rightarrow 0$ as $n \rightarrow \infty$,
		sparser graphs may be achieved. This is similar to the approach
		taken by~\citet{WO:2013} when considering random graph models
		generated according to graphons. 
\end{itemize}

We define the \emph{expected graph}, $\bar{G}$, which is a complete graph where
an edge $(i, j)$ has weight $p_{i, j}$ where $p_{i, j}$ is the probability with
which it appears in the random graph $G$. In order to avoid ambiguity, we
denote by $\Gamma(T; G)$ and $\Gamma(T; \bar{G})$ the costs of the cluster tree
$T$ for the unweighted (random) graph $G$ and weighted graph $\bar{G}$
respectively. Observe that due to linearity (see Eqns.~\eqref{eqn:costfunc}
and~\eqref{eqn:costfuncfull}), for any tree $T$ and any admissible cost
function, $\Gamma(T; \bar{G})  = \E{\Gamma(T; G)}$, where the expectation is
with respect to the random choices of edges in $G$ (in particular this holds
even when conditioning on $n_1, \ldots, n_k$).

Furthermore, note that $\bar{G}$ itself is generated from an ultrametric and
the generating trees for $\bar{G}$ are obtained as follows: Let $\tilde{T}_k$
be any generating tree for $\tilde{G}_k$, let $\hat{T}_1, \hat{T}_2, \ldots,
\hat{T}_k$ be any binary trees with $n_1, \ldots, n_k$ leaves respectively. Let
the weight of every internal node of $\hat{T}_i$ be $p_i$ and replace each leaf
$l$ in $\tilde{T}_k$ by $\hat{T}_{\tilde\sigma(l)}$. In particular, this last point
allows us to derive Proposition~\ref{thm:expectedmin}. We refer to any tree
that is generating for the expected graph $\bar{G}$ as a \emph{ground-truth
tree} for $G$. 

\begin{remark} \label{rem:ease}
	Although it is technically possible to have $n_i = 0$ for some $i$
	under the model, we will assume in the rest of the section that $n_i >
	0$ for each $i$. This avoids getting into the issue of degenerate
	ground-truth trees; those cases can be handled easily, but add no
	expository value.
\end{remark}


\subsection{Objective Functions and Ground-Truth Tree}

In this section, we assume that the graphs represent similarities. This
is clearly more natural in the case of unweighted graphs; however, all
our results hold in the dissimilarity setting and the proofs are
essentially identical.

%

\begin{proposition}\label{thm:expectedmin}
	Let $\Gamma$ be an admissible cost function.  Let $G$ be a graph generated
	according to an $\hsbm$ (See Defn.~\ref{defn:hsbm}). 
	 Let $\psi$ be the (hidden) function mapping the
	nodes of $G$ to $[k]$ (the bottom-level clusters).
	Let $T$ be a
	ground-truth tree for $G$ Then, 
	\[ \E{\Gamma({T}) ~|~ \psi} \le \min_{{T}'}{\E{\Gamma({T}') ~|~ \psi}}. \] 
	Moreover, for any tree ${T}'$,\ $\E{\Gamma({T}) ~|~ \psi} = \E{\Gamma({T}') ~|~ \psi}$ if and
	only if ${T}'$ is a ground-truth tree.
\end{proposition}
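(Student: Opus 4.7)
The plan is to reduce the statement to the deterministic admissibility property of $\Gamma$ applied to the expected graph $\bar G$. First, I would invoke linearity: by the form~\eqref{eqn:costfunc} of any admissible cost function, the contribution of an internal node $N$ to $\Gamma(T;G)$ is $g(|A|,|B|)$ times the sum of edge indicators crossing the induced cut $(A,B)$. Since the tree $T$ is fixed and the cardinalities $|A|,|B|$ are deterministic given $\psi$, taking conditional expectation commutes with the sum and replaces each edge indicator by its probability, yielding
\[
\E{\Gamma(T;G)\mid\psi}=\Gamma(T;\bar G),
\]
where $\bar G$ is the complete weighted graph on $V$ whose edge $\{u,v\}$ has weight $\alpha_n p_{\psi(u)}$ if $\psi(u)=\psi(v)$ and $\alpha_n \tilde W(\lca_{\tilde T}(\tilde\sigma^{-1}(\psi(u)),\tilde\sigma^{-1}(\psi(v))))$ otherwise. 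The same identity holds for any other tree $T'$.

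Next, I would verify that $\bar G$ is a similarity graph generated from an ultrametric, so that admissibility applies. The condition $p_i>\tilde W(N)$ for $N$ the parent of $\tilde\sigma^{-1}(i)$ guarantees that for every triple of vertices the strong-triangle inequality on similarities holds: within-cluster similarities strictly dominate the similarities attached to ancestors in $\tilde T$, and similarities between clusters depend only on the least common ancestor in $\tilde T$, which is itself an ultrametric. Passing to the minimal generating ultrametric in the sense of Section~\ref{SubSec:ultrametric}, $\bar G$ admits generating trees, and these are precisely the trees obtained (as described in the paragraph preceding the proposition) by taking any generating tree $\tilde T_k$ of $\tilde G_k$ and replacing each leaf labelled $i$ by an arbitrary binary tree on the $n_i$ vertices with $\psi(v)=i$. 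This is, by definition, the class of ground-truth trees for $G$.

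The conclusion then follows directly from Definition~\ref{defn:admissibleCF}: applied to the ultrametric similarity graph $\bar G$, admissibility says that a cluster tree $T'$ minimises $\Gamma(\cdot;\bar G)$ if and only if $T'$ is a generating tree of $\bar G$, i.e.\ a ground-truth tree of $G$. Combined with the expectation identity $\E{\Gamma(T';G)\mid\psi}=\Gamma(T';\bar G)$ for every $T'$, this immediately gives both inequalities and the equality case claimed in the proposition.

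The only mildly delicate step is the second one: checking carefully that $\bar G$ really is generated from an ultrametric and that its generating trees are exactly the ground-truth trees as defined for $G$. This is essentially a bookkeeping argument using the hypothesis $p_i>\tilde W(\text{parent of }\tilde\sigma^{-1}(i))$ and the construction of generating trees from dendrograms in Section~\ref{SubSec:ultrametric}; once it is in place, the rest of the proof is a one-line invocation of admissibility together with linearity of expectation.
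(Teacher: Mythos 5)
Your proposal is correct and follows essentially the same route as the paper's proof: compute $\E{\Gamma(T';G)\mid\psi}=\Gamma(T';\bar G)$ by linearity, note that $\bar G$ is generated from an ultrametric whose generating trees are exactly the ground-truth trees, and conclude by invoking admissibility on $\bar G$. The only cosmetic difference is that you explicitly re-derive the fact that $\bar G$ is ultrametric-generated (using $p_i>\tilde W(\text{parent})$), whereas the paper records this as an observation in the text preceding the proposition and defines ``ground-truth tree'' outright as a generating tree of $\bar G$.
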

\begin{proof}
%
%
%

	As per Remark~\ref{rem:ease}, we'll assume that each $n_i > 0$ to
	avoid degenerate cases. Let $\bar{G}$ be a the expected graph, \ie
	$\bar{G}$ is complete and an edge $(i, j)$ has weight $p_{ij}$, the
	probability that the edge $(i, j)$ is present in the random graph $G$
	generated according to the hierarchical model. Thus, by definition of
	admissibility $\Gamma(T; \bar{G}) = \min_{T^\prime} \Gamma(T^\prime;
	\bar{G})$ if an only if $T$ is generating (see
	Defn.~\ref{defn:admissibleCF}). As ground-truth trees for $G$ are
	precisely the generating trees for $\bar{G}$; the result follows by
	observing that for any tree $T$ (not necessarily ground-truth)
	$\E{\Gamma(T; G)~|~\psi} = \Gamma(T; \bar{G})$, where the expectation is
	taken only over the random choice of the edges, by linearity of
	expectation and the definition of the cost function
	(Eqns.~\ref{eqn:costfunc} and~\ref{eqn:costfuncfull}).
\end{proof}

\newcommand{\lipsch}{smoothness\xspace}
\begin{defn} \label{defn:smooth}
	Let $\gamma$ be a cost function defined using the function $g(\cdot, \cdot)$ (see Defn.~\ref{defn:admissibleCF}). We say that the cost function $\Gamma$ (as defined in Eqn.~\ref{eqn:costfuncfull}) satisfies the \emph{\lipsch}
  property if 
  	\[g_{\max} := \max \{ g(n_1,n_2) ~|~n_1+n_2 =n\} = O\left(\frac{\kappa(n)}{n^2}\right),\] 
	where $\kappa(n)$ is the cost of a unit-weight clique of size $n$ under the
	cost function $\Gamma$.
\end{defn}
\begin{fact}
  The cost function introduced by~\cite{Das:2016} 
  satisfies the \lipsch property.
\end{fact}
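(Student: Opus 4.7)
The plan is to verify the two ingredients of Definition~\ref{defn:smooth} directly for Dasgupta's function $g(n_1,n_2) = n_1+n_2$ and then compare them.

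First, I would compute $g_{\max}$ for Dasgupta's function. Since $g(n_1,n_2) = n_1+n_2$, we have $g_{\max} = \max\{n_1+n_2 \mid n_1+n_2 = n\} = n$. This step is immediate.

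Next, I would compute $\kappa(n)$, the cost of a unit-weight clique of size $n$. By Theorem~\ref{thm:cost-func}, since Dasgupta's function is admissible, all binary cluster trees on a unit-weight clique have the same cost, so I can evaluate $\kappa(n)$ on the caterpillar tree (as already set up in the paper just before~\eqref{eq:honig}). Using the caterpillar formula $\kappa(n) = \sum_{i=0}^{n-1} i \cdot g(i,1) = \sum_{i=0}^{n-1} i(i+1)$, a standard closed-form gives $\kappa(n) = \tfrac{(n-1)n(n+1)}{3} = \Theta(n^3)$.

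Finally, I would combine the two: $\kappa(n)/n^2 = \Theta(n)$, so $g_{\max} = n = O(\kappa(n)/n^2)$, as required by Definition~\ref{defn:smooth}. There is no real obstacle here; the entire argument is a one-line computation once Dasgupta's $g$ is substituted into the caterpillar identity, and the main virtue of the proof is that the cost of the clique under Dasgupta's function is cubic in $n$ while $g_{\max}$ grows only linearly, leaving a factor of $n^2$ of slack that matches the $n^2$ in the definition of smoothness.
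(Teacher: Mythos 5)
Your computation is correct, and since the paper states this as a bare Fact with no accompanying proof, your argument is exactly the kind of direct verification one would expect: $g_{\max} = n$ for $g(n_1,n_2)=n_1+n_2$, while $\kappa(n) = \sum_{i=0}^{n-1} i(i+1) = \tfrac{(n-1)n(n+1)}{3} = \Theta(n^3)$ via the caterpillar identity already established around Eqn.~\eqref{eq:honig}, so $\kappa(n)/n^2 = \Theta(n) = \Theta(g_{\max})$. There is nothing to compare against; your proof fills the gap in the natural way.
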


\begin{theorem}\label{thm:optconcentrated}
	Let $\alpha_n =\omega( \sqrt{\log n/n})$. Let $\Gamma$ be an
	admissible cost function satisfying the \lipsch property
	(Defn.~\ref{defn:smooth}). Let $k$ be a fixed constant and $G$ be a
	graph generated from an $\hsbm$ (as per Defn.~\ref{defn:hsbm}) where
	the underlying graph $\tilde{G}_k$ has $k$ nodes and the sparsity
	factor is $\alpha_n$. Let $\psi$ be the (hidden) function mapping the
	nodes of $G$ to $[k]$ (the bottom-level clusters).  For any binary
	tree $T$ with $n$ leaves labelled by the vertices of $G$, the
	following holds with high probability:
	\[ \left|\Gamma(T) -\E{\Gamma(T)~|~\psi}\right| \leq
	o(\E{\Gamma(T) ~|~ \psi}). \] 
	The expectation is taken only over the random choice of edges.  In
	particular if $T^*$ is a ground-truth tree for $G$, then,
	with high probability, 
	\[ \Gamma(T^*) \le (1+o(1)) \min_{T'} \Gamma(T') = (1+o(1))\opt.\]
\end{theorem}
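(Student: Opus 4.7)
}
The plan is to view $\Gamma(T)$ as a weighted sum of independent edge indicators and apply a Bernstein/Chernoff concentration bound, then union-bound over all binary trees. For a fixed tree $T$, writing $A(N),B(N)$ for the leaf-sets of the two children of an internal node $N$, interchange the two summations in~\eqref{eqn:costfuncfull}--\eqref{eqn:costfunc} to get
\[
\Gamma(T) \;=\; \sum_{\{x,y\}\in \binom{V}{2}} g_{xy}^{T}\cdot \mathbf{1}[\{x,y\}\in E],
\qquad
g_{xy}^{T}:=g\big(|A(\lca_T(x,y))|,|B(\lca_T(x,y))|\big).
\]
Conditioned on $\psi$, the indicators $\mathbf{1}[\{x,y\}\in E]$ are independent Bernoulli random variables with means $p_{xy}$ determined by $\psi$ and the HSBM parameters, so $\E{\Gamma(T)\mid\psi}=\sum_{xy} g_{xy}^{T}p_{xy}$.

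The key two estimates are then: (i) an upper bound on the largest weight, $g_{xy}^T\le g_{\max}=O(\kappa(n)/n^2)$, which is exactly the smoothness assumption (Defn.~\ref{defn:smooth}); and (ii) a matching lower bound on the expectation. For the latter, let $p_{\min}:=\min\{\min_i p_i,\min_N \tilde{W}(N)\}>0$; this is a positive constant depending only on the fixed HSBM parameters (not on $n$). Since $p_{xy}\ge \alpha_n p_{\min}$ for every pair, and since for any tree $T$
\[
\sum_{\{x,y\}} g_{xy}^T \;=\; \sum_{N}|A(N)|\,|B(N)|\,g(|A(N)|,|B(N)|) \;=\; \kappa(n)
\]
by the clique-invariance condition of Theorem~\ref{thm:cost-func}, we obtain $\E{\Gamma(T)\mid\psi}\ge \alpha_n p_{\min}\,\kappa(n)$. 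Combining with (i),
\[
\frac{\E{\Gamma(T)\mid \psi}}{g_{\max}} \;=\; \Omega\!\left(\alpha_n n^2\right).
\]

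Next I apply Bernstein's inequality to the centered sum $\Gamma(T)-\E{\Gamma(T)\mid\psi}$: the summands are independent, bounded by $g_{\max}$, with total variance at most $g_{\max}\cdot\E{\Gamma(T)\mid\psi}$. This gives, for any $\eps\in(0,1)$,
\[
\Pr\!\left[|\Gamma(T)-\E{\Gamma(T)\mid\psi}|>\eps\,\E{\Gamma(T)\mid\psi}\,\Big|\,\psi\right]
\;\le\; 2\exp\!\Big(-\Omega\!\big(\eps^{2}\,\E{\Gamma(T)\mid\psi}/g_{\max}\big)\Big)
\;\le\; 2\exp\!\big(-\Omega(\eps^{2}\alpha_n n^{2})\big).
\]
For the fixed-$T$ statement this already yields concentration with $\eps=o(1)$, since $\alpha_n n^2=\omega(n\sqrt{\log n})$. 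To obtain the "in particular" conclusion, the main obstacle is that $\opt=\min_{T'}\Gamma(T')$ is attained by a tree $T'$ that depends on the random edges, so individual-tree concentration is not enough. I handle this by a union bound over all labelled binary trees on $n$ leaves, of which there are $(2n-3)!!\le n^{n}$. Choosing $\eps=\eps_n$ with $\eps_n^{2}\alpha_n n^{2}=\omega(n\log n)$, i.e.\ $\eps_n=\omega\!\big(\sqrt{\log n/(\alpha_n n)}\big)$, is possible with $\eps_n=o(1)$ precisely under the hypothesis $\alpha_n=\omega(\sqrt{\log n/n})$. The union bound then yields, with high probability, $|\Gamma(T')-\E{\Gamma(T')\mid\psi}|\le \eps_n\E{\Gamma(T')\mid\psi}$ simultaneously for every binary tree $T'$.

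Finally, applying this uniform concentration and Proposition~\ref{thm:expectedmin} (which says $\E{\Gamma(T^*)\mid\psi}\le\E{\Gamma(T')\mid\psi}$ for every $T'$), I conclude that for every $T'$,
\[
\Gamma(T^*) \;\le\; (1+\eps_n)\E{\Gamma(T^*)\mid\psi} \;\le\; (1+\eps_n)\E{\Gamma(T')\mid\psi} \;\le\; \frac{1+\eps_n}{1-\eps_n}\,\Gamma(T') \;=\; (1+o(1))\,\Gamma(T'),
\]
and taking the minimum over $T'$ gives $\Gamma(T^*)\le(1+o(1))\opt$ with high probability, as claimed.
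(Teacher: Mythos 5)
Your proof is correct and follows essentially the same route as the paper's: write $\Gamma(T)$ as a weighted sum of independent edge indicators, use the clique-invariance identity $\sum_{\{x,y\}} g_{xy}^T = \kappa(n)$ to lower-bound $\E{\Gamma(T)\mid\psi}$ by $\Omega(\alpha_n\kappa(n))$, apply a concentration inequality, union-bound over the $2^{O(n\log n)}$ labelled binary trees, and finally combine the uniform two-sided concentration with Proposition~\ref{thm:expectedmin} to get $\Gamma(T^*)\le (1+o(1))\opt$.

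The one substantive difference is technical: the paper applies Hoeffding with the estimate $\sum_{i<j} g(\cdot)^2 \le g_{\max}\kappa(n)$ (part (2) of the paper's Claim~\ref{cl:onstuff}), while you apply Bernstein with the variance bound $\sum\mathrm{Var}\le g_{\max}\E{\Gamma\mid\psi}$. Since $\E{\Gamma\mid\psi}\approx\alpha_n\kappa(n)$, your variance is a factor $\alpha_n$ smaller than Hoeffding's worst-case range term, so your exponent is $\Theta(\eps^2\alpha_n n^2)$ versus the paper's $\Theta(\eps^2\alpha_n^2 n^2)$. Concretely, your Bernstein route would actually go through under the weaker hypothesis $\alpha_n=\omega(\log n/n)$, whereas the paper's Hoeffding route genuinely needs $\alpha_n=\omega(\sqrt{\log n/n})$ to make $\eps_n=o(1)$ and still overwhelm the $2^{cn\log n}$ union bound. (Your remark that $\eps_n=o(1)$ is possible ``precisely under'' the stated hypothesis is therefore slightly too strong, but harmless since you only assume the stated hypothesis.) You also spell out the final step from uniform concentration to $\Gamma(T^*)\le(1+o(1))\opt$ more carefully than the paper, which merely asserts it; your $\frac{1+\eps_n}{1-\eps_n}$ argument is the clean way to do this.
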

\begin{proof}
	Our goal is to show that for any fixed cluster tree $T'$ the cost is
	sharply concentrated around its expectation with an extremely high
	probability.  We then apply the union bound over all possible cluster
	trees and obtain that in particular the cost of $\OPT$ is sharply
	concentrated around its expectation. Note that there are at most $2^{c
	\cdot n \log n}$ possible cluster trees (including labellings of the
	leaves to vertices of $G$), where $c$ is a suitably large constant.
	Thus, it suffices to show that for any cluster tree $T'$ we have
	\begin{align*}
		\Pr{ \left|\Gamma(T') -\E{\Gamma(T')~\mid~\psi}\right| \geq
		o(\E{\Gamma(T')~\mid~\psi})    } \leq  \exp\left(-c^* n \log
		n\right), 
	\end{align*}
	where $c^* > c$.

	Recall that for a given node $N$ of $T'$ with children $N_1,N_2$, we
	have $\gamma(N) = w(V(N_1),V(N_2)) \cdot g(|V(N_1)|,|V(N_2)|)  $ and
	$\Gamma(T') = \sum_{N \in T'} \gamma(N)$ (see
	Eqns.~\eqref{eqn:costfunc} and~\eqref{eqn:costfuncfull}). Let
	$Y_{i,j}=\mathbf{1}_{(i,j)\in E}$ for all $1\leq i,j\leq n$ and
	observe that $\{ Y_{i,j} | i < j\}$ are independent and
	$Y_{i,j}=Y_{j,i}$.  Furthermore, let
	$Z_{i,j}=g(|V(\choc{N^{i,j}})|,|V(\chtc{N^{i,j}})|) \cdot Y_{i,j}$,
	where $N^{i,j}$ is the node in $T'$ separating nodes $i$ and $j$ and
	$\choc{N^{i, j}}$ and $\chtc{N^{i, j}}$ are the two children of $N^{i,
	j}$. We can thus write
	\begin{align}\label{eq:runoutofnames}
		\Gamma(T') &= \sum_{N \in T'} g(|V(\choc{N})|,|V(\chtc{N})|)
		\sum_{\substack{i\in V(\choc{N}) \\ j\in V(\chtc{N})}} Y_{i,j} \\
		&= \sum_{N \in T'}\sum_{\substack{i\in V(\choc{N})\\ j\in
		V(\chtc{N})}} Z_{i,j} \\
		&= \sum_{i < j} Z_{i,j}, 
	\end{align} 
	where we used that every potential edge ${i,j}$, $i\neq j$ appears in
	exactly one cut and that $Z_{i,j}=Z_{j,i}$.  Observe that $\sum_{i <
	j} Z_{i,j}$ is a sum of independent random variables. Assume that the
	following claim holds.
	\begin{claim} 
          \label{cl:onstuff}
		Let $w_{\min}=\Omega(1)$ be the minimum weight in $\tilde{T}_k$,
		the tree generating tree for $\tilde{G}_k$ (see
		Defn.~\ref{defn:hsbm}), \ie $w_{\min} = \min_{N \in \tilde{T}_k}
		\tilde{W}(N)$) and recall that $g_{\max}= \max \{ g(n_1,n_2)
		~|~n_1+n_2 =n\}$. We have
		\begin{enumerate}
			\item	$\E{\Gamma(T')~|~\psi} \geq \kappa(n) \cdot \alpha_n\cdot
				w_{\min}$
			\item $\sum_{i<j} g(|V(\choc{N^{i,j}})|,|V(\chtc{N^{i,j}})|)^2 \leq
				g_{\max} \cdot \kappa(n)	$
		\end{enumerate}
	\end{claim}

	We defer the proof to later and first finish the proof of
	Theorem~\ref{thm:optconcentrated}.  We will make use of the slightly
	generalized version of Hoeffding bounds (see~\cite{H63}).  For $X_1,
	X_2, \dots, X_m$ independent random variables satisfying $a_i \leq X_i
	\leq b_i$ for $i\in [n]$.  Let $X=\sum_{i=1}^m X_i$, then for any $t
	>0$
	\begin{align}\label{eq:hoeff}
		\Pr{|X - \E{X}| \geq t } \leq \exp\left(-\frac{2t^2}{\sum_{i=1}^m
		(b_i - a_i)^2}\right).
	\end{align}
	By assumption, there exists a function $y_n:\mathbb{N}\rightarrow
	\mathbb{R}_+$ such that $ \alpha_n =\omega\left( y_n \cdot
	\sqrt{\frac{\log n}{  n}}\right)$ with $y_n=\omega(1)$  for all $n$.
	We  apply \eqref{eq:hoeff} with $t= \E{\Gamma(T')~\mid~\psi}\cdot
	\frac{y_n\cdot \sqrt{\frac{\log n}{
		n}}}{\alpha_n}=o(\E{\Gamma(T')~\mid~\psi})$ and derive 
	\begin{align*}
		&\Pr{ \left|\Gamma(T') -\E{\Gamma(T')~\mid~\psi} \right| \leq
		\E{\Gamma(T')~\mid~\psi} \cdot \frac{y_n\cdot \sqrt{\frac{\log n}{
			n}}}{\alpha_n}   } \geq \\
		&\phantom{0000}\geq 1-
		\exp\left(-\frac{2\left(\E{\Gamma(T')~\mid~\psi}  \cdot
		\frac{y_n\cdot \sqrt{\frac{\log n}{  n}}}{\alpha_n}  \right)^2}{
			\sum_{i<j} g(|V(N^{i,j}_1)|,|V(N^{i,j}_2)|)^2 }\right)\\
		&\phantom{0000}\geq 1- \exp\left(-\frac{2 \cdot\kappa(n)\cdot
		w_{\min}^2 \cdot y_n^2\cdot  \log n }{g_{\max}\cdot n }\right)\\
		&\phantom{0000}\geq 1- \exp\left(-c^* \cdot n \log n\right),
	\end{align*}
	where the last inequality follows by assumption of the lemma and since
	$y_n =\omega(1)$. 
        \end{proof}
        We now turn to the proof of Claim~\ref{cl:onstuff}.
        \begin{proof}[Proof of Claim~\ref{cl:onstuff}]
          Note that for any two vertices
	$i, j$ of $G$, the edge $(i, j)$ exists in $G$ with probability at
	least $\alpha_n \cdot w_{\min}$. Thus, we have
	\begin{align*}
		\E{\Gamma(T') ~|~ \psi} &=\sum_{N \in T'} g(|V(\choc{N})|,|V(\chtc{N})|)
		\sum_{\substack{i\in V(\choc{N})\\ j\in V(\chtc{N})}} \alpha_n
		\cdot w(i,j) \\
		&\geq w_{\min} \cdot \alpha_n \cdot \sum_{N \in T'}
		g(|V(\choc{N})|,|V(\chtc{N})|) |V(\choc{N})| \cdot |V(\chtc{N})| \\
		&=  w_{\min} \cdot \alpha_n \cdot \kappa(n),
	\end{align*}
	where we made use of Eqn.~\eqref{eq:honig}.  Furthermore, we have
	\begin{align*}
		\sum_{i<j} g(|V(\choc{N^{i,j}})|,&|V(\chtc{N^{i,j}})|)^2=\\ 
		&=\sum_{N \in T'} g(|V(\choc{N})|,|V(\chtc{N})|)^2 \cdot |V(\choc{N})|\cdot |V(\chtc{N}))|
\\&\leq g_{\max} \sum_{N \in T'}  g(|V(\choc{N})|,|V(\choc{N})|) \cdot |V(\choc{N})|\cdot |V(\chtc{N}))| \\
&=g_{\max} \cdot \kappa(n),	
\end{align*}
where we made use of Eqn.~\eqref{eq:honig}.
%
%
%
\end{proof}

\subsection{Algorithm for Clustering in the $\hsbm$}

In this section, we provide an algorithm for obtaining a hierarchical
clustering of a graph generated from an $\hsbm$. The algorithm is quite
simple and combines approaches that are used in practice for 
hierarchical clustering: SVD projections and agglomerative heuristics.
See Algorithm~\ref{alg:random} for a complete description.

\begin{theorem} \label{thm:random:recovery}
	Let $\alpha_n =\omega( \sqrt{\log n/n})$.  Let $\Gamma$ be an
	admissible cost function (Defn.~\ref{defn:admissibleCF}) satisfying
	the \lipsch property (Defn~\ref{defn:smooth}). Let $k$ be a fixed constant and $G$ be a
	graph generated from an $\hsbm$ (as per Defn.~\ref{defn:hsbm}) where
	the underlying graph $\tilde{G}_k$ has $k$ nodes and the sparsity
	factor is $\alpha_n$. 
        Let $T$ be a
	ground-truth tree for $G$. With high probability,
	Algorithm~\ref{alg:random} with parameter $k$ on graph $G$ outputs a
	tree $T'$ that satisfies $\Gamma(T) \leq (1+o(1)) \opt$.
\end{theorem}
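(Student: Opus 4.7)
The plan is to show that Algorithm~\ref{alg:random}, with high probability, outputs a tree $T'$ that is itself a \emph{ground-truth tree} for $G$ (i.e.\ a generating tree for the expected graph $\bar G$), and then invoke Theorem~\ref{thm:optconcentrated} to conclude that $\Gamma(T') \le (1+o(1))\opt$. So the work reduces to a recovery statement: the algorithm must correctly identify (i) the $k$ bottom-level clusters $\psi^{-1}(1),\dots,\psi^{-1}(k)$ given by the labeling $\psi$, and (ii) the hierarchy $\tilde T_k$ on those $k$ clusters.

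For step (i), I would use the SVD-projection analysis in the style of~\cite{McS01}. The adjacency matrix $A$ of $G$ can be written as $\bar A + E$, where $\bar A$ is the expected adjacency matrix (rank at most $k$, with block structure dictated by $\psi$ and the probabilities $\alpha_n p_i$, $\alpha_n \tilde W(N)$) and $E = A-\bar A$ is a zero-mean noise matrix whose entries are independent and bounded in $[-1,1]$. Standard spectral concentration gives $\|E\|_2 = O(\sqrt{\alpha_n n})$ with high probability, while the $k$-th singular value of $\bar A$ is $\Omega(\alpha_n n)$ (using that $p_i$ strictly exceeds every ancestor weight, and $f_i = \Theta(1)$ so $n_i = \Theta(n)$ w.h.p.\ by a Chernoff bound on the multinomial). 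Since $\alpha_n = \omega(\sqrt{\log n/n})$, these two bounds imply $\|E\|_2 = o(\sigma_k(\bar A))$. Projecting the rows of $A$ onto the top-$k$ singular subspace and clustering them (as in McSherry's algorithm, or any of its successors) therefore recovers the partition $\{\psi^{-1}(1),\dots,\psi^{-1}(k)\}$ exactly with high probability.

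For step (ii), once the bottom-level clusters are correctly identified, one can estimate each inter-cluster weight $\tilde W(N)$ between $\psi^{-1}(i)$ and $\psi^{-1}(j)$ by the empirical edge density $\hat W_{ij} = |E(\psi^{-1}(i),\psi^{-1}(j))| / (n_i n_j \alpha_n)$. A Chernoff/Hoeffding bound gives $|\hat W_{ij} - \tilde W(N_{ij})| = o(1)$ w.h.p.\ uniformly over all pairs $(i,j)$, since each density is an average of $\Theta(n^2)$ Bernoullis with mean $\Omega(\alpha_n)$ and $\alpha_n n^2 = \omega(\sqrt{n \log n})$. Since $\tilde W$ comes from an ultrametric and the gaps between distinct weights are $\Omega(1)$, any standard agglomerative procedure on the $\binom{k}{2}$ empirical weights (e.g.\ the single-linkage/dendrogram construction described in Section~\ref{SubSec:ultrametric}) correctly recovers $\tilde T_k$. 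Composing the recovered $\tilde T_k$ with arbitrary binary trees on the recovered bottom clusters produces a tree $T'$ that, by the characterization of generating trees of $\bar G$ given after Definition~\ref{defn:hsbm}, is a ground-truth tree for $G$.

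Finally, applying Theorem~\ref{thm:optconcentrated} to the ground-truth tree $T'$ (its hypothesis on $\alpha_n$ matches ours, and $\Gamma$ is assumed admissible and smooth) yields $\Gamma(T') \le (1+o(1))\opt$ with high probability, completing the proof. The main technical obstacle will be step (i): specifically, ensuring that $\sigma_k(\bar A) = \Omega(\alpha_n n)$, which requires knowing that the $k\times k$ block matrix of probabilities has full rank and that all $n_i$ are $\Theta(n)$ with high probability; this is where the condition $p_i > \tilde W(N)$ on ``parent'' weights in Definition~\ref{defn:hsbm}, together with the fixed constants $f_i \in (0,1)$, is used crucially.
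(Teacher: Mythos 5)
Your high-level architecture matches the paper's: (i) recover the $k$ bottom clusters by an SVD projection, (ii) recover the hierarchy $\tilde T_k$ on those $k$ clusters by an agglomerative step whose behaviour tracks the expected edge densities, and then (iii) invoke Theorem~\ref{thm:optconcentrated} to turn ``the output is a ground-truth tree'' into $\Gamma(T')\le (1+o(1))\opt$. This is precisely the structure of the paper's proof via Claims~\ref{cl:proj} and~\ref{cl:link}.

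There is, however, one genuine gap, which concerns the probability of success in step (i). Algorithm~\ref{alg:random} is not a generic ``project onto top-$k$ singular subspace and cluster'' routine: Step~\ref{random:step:SVD} explicitly calls the randomized projection algorithm of \citet[Thm.~12]{McS01}, and (as recorded in the paper's Theorem~\ref{thm:sherry}) that algorithm succeeds only with probability $1/k$ over its own random bits, even conditioned on the good events $\mathcal{E}_1$, $\mathcal{E}_2$. Your Davis--Kahan-style bound $\twonorm{E}=o(\sigma_k(\bar A))$ is a reasonable sanity check that the separation hypothesis of McSherry's theorem is met, but it does not by itself upgrade the $1/k$ guarantee to a high-probability guarantee for this particular algorithm. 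This is exactly why Algorithm~\ref{alg:random} repeats the whole procedure $k'=2k\log n$ times and returns the minimum-cost candidate, and the paper's proof of Theorem~\ref{thm:random:recovery} consists precisely of (a) Claim~\ref{cl:proj} giving the $1/k$ success in a single run, (b) Claim~\ref{cl:link} for the agglomerative phase, and (c) the boosting argument over the $\Theta(k\log n)$ repetitions. Your proposal omits (c) entirely, and also does not quite match the agglomerative step the algorithm actually performs: the while loop merges by maximizing $\mathrm{cut}(C_i,C_j)/(|C_i||C_j|)$ rather than running single-linkage on estimated ultrametric distances. The paper bridges this by showing the empirical densities concentrate (your Eqn.-style bound, here Eqn.~\eqref{eqn:cutconc}) and then appealing to Theorem~\ref{T:linkageAlgs:Perfectdata}, which covers all tie-breakings; your proposal would be on firmer footing if it did the same rather than substituting a different agglomerative procedure. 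With these two repairs---account for the repetition step, and argue about the algorithm's actual merge rule via Theorem~\ref{T:linkageAlgs:Perfectdata}---your outline becomes the paper's proof.
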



\begin{algorithm}[H]
  \caption{Agglomerative Algorithm for Recovering
    Ground-Truth Tree of an HSBM Graph}
  \label{alg:random}
  \begin{algorithmic}[1]
    \State \textbf{Input:} Graph $G=(V,E)$ generated from
    an $\hsbm$.
    \State \textbf{Parameter:} A constant $k$.
	  \State Apply (SVD) projection algorithm of~\cite[Thm. 12]{McS01}
	  with parameters $G$, $k$, $\delta=|V|^{-2}$, to get $\zeta(1),
	  \ldots, \zeta(|V|) \in \reals^{|V|}$ for vertices in $V$, where
	  $\mathrm{dim}(\mathrm{span}(\zeta(1),\ldots, \zeta(|V|))) = k$. \label{random:step:SVD} 
	 \State Run the single-linkage algorithm on the points $\{\zeta(1),
	  \ldots, \zeta(|V|) \}$ until there are exactly $k$ clusters. Let
	  $\calC = \{C^\zeta_1,\ldots,C^\zeta_{k}\}$ be the clusters (of
	  points $\zeta(i)$) obtained. Let $C_i \subseteq V$ denote the set of
	  vertices corresponding to the cluster $C^\zeta_i$.
	  \label{random:step:link}
    \While{there are at least two clusters in $\calC$}\label{random:step:while}
    \State\label{random:step:sparse}
    Take the pair of clusters $C_i,C_j$ of $\calC$ that maximizes
	  $\frac{\mathrm{cut}(C_i,C_j)}{|C_i|\cdot|C_j|}$
	  \State $\calC \gets \calC ~\setminus ~\{C_i\} ~\setminus~ \{C_j\} ~\cup ~\{C_i \cup C_j\}$
    \EndWhile\label{random:step:endwhile}
	 \State The sequence of merges in the while-loop
	  (Steps~\ref{random:step:while} to \ref{random:step:endwhile})
	  induces a hierarchical clustering tree on $\{C_1, \ldots, C_k\}$,
	  say $T^\prime_k$ with $k$ leaves (represented by $C_1, \ldots,
	  C_k$). Replace each leaf of $T^\prime_k$ by an arbitrary binary tree
	  on $|C_k|$ leaves labelled according to the vertices $C_k$
	  to obtain $T$.  \label{random:step:thatonetree} 
    \State Repeat the algorithm $k'=2k\log n$ times. Let $T^1,\dots T^{k'}$ be the corresponding hierarchical clustering trees.
	  \State \textbf{Output:} Tree $T^i$ (out of the $k^\prime$ candidates) that minimises $\Gamma(T_i)$.
  \end{algorithmic}
\end{algorithm}

\begin{remark}
	In an $\hsbm$, $k$ is a fixed constant.  Thus, even if $k$ is not
	known in advance, one can simply run the Algorithm~\ref{alg:random}
	with all possible different values (constantly many) and return the
	solution with the minimal cost $\Gamma(T)$.
\end{remark}



Let $G=(V,E)$ be the input graph generated according to an $\hsbm$. Let
$T$ be the tree output by Algorithm~\ref{alg:random}.  We divide the
proof into two claims that correspond to the outcome of
Step~\ref{random:step:SVD} and the while-loop
(Steps~\ref{random:step:while} to \ref{random:step:endwhile}) of
Algorithm~\ref{alg:random}.

%
%
%

We use a result of \cite{McS01} who considers a random graph model with
$k$ clusters that is (slightly) more general than the $\hsbm$ considered
here. The difference is that there is no hierarchical structure on top of
the $k$ clusters in his setting; on the other hand, his goal is also
simply to identify the $k$ clusters and not any hierarchy upon them. The
following theorem is derived from~\cite{McS01} (Observation 11 and a simplification of
Theorem 12).

\begin{theorem}[{\cite{McS01}}] \label{thm:sherry}
	Let $s$ be the size of the smallest cluster (of the $k$ clusters) and $\delta$ be the confidence parameter.
	Assume that for all $u,v$ belonging to different clusters with with
	adjacency vectors $\mathbf{u},\mathbf{v}$ (\ie $u_i$ is $1$ if
	the edge $(u, i)$ exists in $G$ and $0$ otherwise) satisfy
	\[ \twonorm{\E{\mathbf{u}} -\E{\mathbf{v}}}^2 \geq c \cdot {k}
	\cdot \left({n/s}+{\log (n/\delta)}\right)\]
	for a large enough constant $c$, where $\E{\mathbf{u}}$ is the
	entry-wise expectation.  Then, the algorithm of \citet[Thm. 12]{McS01}
	with parameters $G,k, \delta$ projects the columns of the adjacency matrix of
	$G$
	to points $\{\zeta(1), \ldots, \zeta(|V|) \}$ in a
	$k$-dimensional subspace of $\mathbb{R}^{|V|}$ such that the following
	holds w.p. at least $1-\delta$ over the random graph $G$ and with
	probability $1/k$ over the random bits of the algorithm.  There exists
	$\eta > 0$ such that for any $u$ in the $i$th  cluster and $v$ in
	the $j$th cluster:
   \begin{enumerate}
		\item if $i=j$ then $\twonorm{\zeta(u) -\zeta(v)}^2 \le \eta$;
    	\item if $i \neq j$ then $\twonorm{\zeta(u)-\zeta(v)}^2 > 2\eta$, 
    \end{enumerate}
\end{theorem}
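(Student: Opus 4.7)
The plan is to adapt the spectral partitioning analysis of \cite{McS01} to our setting. Let $A$ denote the adjacency matrix of $G$ and $\bar A = \E[A~|~\psi]$; note that $\bar A$ has rank at most $k$, because its columns depend only on the cluster label $\psi(u)$, so there are only $k$ distinct column types (call them $\bar{\mathbf{u}}^{(1)},\dots,\bar{\mathbf{u}}^{(k)}$). The goal of the two items in the conclusion is precisely to show that after an appropriate projection, each $\zeta(u)$ lies close to the projected cluster center $P\bar{\mathbf{u}}^{(\psi(u))}$ while the projected centers themselves remain well-separated. I would prove this in three stages: a spectral-norm concentration bound for $A-\bar A$; a subspace perturbation argument to show the computed top-$k$ subspace of $A$ is close to the column space of $\bar A$; and a per-vertex concentration argument that controls the length of the perturbation of individual columns.

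For the first stage, standard random matrix concentration (e.g.\ Füredi--Komlós / matrix Bernstein) gives $\|A-\bar A\|_2 \le C\sqrt{n\alpha_n + \log(n/\delta)}$ with probability $1-\delta/2$. For the second stage, let $P$ be the orthogonal projector onto $\mathrm{col}(\bar A)$ and let $\hat P$ be the projector onto the top-$k$ left singular subspace of $A$ (or, following McSherry, of a hold-out half of $A$ to gain independence). Since $\bar A$ has exactly $k$ nonzero singular values, Wedin's $\sin\theta$ theorem yields $\|\hat P - P\|_2 \le \|A-\bar A\|_2/\sigma_k(\bar A)$. The separation hypothesis $\|\bar{\mathbf{u}}^{(i)}-\bar{\mathbf{u}}^{(j)}\|^2 \ge c\,k(n/s+\log(n/\delta))$, together with the fact that each $\bar{\mathbf{u}}^{(i)}$ is repeated at least $s$ times as a column, implies $\sigma_k(\bar A)^2 \ge \Omega(s \cdot k(n/s+\log(n/\delta))/k) = \Omega(n+s\log(n/\delta))$; this lower bound is what justifies the form of the separation hypothesis in the theorem's statement.

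For the third stage, define $\zeta(u) := \hat P \mathbf{u}$ (the columns of the projected matrix). Write $\zeta(u) = \hat P \bar{\mathbf{u}}^{(\psi(u))} + \hat P(\mathbf{u} - \bar{\mathbf{u}}^{(\psi(u))})$. The key point is that when $\hat P$ is built from a disjoint half of the vertices (McSherry's hold-out trick), it is independent of $\mathbf{u}-\bar{\mathbf{u}}^{(\psi(u))}$, whose entries are independent and centered. A Hanson--Wright / Bernstein bound applied to a projection onto a $k$-dimensional subspace then gives $\|\hat P(\mathbf{u}-\bar{\mathbf{u}}^{(\psi(u))})\|^2 \le C'(k + \log(n/\delta))$ with probability $1-\delta/(2n)$; a union bound over $u$ handles all vertices simultaneously. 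Combined with Wedin, this gives $\|\zeta(u) - P\bar{\mathbf{u}}^{(\psi(u))}\|^2 \le \eta/4$ for a suitable $\eta = \Theta(k(n/s+\log(n/\delta)))$, so vertices in the same cluster satisfy item~1, while for different clusters
\[
\|\zeta(u)-\zeta(v)\|^2 \ge \tfrac{1}{2}\|\bar{\mathbf{u}}^{(\psi(u))}-\bar{\mathbf{u}}^{(\psi(v))}\|^2 - \eta \ge 2\eta
\]
by the separation hypothesis with $c$ chosen large enough.

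The main obstacle is the $1/k$ probability over the algorithm's random bits: this does not come from the concentration arguments (which are all high-probability) but from the combinatorial recursion inside \cite{McS01}'s algorithm, which picks a random pivot column to bootstrap one cluster at a time and only succeeds when that column comes from a ``good'' class; reproducing that step faithfully, while verifying that the separation and concentration bounds above survive the hold-out split and the recursion, is the delicate part. The remaining work is bookkeeping of constants so that the same $\eta$ works uniformly for the ``close'' and ``separated'' conclusions, for which the hypothesis with a large constant $c$ leaves enough slack.
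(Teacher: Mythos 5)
The paper does not prove this statement at all: Theorem~\ref{thm:sherry} is imported as a black box from \cite{McS01} (it is ``derived from Observation 11 and a simplification of Theorem 12'' there), so there is no internal proof to compare against. Your attempt is therefore a from-scratch reconstruction of McSherry's result, which is a legitimate but different enterprise --- and as written it has a genuine gap in the second stage.

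The flawed step is the claim that the separation hypothesis, together with each center being repeated at least $s$ times, yields $\sigma_k(\bar A)^2 = \Omega\left(n + s\log(n/\delta)\right)$, which you then feed into Wedin's $\sin\theta$ theorem. Pairwise separation of the $k$ centers does \emph{not} lower-bound the $k$-th singular value of the matrix of centers: take three collinear centers $c_1 = Mv$, $c_2 = Mv + \Delta w$, $c_3 = Mv + 2\Delta w$; they are pairwise $\Delta$-separated yet span only a $2$-dimensional space, so $\sigma_3(\bar A) = 0$ and the subspace-perturbation bound $\|\hat P - P\|_2 \le \|A-\bar A\|_2/\sigma_k(\bar A)$ is vacuous. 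McSherry's argument deliberately avoids $\sigma_k(\bar A)$: it uses the optimality of the rank-$k$ SVD truncation to get a Frobenius-norm bound of the form $\|\hat A_k - \bar A\|_F^2 \le 8k\,\|A - \bar A\|_2^2$, and then distributes this total error over the at least $s$ columns of each cluster; that averaging is where the $n/s$ term in the separation hypothesis actually comes from, not from a spectral-gap condition. In addition, you explicitly leave unproved the part that gives the ``with probability $1/k$ over the algorithm's random bits'' clause --- the combinatorial pivoting/hold-out recursion in McSherry's Theorem~12 --- which is exactly the part that cannot be recovered from the concentration estimates; so even repaired, your sketch would establish only the geometric separation of projected columns, not the full statement. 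For the purposes of this paper, the correct move is simply to cite \cite{McS01}, as the authors do.
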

Recall that $\psi : V \rightarrow [k]$ is the (hidden) labelling assigning each
vertex of $G$ to one of the $k$ bottom-level clusters. Let $C^*_i = \{ v
\in V ~|~ \psi(v) = i \}$. Recall that $n_i=|V(C^*_i)|$.  Note that the
algorithm of \cite[Thm. 12]{McS01} might fail for two reasons. The first
reason is that the random choices by the algorithm yield an incorrect
clustering. This happens w.p. at most $1/k$ and we can simply repeat the
algorithm sufficiently many times to be sure that at least once we get
the desired result, \ie the projections satisfy the conclusion of
Thm.~\ref{thm:sherry}. Claims~\ref{cl:proj} and~\ref{cl:link} show that
in this case, Steps \ref{random:step:while} to \ref{random:step:endwhile}
of Alg.~\ref{alg:random} produce a tree that has cost close to optimal.
Ultimately, the algorithm simply outputs a tree that has the least cost
among all the ones produced (and one of them is guaranteed to have cost
$(1 + o(1))\opt$) with high probability.

The second reason why the McSherry's algorithm may fail is that the
generated random graph $G$ might ``deviate'' too much from its
expectation. This is controlled by the parameter $\delta$ (which we set
to $1/|V|^2$). Deviations from expected behaviour will cause our
algorithm to fail as well. We bound this failure probability in terms of
two events. Let $\bar{\mathcal{E}}_1$ be the event that there exists $i$,
such that $n_i < f_i n/2$, \ie at least one of the bottom-level
clusters has size that is not representative. This event occurs with a
very low probability which is seen by a simple application of the
Chernoff-Hoeffding bound, as $\E{n_i} = f_i n$. Note that
$\bar{\mathcal{E}}_1$ depends only on the random choices that assign
labels to the vertices according to $\psi$ (and not on random choice of
the edges). Let $\mathcal{E}_1$ be the complement of
$\bar{\mathcal{E}}_1$. If $\mathcal{E}_1$ holds the term $n/s$ that
appears in Thm.~\ref{thm:sherry} is a constant. The second bad event is
that McSherry's algorithm fails due to the random choice of edges. This
happens with probability at most $\delta$ which we set at $\delta =
\frac{1}{|V|^2}$. We denote the complement of this event $\mathcal{E}_2$.
Thus, from now on we assume that both ``good'' events $\mathcal{E}_1$ and $\mathcal{E}_2$
occur, allowing Alg.~\ref{alg:random} to fail if either of them don't
occur. 

%
  %
  In order to prove Theorem~\ref{thm:random:recovery} we establish the
  following claims.  
  
\begin{claim}\label{cl:proj}
	Let $\alpha_n=\omega( \sqrt{\log n/n})$.  Let $G$ be generated by an
	$\hsbm$. Let $C_1^*, \ldots, C_k^*$ be the hidden bottom-level
	clusters, \ie $C_i^* = \{v ~|~ \psi(v) = i \}$.  Assume that events
	$\mathcal{E}_1$ and $\mathcal{E}_2$ occur. With probability at least
	$1/k$, the clusters obtained after Step~\ref{random:step:link}
	correspond to the assignment $\psi$, \ie there exists a permutation
	$\pi : [k] \rightarrow [k]$, such that $C_j = C^*_{\pi(j)}$.
\end{claim}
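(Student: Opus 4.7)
\textbf{Proof plan for Claim~\ref{cl:proj}.}

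The strategy is to (i) verify that the hypotheses of McSherry's theorem (Theorem~\ref{thm:sherry}) apply to our setting with the stated scaling of $\alpha_n$, so that the projections $\zeta(1),\dots,\zeta(|V|)$ produced in Step~\ref{random:step:SVD} satisfy the ``intra-cluster $\le \eta$, inter-cluster $>2\eta$'' gap property with the claimed $1/k$ success probability, and then (ii) argue that single-linkage run on a point set with such a gap exactly recovers the intended partition as soon as it is stopped at $k$ clusters.

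For (i), fix $u \in C^*_i$ and $v \in C^*_j$ with $i \neq j$, and let $\mathbf{u},\mathbf{v}$ denote their expected adjacency vectors. For every coordinate $w \in C^*_i$, the $w$-th entries are $\alpha_n p_i$ and $\alpha_n \tilde{W}(\mathrm{lca}_{\tilde T}(\tilde\sigma^{-1}(i),\tilde\sigma^{-1}(j)))$ respectively; by the HSBM assumption $p_i > \tilde{W}(N)$ for $N$ the parent of $\tilde\sigma^{-1}(i)$, and this parent lies below (or equals) $\mathrm{lca}_{\tilde T}(\tilde\sigma^{-1}(i),\tilde\sigma^{-1}(j))$ only when the weight is at most that of the parent, giving a constant multiplicative gap $\Omega(\alpha_n)$ per coordinate. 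Under the event $\mathcal{E}_1$ we have $n_i \geq f_i n/2 = \Omega(n)$, so
\[
\twonorm{\E{\mathbf{u}}-\E{\mathbf{v}}}^2 \;\geq\; n_i \cdot \Omega(\alpha_n^2) \;=\; \Omega(\alpha_n^2 n).
\]
Since $k$ is constant, $n/s = O(1)$ under $\mathcal{E}_1$, and $\log(n/\delta)=O(\log n)$ (with $\delta=|V|^{-2}$), the requirement in Theorem~\ref{thm:sherry} reduces to $\alpha_n^2 n = \omega(\log n)$, which holds precisely because $\alpha_n = \omega(\sqrt{\log n/n})$. Thus, conditional on $\mathcal{E}_1\cap \mathcal{E}_2$, with probability $\ge 1/k$ over the internal randomness of the projection algorithm, the set $\{\zeta(v)\}_{v\in V}$ admits some threshold $\eta>0$ with $\twonorm{\zeta(u)-\zeta(v)}^2 \le \eta$ when $\psi(u)=\psi(v)$ and $\twonorm{\zeta(u)-\zeta(v)}^2 > 2\eta$ otherwise.

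For (ii), I would argue by an invariant: at every stage of single-linkage run on $\{\zeta(v)\}_{v \in V}$, each current cluster is contained in some $C^*_i$. The base case holds since all singletons start inside their true cluster. For the inductive step, observe that the single-linkage distance $d_{\mathrm{SL}}(A,B) = \min_{a\in A,b\in B} \twonorm{\zeta(a)-\zeta(b)}$ between two current clusters $A\subseteq C^*_i$ and $B\subseteq C^*_j$ is $\le \sqrt{\eta}$ if $i=j$ and $>\sqrt{2\eta}$ if $i\neq j$. Therefore, as long as some true cluster $C^*_i$ has not been fully assembled, there exist two current clusters within the same $C^*_i$ whose single-linkage distance is at most $\sqrt{\eta}$, which strictly dominates any cross-cluster merge candidate; the next merge is thus guaranteed intra-cluster, preserving the invariant. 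Consequently, the process reaches exactly $k$ clusters precisely when it has produced $C^*_1,\dots,C^*_k$, which proves the claim with the stated permutation $\pi$.

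The main obstacle is the first part, namely pinning down the correct lower bound on $\twonorm{\E{\mathbf{u}}-\E{\mathbf{v}}}^2$ and matching it to McSherry's hypothesis, because it is what forces the assumption $\alpha_n = \omega(\sqrt{\log n/n})$; the single-linkage analysis (ii) is then a clean combinatorial consequence of the gap. The $1/k$ success probability is inherited directly from Theorem~\ref{thm:sherry}, which is also why Algorithm~\ref{alg:random} repeats the procedure $\Theta(k\log n)$ times so that at least one invocation succeeds with high probability.
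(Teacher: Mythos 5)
Your proposal is correct and follows essentially the same route as the paper: you verify McSherry's separation hypothesis using the events $\mathcal{E}_1$ (so $n/s=O(1)$) and the HSBM monotonicity $p_i>q$ to get $\twonorm{\E{\mathbf{u}}-\E{\mathbf{v}}}^2 = \Omega(\alpha_n^2 n)=\omega(\log n)$, then invoke the $\eta$/$2\eta$ gap and argue that single-linkage stopped at $k$ clusters recovers the true partition. One small remark: where the paper contents itself with the one-line observation that any linkage procedure will only perform merges at squared distance $\le\eta$ and therefore respects $\psi$, you supply the underlying reason explicitly, namely an invariant ``every current cluster is contained in some $C_i^*$'' preserved because, whenever some $C_i^*$ is still fragmented, an intra-cluster merge candidate at distance $\le\sqrt{\eta}<\sqrt{2\eta}$ is available and beats every cross-cluster candidate. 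That extra paragraph is a genuine (and welcome) elaboration, but it does not change the structure of the argument; everything else, including the use of $\mathcal{E}_1\cap\mathcal{E}_2$ and the inheritance of the $1/k$ success probability from Theorem~\ref{thm:sherry}, matches the paper's proof.
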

\begin{proof}
	The proof relies on Theorem~\ref{thm:sherry}. As we
	know that the event $\mathcal{E}_1$ occurs, we may conclude that $ s =
	\min_{i} n_i \geq \frac{n}{2} \min_{i} f_i$. Thus, $n/s \leq
	\frac{2}{f_{\min}}$, where $f_{\min} = \min_i f_i$ and hence $n/s$ is
	bounded by some fixed constant. 

	Let $u,v$ be two nodes such that $i = \psi(u) \neq \psi(v) = j$. Let
	$\mathbf{u}$ and $\mathbf{v}$ denote the random variables
	corresponding to the columns of $u$ and $v$ in the adjacency matrix of
	$G$. Let $q = \tilde{W}(N)$ where $N$ is the
	$\lca_{\tilde{T}_k}(\tilde\sigma^{-1}(i), \tilde\sigma^{-1}(j))$ in $\tilde{T}_k$,
	the generating tree for $\tilde{G}_k$ used in defining the $\hsbm$.
	Assuming $\mathcal{E}_1$ and taking expectations only with respect to
	the random choice of edges, we have:
 	\begin{align*}
		\twonorm{\E{\mathbf{u}~|~\psi,\mathcal{E}_1} - \E{ \mathbf{v}~|~\psi,\mathcal{E}_1 }}^2 &\geq n_i  \alpha_n^2
		(p_i - q)^2 + n_j \alpha_n^2 (p_j - q)^2 = \Omega(\alpha_n^2 n) =
		\omega( \log n)
 	\end{align*}
	Above we used that $p_i - q > 0$ and $n_i = \Omega(n)$ for each $i$.

	Note that for $\delta = \frac{1}{n^2}$, this satisfies the condition
	of Theorem~\ref{thm:sherry}. Since, we are already assuming that
	$\mathcal{E}_2$ holds, the only failure arises from the random coins
	of the algorithm. Thus, with probability at least $1/k$ the
	conclusions of Theorem~\ref{thm:sherry} hold. In the rest of the proof
	we assume that the following holds: There exists $\eta > 0$ such that for any pair of nodes $u,v$ we have
   \begin{enumerate}
		\item if $\psi(u) = \psi(v)$ then $\twonorm{\zeta(u) -\zeta(v)}^2
			\le \eta$;
		\item if $\psi(u) \neq \psi(v)$ then $\twonorm{\zeta(u)-\zeta(v)}^2
			> 2\eta$.
    \end{enumerate}

	Therefore, any linkage algorithm, \eg single linkage (See
	Alg.~\ref{alg:linkage4sim}), performing merges starting from the set
	$\{\zeta(1), \ldots, \zeta(n) \}$ until there are $k$ clusters will
	merge clusters at a distance of at most $\eta$ and hence, the clusters
	obtained after Step~\ref{random:step:link} correspond to the
	assignment $\psi$. This yields the claim.  
\end{proof}

\begin{claim} \label{cl:link}
	Let $\alpha_n=\omega( \sqrt{\log n/n})$. Let $G$ be generated
	according to an $\hsbm$ and let $T^*$ be a ground-truth tree for $G$.
	Assume that events $\mathcal{E}_1$ and $\mathcal{E}_2$ occur, and that
	furthermore, the clusters obtained after Step~\ref{random:step:link}
	correspond to the assignment $\psi$, \ie there exists a permutation
	$\pi : [k] \rightarrow [k]$ such that for each $v \in C_i$, $\psi(v) =
	\pi(i)$.  Then, the  sequence of merges in the while-loop
	(Steps~\ref{random:step:while} to \ref{random:step:endwhile}) followed
	by Step~\ref{random:step:thatonetree} produces w.h.p. a tree $T$ such
	that $\Gamma(T) \leq (1+o(1))OPT$.
\end{claim}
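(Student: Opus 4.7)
The plan is to show that with high probability Algorithm~\ref{alg:random} reconstructs a ground-truth tree for $G$, after which Theorem~\ref{thm:optconcentrated} delivers the desired approximation. Specifically, I will prove that the tree $T'_k$ produced by the while-loop (Steps~\ref{random:step:while}--\ref{random:step:endwhile}) is a generating tree for the $k$-vertex graph $\tilde{G}_k$ underlying the HSBM. Step~\ref{random:step:thatonetree} then inflates each leaf $C_i$ of $T'_k$ into an arbitrary binary tree on its $|C_i|$ vertices; assigning weight $\alpha_n p_i$ to every internal node of this inflated subtree extends the weight function of $T'_k$ to a valid weight function for the resulting tree $T$, witnessing that $T$ is generating for the expected graph $\bar{G}$ (the required monotonicity toward the root follows from the HSBM condition $p_i > \tilde{W}(N)$ for every $N \in \tilde{T}_k$ in Definition~\ref{defn:hsbm}). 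Thus $T$ is a ground-truth tree for $G$, and Theorem~\ref{thm:optconcentrated} applied to $T$ yields $\Gamma(T) \leq (1+o(1)) \opt$ with high probability.

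I establish by induction on the iterations of the while-loop the invariant that the current clusters in $\calC$ correspond, via the permutation $\pi$ from the hypothesis, to a partition of the leaves of $\tilde{T}_k$ into connected subtrees; the base case is the hypothesis itself. For the inductive step, consider two current clusters $C, C'$ corresponding to subtrees $S, S' \subseteq \tilde{T}_k$. By the ultrametric structure, for every $u \in C$ and $v \in C'$ the node $\lca_{\tilde{T}_k}(\psi(u),\psi(v))$ is the same node, so each edge between $C$ and $C'$ exists independently with probability exactly $\alpha_n q_{S,S'}$, where $q_{S,S'} := \tilde{W}(\lca_{\tilde{T}_k}(S, S'))$. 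Under event $\mathcal{E}_1$, $|C|, |C'| = \Omega(n)$, so $\E{\mathrm{cut}(C, C')} = \Omega(n^2 \alpha_n) = \omega(n^{3/2}\sqrt{\log n})$. A Chernoff bound applied with some $\eps_n = o(1)$ chosen so that $\eps_n^2 \cdot n^2 \alpha_n = \omega(\log n)$ yields
\begin{equation*}
    \Pr{\left| \mathrm{cut}(C, C')/(|C||C'|) - \alpha_n q_{S,S'} \right| > \eps_n \alpha_n} \leq \exp(-\omega(\log n)),
\end{equation*}
and a union bound over the $O(1)$ candidate pairs in each of the $O(1)$ iterations shows that w.h.p.\ every empirical density encountered by the algorithm equals $\alpha_n q_{S,S'}$ up to an additive $o(\alpha_n)$ error throughout the algorithm.

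Let $w_1 > w_2 > \cdots > w_L$ be the distinct values of $\tilde{W}$ on internal nodes of $\tilde{T}_k$ and $\Delta := \min_i(w_i - w_{i+1}) > 0$; since $k$ is constant, so is $\Delta$. The additive error $o(\alpha_n)$ is then $o(\Delta \alpha_n)$, so under the concentration event the pair maximising empirical density also maximises $q_{S,S'}$. Let $(S, S')$ be such a pair and $N := \lca_{\tilde{T}_k}(S, S')$; since $\tilde{W}(N)$ is maximal over pairwise $\lca$-weights in the current partition and $\tilde{W}$ is non-increasing toward the root, each child of $N$ in $\tilde{T}_k$ must have all its leaves in a single element of the current partition (if a child split two current subtrees, their $\lca$ would sit strictly below $N$ and have $\tilde{W}$-value $>\tilde{W}(N)$, contradicting maximality; any remaining ties in $\tilde{W}$ occur only within regions of $\tilde{T}_k$ where every binary topology is generating, so tie-breaking does not violate the invariant). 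Hence $S$ and $S'$ are sibling subtrees and $S \cup S'$ is again a connected subtree of $\tilde{T}_k$, closing the induction. The principal technical subtlety lies in this last step, namely ensuring that the concentration error is dominated by the gap $\Delta \alpha_n$ between distinct ultrametric weights; this is what forces the constant-$k$ assumption.
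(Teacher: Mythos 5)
Your strategy is essentially the paper's: reduce the while-loop merges to a linkage algorithm on the expected graph $\bar{G}$, show that the empirical cut densities concentrate tightly enough that the algorithm makes the same merge decisions as it would with exact expectations, conclude that the output is a ground-truth tree, and then invoke Theorem~\ref{thm:optconcentrated}. The paper shortcuts the induction by citing Theorem~\ref{T:linkageAlgs:Perfectdata}, where you re-derive it inline, but the content is the same. Your concentration step is in fact stated more carefully than the paper's: you correctly keep the additive error at scale $\eps_n \alpha_n$ and compare it to the gap $\Delta\,\alpha_n$ between distinct expected densities, whereas the paper's Eqn.~\eqref{eqn:cutconc} asserts an error of $1/\log n$ with no $\alpha_n$ factor — as written this would be vacuous once $\alpha_n \to 0$, and the $\alpha_n$ factor you supply is what the argument actually needs.

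One imprecision worth fixing: your inductive invariant is stated relative to a \emph{fixed} tree $\tilde{T}_k$ (clusters partition the leaves of $\tilde{T}_k$ into connected subtrees). Under ties in $\tilde{W}$ — which the non-strict monotonicity in Definition~\ref{defn:generating-tree} explicitly allows — the algorithm may merge two clusters whose union is not a connected subtree of this particular $\tilde{T}_k$, even though the merge is perfectly consistent with \emph{some other} generating tree. Your parenthetical remark about ties acknowledges the issue but does not repair the invariant. The cleaner statement, which is exactly what the proof of Theorem~\ref{T:linkageAlgs:Perfectdata} maintains, is: after $t$ merges there exists a generating tree $T^t$ (possibly changing with $t$) such that every current cluster is the leaf set of a subtree of $T^t$. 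With the invariant phrased that way, your argument closes without special-casing ties, and the rest of the proof goes through.
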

\begin{proof}
	For simplicity of notation, we will assume that $\pi$ is the identity
	permutation, \ie the algorithm has not only identified the true
	clusters correctly but also guessed the correct label. This only makes
	the notation less messy, though the proof is essentially unchanged.

	Let $N$ be some internal node of any  generating tree $\tilde{T}_k$ 
        and let $S_1 = V(\choc{N})$
	and $S_2 = V(\chtc{N})$. Note that both $S_1$ and $S_2$ are a disjoint union
	of some of the clusters $\{C_1, \ldots, C_k \}$. Then notice that for any $u
	\in S_1$ and $v \in S_2$, the probability that the edge $(u, v)$ exists in
	$G$ is $\alpha_n \tilde{W}(N)$. Thus, conditioned on $\mathcal{E}_1$ and $\psi$, 
        we have
	\begin{align*}
		\E{ \frac{\mathrm{cut}(S_i, S_j)}{|S_i| |S_j|}} = \alpha_n \tilde{W}(N)
	\end{align*}
	For now, let us assume that Alg.~\ref{alg:random} makes merges in
	Steps~\ref{random:step:while} to~\ref{random:step:endwhile} based on
	the true expectations instead of the empirical estimates. Then
	essentially, the algorithm is performing any linkage algorithm 
        (\ie average, single, or complete-linkage) on a
	ground-truth input and hence is guaranteed to recover the generating
	tree (see Theorem.~\ref{T:linkageAlgs:Perfectdata}). 

 	To complete the proof, we will show the following: For any partition $C_1,
 	\ldots, C_k$ of $V$ satisfying $\min_i |C_i| \geq \frac{n}{2} \min_{i}f_i$,
 	and for any $S_1, S_2, S_1^\prime, S_2^\prime$, where $S_1$ and $S_2$ (and
 	$S_1^\prime, S_2^\prime$) are disjoint and are both unions of some cluster
 	$\{C_1, \ldots, C_k \}$. and $i^\prime \neq j^\prime$, with probability at
 	least $1 - 1/n^3$, the following holds:
 	\begin{align}
		\left| \E{\frac{\mathrm{cut}(S_1, S_2)}{|S_1| \cdot |S_2|}} -  \frac{\cut(S_1,
		S_2)}{|S_1| \cdot |S_2|}\right| \leq \frac{1}{\log n}~\label{eqn:cutconc}
 	\end{align}
 	Note that as the probability of any edge existing is $\Omega(\alpha_n)$, it
 	must be the case that $\E{\frac{\cut(S_1, S_2)}{|S_1||S_2|}} =
 	\Omega(\alpha_n)$. Furthermore, since $|S_1|, |S_2| = \Omega(n)$ by the
 	Chernoff-Hoeffding bound for a single pair $(S_1, S_2)$ the above holds with
 	probability $\exp( - c n^2 \alpha_n / \log n)$. Thus, even after taking a
 	union bound over $O(k^n)$ possible partitions and $2^{O(k)}$ possible pairs
 	of sets $(S_1, S_2)$ that can be derived from said partition,
 	Eqn.~\ref{eqn:cutconc} holds with high probability.

	To complete the proof, we will show the following: We will assume that the
	algorithm of McSherry has identified the correct clusters at the bottom
	level, \ie $\mathcal{E}_2$ holds, and that Eqn.~\ref{eqn:cutconc} holds. 

	We restrict our attention to sets $S_1, S_2$ that are of the form that $S_1
	= V(\choc{N})$ and $S_2 = V(\chtc{N})$ for some internal node $N$ of some
	generating tree $\tilde{T}_k$ of the graph $\tilde{G}_k$ used to generate
	$G$. Then for pairs $(S_1, S_2)$ and $(S^\prime_1, S^\prime_2)$ both of this
	form, the following holds: there exists $\frac{3}{\log n} > \eta > 0$ such
	that 
	\begin{enumerate}
		\item If $\E{\frac{\cut(S_1, S_2)}{|S_1| \cdot |S_2|}} =
			\E{\frac{\cut(S^\prime_1, S^\prime_2)}{|S^\prime_1| \cdot |S^\prime_2|}}$, then $\left|
			\frac{\cut(S_1, S_2)}{|S_1| \cdot |S_2|} - \frac{\cut(S^\prime_1,
			S^\prime_2)}{|S^\prime_1| \cdot |S^\prime_2|}\right| \leq \eta$
		\item If $\E{\frac{\cut(S_1, S_2)}{|S_1| \cdot |S_2|}} \neq
			\E{\frac{\cut(S^\prime_1, S^\prime_2)}{|S^\prime_1| \cdot |S^\prime_2|}}$, then $\left|
			\frac{\cut(S_1, S_2)}{|S_1| \cdot |S_2|} - \frac{\cut(S^\prime_1,
			S^\prime_2)}{|S^\prime_1| \cdot |S^\prime_2|}\right| > 2 \eta$
	\end{enumerate}
	Note that the above conditions are enough to ensure that the algorithm
	performs the same steps as with perfect inputs, up to an arbitrary
	choice of tie-breaking rule. Since Theorem~\ref{T:linkageAlgs:Perfectdata}
        is true no matter the tie breaking rule chosen, the proof follows since the two above
	conditions hold with probability at least $\exp( - c n^2 \alpha_n / \log n)$. 
\end{proof}

We are ready to prove Theorem~\ref{thm:random:recovery}.
\begin{proof}[Proof of Theorem~\ref{thm:random:recovery}]
Conditioning on $\mathcal{E}_1$ and $\mathcal{E}_2$ which occur w.h.p.
we get from
 Claims~\ref{cl:proj} and \ref{cl:link} that w.p. at least $1/k$ the tree $T_i$ obtain in step \ref{random:step:thatonetree}
fulfills $\Gamma(T_i) \leq (1+o(1))OPT$.
It is possible to boost this probability by running Algorithm~\ref{alg:random} multiple times.
Running it $\Omega(k\log n)$ times 
and taking the tree with the smallest $\Gamma(T_i)$ yields the result.
\end{proof}
  
\begin{remark}
  It is worth mentioning that one of the trees $T_i$ computed by the algorithm is w.h.p. the ground-truth tree $T^*$.
  If one desires to recover that tree, then this is possible by verifying for each candidate tree
  with minimal $T_i$ whether is is indeed generating.
\end{remark}

\fi

\section{Dissimilarity-Based Inputs: Approximation Algorithms}
\label{S:dissim:approx}
In this section, we consider general dissimilarity inputs and
admissible objective for these inputs. 
For ease of exposition, we focus on an particular admissible 
objective function for dissimilarity inputs. 
Find $T$ maximizing the 
value function corresponding to Dasgupta's cost function of 
Section~\ref{S:sim:worstcase}:
$\val(T) = \sum_{N \in T} \val(N)$ where for each node $N$ of $T$ with children $N_1,N_2$, 
$\val(N) = w(V(N_1),V(N_2)) \cdot V(N)$.
This
optimization problem is NP-Hard \cite{Das:2016},
hence we focus on approximation 
algorithms.

We show (Theorem~\ref{T:avgapprox}) that average-linkage 
achieves a $2$ approximation for the problem.
We then
introduce a simple algorithm
 based on \emph{locally-densest cuts} and show (Theorem~\ref{T:dis:betterapprox}) that it 
achieves a $3/2+\eps$ approximation for the problem.

We remark that our proofs show that for any admissible objective function,
those algorithms have approximation guarantees, but 
the approximation guarantee depends 
on the objective function. 

We start with the following elementary upper bound on $\opt$.
\begin{fact}
  \label{F:optcost}
  For any graph $G=(V,E)$, and weight function $w : E \rightarrow \R_+ $,
  we have $\opt \le n \cdot \sum_{e \in E} w(e)$.
\end{fact}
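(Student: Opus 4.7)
\begin{proof}[Proof sketch of Fact~\ref{F:optcost}]
The plan is to rewrite $\val(T)$ as a sum over edges rather than a sum over internal nodes, and then apply a trivial bound. Fix any binary cluster tree $T$ with $n$ leaves. For each internal node $N$ with children $N_1, N_2$, an edge $e = \{u,v\} \in E$ contributes $w(e) \cdot |V(N)|$ to $\val(N)$ if and only if $u$ and $v$ lie on opposite sides of the cut at $N$, i.e., one is in $V(N_1)$ and the other in $V(N_2)$. This happens for exactly one internal node, namely $N = \lca_T(u,v)$.

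Swapping the order of summation therefore gives
\[
	\val(T) \;=\; \sum_{N \in T} w(V(N_1), V(N_2)) \cdot |V(N)| \;=\; \sum_{\{u,v\} \in E} w(u,v) \cdot |V(\lca_T(u,v))|.
\]
Since $\lca_T(u,v)$ is an internal node of $T$, its subtree contains at most all $n$ leaves, so $|V(\lca_T(u,v))| \le n$. Hence
\[
	\val(T) \;\le\; n \cdot \sum_{\{u,v\} \in E} w(u,v) \;=\; n \cdot \sum_{e \in E} w(e).
\]
Taking the maximum over $T$ yields $\opt \le n \cdot \sum_{e \in E} w(e)$.
\end{proof}

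There is no real obstacle: the bound is an immediate consequence of the per-edge reformulation already used in the proof of Theorem~\ref{thm:SCapprox} (where the identity $\val(T) = \sum_{\{u,v\}} w(u,v) \cdot |V(\lca_T(u,v))|$ also appears), combined with the trivial inequality $|V(N)| \le n$ valid for every node $N$ of a tree with $n$ leaves.
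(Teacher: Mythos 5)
Your proof is correct, and it uses the same per-edge rewriting $\val(T)=\sum_{\{u,v\}\in E} w(u,v)\,|V(\lca_T(u,v))|$ that the paper explicitly invokes (citing Dasgupta) in the proof of Theorem~\ref{thm:SCapprox}; the paper states Fact~\ref{F:optcost} without a separate proof precisely because it is an immediate consequence of that identity and the trivial bound $|V(\lca_T(u,v))|\le n$. There is nothing to add.
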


\subsection{Average-Linkage}
We show that average-linkage is a 2-approximation
 in the dissimilarity setting.
\begin{theorem}
  \label{T:avgapprox}
  For any graph $G=(V,E)$, and weight function $w : E \rightarrow \R_+ $,
  the average-linkage algorithm (Algorithm~\ref{alg:avglinkage}) 
  outputs a solution of value at 
  least $n \sum_{e \in E} w(e)/2 \ge \opt/2.$
\end{theorem}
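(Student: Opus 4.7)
The plan is to prove by strong induction on $n = |V|$ the statement that any tree $T$ produced by average-linkage satisfies $\val(T) \ge (n/2)\, w(E)$. The base cases $n=1,2$ are immediate ($\val(T)=0$ resp.\ $\val(T)=2w(e)\ge w(e)$).

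For the inductive step, let $T$ be the average-linkage tree on $G$, and let $A,B$ denote the leaf sets of the two subtrees hanging off the root of $T$ (with sizes $a,b$ and $a+b=n$). A useful recursive structure holds: since the root of $T$ separates $A$ from $B$, every internal node of $T$ other than the root has its leaf set entirely inside $A$ or entirely inside $B$, so no ``cross-merge'' between an $A$-cluster and a $B$-cluster ever occurs before the final root merge. Because the average dissimilarity of any pair of subsets of $A$ is the same in $G$ and in $G[A]$, the subsequence of $A$-merges performed by the global algorithm is a valid execution of average-linkage on $G[A]$. Hence the induction hypothesis applied to $T_A$ and $T_B$ gives $\val(T_A) \ge (a/2)\,w(E(A))$ and $\val(T_B) \ge (b/2)\,w(E(B))$. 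Since $\val(T) = n\cdot w(A,B) + \val(T_A) + \val(T_B)$, it suffices to show
\[ n \cdot w(A,B) \;\ge\; b\cdot w(E(A)) + a\cdot w(E(B)). \]

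The key inequality I will establish is that, at every $A$-merge combining clusters $P_t,Q_t\subseteq A$,
\[ w(P_t,Q_t) \;\le\; \frac{|P_t||Q_t|}{ab}\,w(A,B). \]
Indeed, the defining property of average-linkage gives, for every cross pair $(C^A_i, C^B_j)$ of current clusters at time $t$, the comparison $\frac{w(P_t,Q_t)}{|P_t||Q_t|} \le \frac{w(C^A_i,C^B_j)}{|C^A_i|\,|C^B_j|}$, equivalently $|C^A_i||C^B_j|\,w(P_t,Q_t)\le |P_t||Q_t|\,w(C^A_i,C^B_j)$. Summing over all current cross pairs and using $\sum_i |C^A_i|=a$, $\sum_j |C^B_j|=b$, and $\sum_{i,j} w(C^A_i,C^B_j)=w(A,B)$ yields the displayed bound. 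Summing this bound over all $A$-merges, together with the identity $\sum_{t\in A\text{-merges}} |P_t||Q_t| = \binom{a}{2}$ (each pair of leaves in $A$ is separated by exactly one $A$-merge), produces $w(E(A)) \le \tfrac{a-1}{2b}\,w(A,B)$, and symmetrically $w(E(B)) \le \tfrac{b-1}{2a}\,w(A,B)$. Combining, $b\cdot w(E(A)) + a\cdot w(E(B)) \le \tfrac{n-2}{2}\,w(A,B) \le n\,w(A,B)$, closing the induction.

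The main conceptual hurdle is identifying the right algebraic consequence of the global min-average-dissimilarity rule. A naive induction using only per-vertex or per-pair comparisons fails, because the desired inequality $n\,w(A,B)\ge b\,w(E(A))+a\,w(E(B))$ does not hold in arbitrary graphs—it genuinely uses the fact that the algorithm never chose an $A$-internal pair over a cross pair. The crucial trick is to aggregate the per-step local comparison across \emph{all} current cross pairs simultaneously, which converts it into the clean global bound $w(P_t,Q_t)\le \frac{|P_t||Q_t|}{ab}w(A,B)$; once this is in hand, the remaining arithmetic reduces to $n\ge (n-2)/2$. A minor technical point is to verify that $T_A$ is genuinely a valid average-linkage tree on $G[A]$ (so that the induction hypothesis applies); this follows directly from the absence of cross-merges and the fact that intra-$A$ average dissimilarities are unaffected by the presence of $B$.
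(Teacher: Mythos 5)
Your proof is correct and arrives at the same structural inequality as the paper's Lemma~\ref{L:structavg}, namely $\cut(A,B)/(|A||B|) \ge w(A)/(|A|(|A|-1)) + w(B)/(|B|(|B|-1))$ (your two bounds $w(E(A))\le\frac{a-1}{2b}w(A,B)$ and $w(E(B))\le\frac{b-1}{2a}w(A,B)$ sum to exactly this), and then closes the induction in the same way. The route to the lemma is genuinely different, though. The paper applies the algorithm's greedy rule only once: it extracts, by averaging, a single $A$-merge whose density is at least the average internal density of $A$, and at that single time step extracts, again by averaging, a single cross pair whose density is at most $\cut(A,B)/(|A||B|)$, then chains the two through the one greedy comparison. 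You instead invoke the greedy rule at \emph{every} $A$-merge step and against \emph{every} current cross pair, aggregate over cross pairs to get the clean per-merge bound $w(P_t,Q_t)\le\frac{|P_t||Q_t|}{ab}w(A,B)$, and then sum over all $A$-merges using $\sum_t|P_t||Q_t|=\binom{a}{2}$. Your aggregation is arguably more systematic and yields the two one-sided bounds separately rather than only their sum, which is marginally stronger information. One other point: you explicitly verify that $T_A$ is a legitimate average-linkage execution on $G[A]$ (needed to apply the induction hypothesis), noting that no cross-merge ever precedes the root merge and that intra-$A$ densities are unaffected by $B$; the paper relies on this too but leaves it implicit. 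The arithmetic closing the induction ($n\,w(A,B)\ge b\,w(E(A))+a\,w(E(B))$, combined with $\val(T)=n\,w(A,B)+\val(T_A)+\val(T_B)$ and the inductive hypothesis) is carried out correctly.
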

\begin{algorithm}
  \caption{Average-Linkage Algorithm for Hierarchical Clustering (dissimilarity setting)}\label{alg:avglinkage}
  \begin{algorithmic}[1]
    \State \textbf{Input:} Graph $G=(V,E)$ with edge weights 
    $w: E \mapsto \R_+$
    \State Create $n$ singleton trees. 
    \While{there are at least two trees  
    }
    \State Take trees roots $N_1$ and $N_2$ minimizing $\sum\limits_{x\in V(N_1),y\in V(N_2)} w(x,y)/(|V(N_1)||V(N_2)|)$
    \label{s:avglinkage:merge}
    \State Create a new tree with root $N$ and children $N_1$ and $N_2$ 
    \EndWhile\\
    \Return the resulting binary tree T
  \end{algorithmic}
\end{algorithm}

When two trees  are chosen at Step~\ref{s:avglinkage:merge}
of Algorithm~\ref{alg:avglinkage}, we say that they are 
\emph{merged}. We say that all the trees considered at the beginning of an iteration of the while loop
are the trees that are \emph{candidate for the merge} or simply
the \emph{candidate trees}.

We first show the following lemma and then prove the theorem.
\begin{lemma}
  \label{L:structavg}
  Let $T$ be the output tree and $A,B$ be the children of the root.
  We have,
  $$ \frac{\cut(V(A),V(B))}{|V(A)| \cdot |V(B)|} \ge 
  \frac{w(V(A))}{|V(A)|\cdot(|V(A)|-1)} +
  \frac{w(V(B))}{|V(B)|\cdot(|V(B)|-1)}.$$
\end{lemma}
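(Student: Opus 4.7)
The plan is to exploit the greedy rule of average-linkage to show that the merge scores are non-decreasing along the sequence of merges, so that the root's score dominates the average internal weight of both subtrees. For every internal node $N$ of the output tree $T$ with children $N_1, N_2$, let $s(N) = w(V(N_1), V(N_2))/(|V(N_1)|\cdot|V(N_2)|)$ denote the score associated with the merge at $N$; this is precisely the quantity minimized by the algorithm in its greedy step.

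First I would prove that these scores are non-decreasing along the merge sequence. Suppose the algorithm has just merged candidates $A_1, A_2$ with score $s_0$. By greedy minimality, every other candidate $C$ present at that moment satisfies $w(V(A_1), V(C)) \ge s_0 |V(A_1)||V(C)|$ and $w(V(A_2), V(C)) \ge s_0 |V(A_2)||V(C)|$; summing these and dividing by $(|V(A_1)|+|V(A_2)|)|V(C)|$ shows that the new candidate pair $(A_1\cup A_2, C)$ also has score at least $s_0$. All pairs not involving $A_1$ or $A_2$ already had scores at least $s_0$, so the next merge has score at least $s_0$. Since the merge at the root of $T$ is the last one performed, $s(N) \le s_{\mathrm{root}}$ for every internal node $N$ of $T$, where $s_{\mathrm{root}} = w(V(A), V(B))/(|V(A)||V(B)|)$ equals the LHS of the lemma.

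Next I would decompose $w(V(A))$ by lowest common ancestor in the subtree of $T$ rooted at $A$: each unordered pair of distinct vertices in $V(A)$ has a unique LCA, so
\[
  w(V(A)) = \sum_{N} w(V(N_1), V(N_2)) = \sum_{N} s(N)\,|V(N_1)||V(N_2)|,
  \qquad
  |V(A)|(|V(A)|-1) = 2\sum_{N} |V(N_1)||V(N_2)|,
\]
where both sums range over internal nodes $N$ of the subtree rooted at $A$. Taking the ratio exhibits $w(V(A))/(|V(A)|(|V(A)|-1))$ as one half of a convex combination of the scores $\{s(N)\}$ in $A$, and hence at most $s_{\mathrm{root}}/2$ by the monotonicity step. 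The same argument applied inside $B$ gives $w(V(B))/(|V(B)|(|V(B)|-1)) \le s_{\mathrm{root}}/2$. Summing the two inequalities recovers precisely the statement of the lemma.

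The main obstacle is the monotonicity step; the remaining calculation is a clean LCA decomposition combined with a weighted-average bound. A minor technical point is the convention used for $w(V(A))$: the identities above rely on $w(V(A))$ counting each edge with both endpoints in $V(A)$ exactly once, which is the convention under which the halving factors work out exactly as stated.
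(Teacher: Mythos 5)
Your proof is correct. It takes a somewhat different route than the paper's. You first establish the stronger fact that the merge scores $s(N)$ in average-linkage form a non-decreasing sequence (a reducibility-type property of the average-linkage distance), and then bound $w(V(A))/\bigl(|V(A)|(|V(A)|-1)\bigr)$ by exhibiting it as half of a weighted average of all scores $\{s(N) : N \in T_A\}$, which is at most $s_{\mathrm{root}}/2$. The paper instead avoids proving full monotonicity: it uses an averaging argument to locate a \emph{single} internal node $A'$ of $T_A$ with $s(A') \ge w(V(A))/\binom{|V(A)|}{2}$, then looks at the specific iteration where $A'$'s children $A_1,A_2$ were merged, uses a second averaging argument over the cross pairs present at that iteration to find one pair $(A_i, B_j)$ with $\dist(A_i,B_j) \le s_{\mathrm{root}}$, and invokes greedy minimality just once to get $s(A') \le \dist(A_i,B_j)$. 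Both arguments rest on the same two ingredients (averaging plus greedy optimality of the chosen merge), but yours proves a reusable, standard monotonicity lemma and then applies a clean global averaging, whereas the paper makes a targeted local comparison; your route is arguably more modular at the cost of proving a slightly stronger intermediate statement. Your remark about the convention for $w(V(A))$ is also on point: the lemma (and the paper's own proof, via $a = |V(A)|(|V(A)|-1)/2$) implicitly treats $w(V(A))$ as the sum over unordered pairs.
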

\begin{proof}
  Let $a = |V(A)|(|V(A)|-1)/2$ and $b = |V(B)|(|V(B)|-1)/2$.
  For any node $N_0$ of $T$, let $\cho(N_0)$ and $\cht(N_0)$ 
  be the two children of $N_0$.
  We first consider the subtree $T_A$ of $T$ rooted at $A$. 
  We have
  $$\begin{cases}
 w(V(A)) &=  \sum_{A_0 \in T_A} \cut(V(\text{child}_1(A_0)),V(\text{child}_2(A_0))),\\
 a           &=  \sum_{A_0 \in T_A} |V(\text{child}_1(A_0))|\cdot 
  |V(\text{child}_2(A_0))|.
  \end{cases}
  $$
By an averaging argument, 
  there exists $A'\in T_A$ with 
  children $A_1,A_2$ such that 
  \begin{equation}
    \label{eq:avginA}
      \frac{\cut(V(A_1),V(A_2))}{|V(A_1)|\cdot |V(A_2)|} 
      \ge \frac{w(V(A))}{a}.
  \end{equation}
  We now consider the iteration of the while loop
  at which the algorithm merged 
  the trees $A_1$ and $A_2$.
  Let $A_1,A_2,\ldots,A_k$ and $B_1,B_2,\ldots,B_\ell$ be the trees
  that were candidate for the merge at that iteration, and such that
  $V(A_i) \cap V(B) = \emptyset$ and $V(B_i) \cap V(A) = \emptyset$.
  Observe that the leaves sets of those trees form a partition of 
  the sets $V(A)$ and $V(B)$, so
    $$\begin{cases}
\cut(A,B)&= \sum_{i,j} \cut(V(A_i),V(B_j)),\\
  |V(A)| \cdot |V(B)|&= \sum_{i,j} |V(A_i)|\cdot |V(B_j)|.
  \end{cases}
  $$
By an averaging argument again, 
  there exists $A_i,B_j$ such that 
  \begin{equation}\label{eq:avginAB}
    \frac{\cut(V(A_i),V(B_j))}{|V(A_i)| \cdot |V(B_j)|} \le 
    \frac{\cut(V(A),V(B))}{|V(A)|\cdot |V(B)|}.
  \end{equation}
  Now, since the algorithm merged $A_1,A_2$ rather than $A_i,B_j$,
  by combining Eq.~\ref{eq:avginA} and~\ref{eq:avginAB}, we have 
  $$\frac{w(V(A))}{a} \le 
  \frac{\cut(V(A_1),V(A_2))}{|V(A_1)|\cdot |V(A_2)|} 
  \le \frac{\cut(V(A_i),V(B_j))}{|V(A_i)| \cdot |V(B_j)|}\le 
  \frac{\cut(V(A),V(B))}{|V(A)| \cdot |V(B)|}.$$
  Applying the same reasoning to $B$ and taking the sum 
  yields the lemma.
\end{proof}

\begin{proof}[Proof of Theorem~\ref{T:avgapprox}]
  We proceed by induction on the number of the nodes $n$ in the graph.
  Let $A,B$ be the children of the root of the output tree $T$.
  By induction,
  \begin{align}\label{eq:twoapprox}
  \val(T)\geq (|V(A)|+|V(B)|) \cdot \cut(V(A),V(B)) + \frac{|V(A)|}{2} w(V(A)) + 
  \frac{|V(B)|}{2} w(V(B)).
  \end{align}
  Lemma~\ref{L:structavg} implies  
  $(|V(A)|+|V(B)|) \cdot \cut(V(A),V(B)) \ge |V(B)| w(V(A)) + |V(A)| 
  w(V(B))$. Dividing both sides by 2 and plugging it into \eqref{eq:twoapprox} yields
  $$\val(T)\geq \frac{|V(A)|+|V(B)|}{2} \cut(V(A),V(B)) + \frac{|V(A)| + 
    |V(B)|}{2} (w(V(A)) + w(V(B))).$$
  Observing that $n = |V(A)| + |V(B)|$ and combining
  $\sum_{e \in E} w(e) = \cut(V(A),V(B)) + w(V(A)) + w(V(B))$ with
  Fact~\ref{F:optcost} completes the proof.
\end{proof}

\subsection{A Simple and Better
  Approximation Algorithm for 
  Worst-Case Inputs} 
In this section, we introduce a very simple algorithm (Algorithm~\ref{alg:greedycut}) that achieves 
a better approximation guarantee.
The algorithm follows a divisive approach by recursively computing 
locally-densest cuts
using a local search heuristic (see Algorithm~\ref{alg:DCLS}).
This approach is similar to the recursive-sparsest-cut algorithm
of Section~\ref{S:sim:worstcase}. Here, instead of trying to 
solve the densest cut problem (and so being forced 
to use approximation algorithms),
we solve the simpler problem of computing a locally-densest cut.
This yields both a very simple local-search-based algorithm 
and a good approximation guarantee. 

We use the notation $A\oplus x$ to mean the set obtained by adding $x$ to $A$ if $x\notin A$, and by removing $x$ from $A$ if $x\in A$. 
We say that a cut $(A,B)$ is a 
\emph{$\eps/n$-locally-densest cut}
if for any $x$,
$$
  \frac{\cut(A \oplus x, B \oplus x)}{|A\oplus x|\cdot |B\oplus x|} 
  \le \left(1+\frac{\eps}{n}\right) \frac{\cut(A,B)}{|A||B|}.
$$
The following local search algorithm computes an $\eps/n$-locally-densest
cut. 

\begin{algorithm}
  \caption{Local Search for Densest Cut}
  \label{alg:DCLS}
  \begin{algorithmic}[1]
    \State \textbf{Input:} Graph $G=(V,E)$ with edge weights 
    $w: E \mapsto \R_+$
    \State Let $(u,v)$ be an edge of maximum weight
    \State $A \gets \{v\}$, $B \gets V \setminus \{v\}$
    \While{$\exists x$:       $\frac{\cut(A \oplus x, B \oplus x)}{|A\oplus x|\cdot |B\oplus x|} 
      > (1+\eps/n) \frac{\cut(A,B)}{|A||B|}$}
    \State $A \gets A \oplus x$, $B \gets B \oplus x$
\EndWhile \\
    \Return $(A,B)$
  \end{algorithmic}
\end{algorithm}

\begin{theorem}
  \label{T:dis:densestcut}
  Algorithm~\ref{alg:DCLS} computes an $\eps/n$-locally-densest
  cut in time $\tilde{O}(n(n+m)/\eps)$.
\end{theorem}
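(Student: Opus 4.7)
Correctness is immediate from the stopping condition: when the while-loop exits, no vertex $x$ witnesses an improvement of factor $(1+\eps/n)$, which is exactly the definition of an $\eps/n$-locally-densest cut. So the real content is the running-time bound, which I would split into (i) a bound on the number of iterations of the while-loop, and (ii) a bound on the work per iteration.

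For (i), I would bound the range of the density value $D(A,B) = \cut(A,B)/(|A|\cdot|B|)$. Let $w_{\max}$ denote the maximum edge weight. For the upper bound, observe that $\cut(A,B) = \sum_{a\in A, b \in B} w(a,b) \le |A|\cdot|B|\cdot w_{\max}$ for any cut $(A,B)$, so $D(A,B) \le w_{\max}$ throughout the execution. For the lower bound on the initial density, the starting cut has $A=\{v\}$ and $|A|\cdot|B| = n-1$; moreover the edge $(u,v)$ of weight $w_{\max}$ crosses the cut, so $D(A,B) \ge w_{\max}/(n-1)$ initially. Since each iteration multiplies $D(A,B)$ by at least $(1+\eps/n)$, the total number of iterations $k$ satisfies $(1+\eps/n)^k \le n-1$, giving $k = O((n/\eps)\log n) = \tilde O(n/\eps)$.

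For (ii), I would maintain, for every vertex $y\in V$, the two quantities $d_A(y) = \sum_{a\in A} w(a,y)$ and $d_B(y) = \sum_{b\in B} w(b,y)$. Given these, for any candidate swap vertex $x$, the new cut weight $\cut(A\oplus x, B\oplus x)$ can be read off in $O(1)$ from $d_A(x)$ and $d_B(x)$ (one of them is subtracted, the other added), and the new cardinalities are trivially $|A|\pm 1$, $|B|\mp 1$; hence scanning all $n$ candidate vertices to either find an improving swap or certify local optimality takes $O(n)$ time. After a swap on $x$, updating $d_A(\cdot)$ and $d_B(\cdot)$ only affects the neighbors of $x$ and costs $O(\deg(x)) = O(n)$ in the worst case and $O(m)$ in the aggregate-per-iteration sense; a safe upper bound per iteration is $O(n+m)$. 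Initialization of $d_A,d_B$ takes $O(n+m)$. Multiplying by the iteration count gives the claimed $\tilde O(n(n+m)/\eps)$.

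The only mildly delicate point—and the ``hard part''—is the argument that the density is always bounded by $w_{\max}$ and that the initial value is $\Omega(w_{\max}/n)$, which is what ensures a polynomial range and therefore a polynomial number of multiplicative-improvement steps; starting from an arbitrary cut (e.g.\ a uniformly random one) would not guarantee a non-trivial lower bound and would break the potential argument, which is precisely why the algorithm is seeded with the maximum-weight edge. Everything else is bookkeeping.
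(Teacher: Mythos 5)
Your proof is correct and follows essentially the same approach as the paper: bound the number of while-loop iterations by observing the density lies in a range of ratio $O(n)$ (namely between $w_{\max}/(n-1)$ at the seeded start and $w_{\max}$ always) and increases multiplicatively by $(1+\eps/n)$ each step, then multiply by an $O(n+m)$ per-iteration cost. You are in fact a bit more careful than the paper in two small ways: the paper's proof speaks of ``the weight of the cut'' increasing by a $(1+\eps/n)$ factor when it is really the \emph{density} that does, and it asserts the $O(n+m)$ per-iteration bound without exhibiting the $d_A(\cdot), d_B(\cdot)$ bookkeeping that makes the single scan for an improving move possible; your version supplies both.
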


\begin{proof}
  The proof is 
straightforward and given for completeness.
By definition, the algorithm computes an $\eps/n$-locally densest 
  cut so we only need to argue about the running time.
  The weight of the cut is initially at least $w_{\max}$, the weight
  of the maximum edge weight, and in the end at most $n w_{\max}$.
  Since the weight of the cut increases by a factor of 
  $(1+\eps/n)$ at each iteration, the 
  total number of iterations of the while loop is at most 
  $\log_{1+\eps/n} (n w_{\max}/w_{\max}) = \tilde{O}(n /\eps)$.
  Each iteration takes time $O(m + n)$, 
  so the running time of the algorithm is $\tilde{O}(n (m+n) /\eps)$. 
\end{proof}

\begin{algorithm}
  \caption{Recursive 
    Locally-Densest-Cut  for Hierarchical Clustering}
  \label{alg:greedycut}
  \begin{algorithmic}[1]
    \State \textbf{Input:} Graph $G = (V,E)$, with edge weights
    $w : E \mapsto \R_+$, $\eps > 0$
    \State Compute an $\eps/n$-locally-densest cut $(A,B)$ using
    Algorithm~\ref{alg:DCLS}
    \State Recurse on $G[A]$ and $G[B]$ to obtain 
    rooted trees $T_A$ and $T_B$.
    \State Return the tree $T$ whose root node has two children, $T_A$ and $T_B$. 
  \end{algorithmic}
\end{algorithm}

\begin{theorem}
  \label{T:dis:betterapprox}  
  Algorithm~\ref{alg:greedycut} returns a tree of value at least
  $$\frac{2n}{3} (1-\eps) \sum_{e} w(e) \ge \frac{2}{3} (1-\eps) \opt,$$
  in time $\tilde{O}(n^2(n+m)/\eps)$.
\end{theorem}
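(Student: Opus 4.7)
The plan is to mimic the structure of the average-linkage proof (Theorem~\ref{T:avgapprox}), replacing Lemma~\ref{L:structavg} with a stronger structural lemma that exploits the $\eps/n$-local optimality of the top cut, and then run an induction on $n$.

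\textbf{Step 1: A structural lemma for $\eps/n$-locally-densest cuts.} The key claim I would prove is: if $(A,B)$ is an $\eps/n$-locally-densest cut with $|A|=a$, $|B|=b$, $a+b=n$, then
\[
2\bigl(b\, w(A) + a\, w(B)\bigr) \le n\,(1+\eps/2)\, \cut(A,B).
\]
To see this, fix $x\in A$ and apply the local optimality to the move $x\mapsto B$: writing $\cut(A',B')=\cut(A,B)+w(x,A\setminus\{x\})-w(x,B)$ and cross-multiplying $(a-1)(b+1)/(ab)$ with the factor $(1+\eps/n)$, I get
\[
ab\bigl[w(x,A\setminus\{x\})-w(x,B)\bigr] \le \Bigl[(a-b-1)+\tfrac{\eps}{n}(ab+a-b-1)\Bigr]\cut(A,B).
\]
Now I would sum this inequality over all $x\in A$, using $\sum_{x\in A} w(x,A\setminus\{x\})=2w(A)$ and $\sum_{x\in A} w(x,B)=\cut(A,B)$, and divide by $a$. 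After cancellation this yields $2b\,w(A)\le \bigl[(a-1)+O(\eps\cdot ab/n)\bigr]\cut(A,B)$. A symmetric argument for moves $y\mapsto A$ with $y\in B$ gives $2a\,w(B)\le \bigl[(b-1)+O(\eps\cdot ab/n)\bigr]\cut(A,B)$, and adding the two and using $ab/n\le n/4$ produces the claimed bound. (The boundary case $a=1$ is harmless because then $w(A)=0$; we only need the bound for the non-trivial side.)

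\textbf{Step 2: Induction on $n$.} I would prove by induction that $\val(T)\ge \tfrac{2n}{3}(1-\eps)\sum_{e\in E} w(e)$. The cases $n\le 2$ are immediate. For the inductive step, let $(A,B)$ be the root cut produced by Algorithm~\ref{alg:DCLS}, and let $T_A,T_B$ be the recursive subtrees. Then
\[
\val(T) = n\cdot\cut(A,B) + \val(T_A)+\val(T_B) \ge n\cdot\cut(A,B) + \tfrac{2a}{3}(1-\eps)w(A) + \tfrac{2b}{3}(1-\eps)w(B).
\]
Subtracting the target $\tfrac{2n}{3}(1-\eps)(w(A)+w(B)+\cut(A,B))$ from both sides, and using $n-a=b$, $n-b=a$, reduces the inductive step to proving
\[
\tfrac{n(1+2\eps)}{3}\,\cut(A,B) \ge \tfrac{2(1-\eps)}{3}\bigl(b\,w(A)+a\,w(B)\bigr),
\]
which is precisely the content of Step~1 (with room to spare, since $(1-\eps)(1+\eps/2)\le 1+2\eps$ for small $\eps$). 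The final inequality $\tfrac{2n}{3}(1-\eps)\sum w(e)\ge \tfrac{2}{3}(1-\eps)\opt$ is Fact~\ref{F:optcost}.

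\textbf{Step 3: Running time, and main obstacle.} The output tree has $n-1$ internal nodes, each corresponding to exactly one invocation of Algorithm~\ref{alg:DCLS}; by Theorem~\ref{T:dis:densestcut}, each such call runs in time $\tilde{O}(n(n+m)/\eps)$ (bounded by the parameters of the original graph, since every recursive subgraph is contained in $G$), giving the claimed $\tilde{O}(n^2(n+m)/\eps)$ total. The main obstacle I expect is the bookkeeping in Step~1: I must sum the $n$ local-optimality inequalities correctly and show that all the $\eps$-error terms collapse into a clean multiplicative $(1+O(\eps))$ factor on the right-hand side, so that the $(1-\eps)$ slack from the induction exactly absorbs them. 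Once the structural lemma is in the form displayed above, the rest is an algebraic verification.
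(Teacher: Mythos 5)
Your proposal is correct and takes essentially the same approach as the paper: the paper's Lemma~\ref{L:dis:betterapprox} states $(|A|+|B|)\cut(A,B)\ge 2(1-\eps)(|B|w(A)+|A|w(B))$ and is proved exactly as in your Step~1 (apply the local-optimality condition to each single-vertex move $x\mapsto B$, sum over $x\in A$, use $\sum_{x\in A}w(x,A\setminus\{x\})=2w(A)$ and $\sum_{x\in A}w(x,B)=\cut(A,B)$, bound the $\eps$-terms via $|B|+1\le n$, and symmetrize), and the paper's Theorem~\ref{T:dis:betterapprox} is then derived by the same induction on $n$ combined with Fact~\ref{F:optcost} and the running-time accounting from Theorem~\ref{T:dis:densestcut}. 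The only differences are cosmetic bookkeeping in where the $\eps$-factors appear ($(1+\eps/2)$ on the cut side in yours versus $(1-\eps)$ on the $w(A),w(B)$ side in the paper's lemma), which is immaterial since $\eps$ is a free parameter.
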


The proof relies on the following lemma.
\begin{lemma}
  \label{L:dis:betterapprox}
  Let $(A,B)$ be an $\eps/n$-locally-densest cut. 
  Then, $$(|A|+|B|)\cut(A,B) \ge 2(1-\eps)(|B|w(A) + |A|w(B)).$$
\end{lemma}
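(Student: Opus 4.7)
The plan is to exploit the local optimality condition one vertex at a time: for each $x \in A$, the condition says that moving $x$ into $B$ cannot improve the density by more than a $(1+\eps/n)$ factor. Summing these single-vertex inequalities over all $x \in A$ will translate edge-weight statements about $x$ (the weight $d_A(x)$ of edges from $x$ to $A\setminus\{x\}$ and the weight $d_B(x)$ of edges from $x$ to $B$) into an inequality between $w(A)$ and $\cut(A,B)$. A symmetric argument on $B$ plus simple algebra will then yield the claim.

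First, I would fix $x \in A$ and write out $\cut(A\oplus x, B\oplus x)= \cut(A,B) - d_B(x) + d_A(x)$, with new part sizes $|A|-1$ and $|B|+1$. The $\eps/n$-local-density condition then rearranges to
\[
  d_A(x) - d_B(x) \;\le\; \cut(A,B)\left[(1+\tfrac{\eps}{n})\tfrac{(|A|-1)(|B|+1)}{|A||B|} - 1\right].
\]
Next, I would sum over $x \in A$, using the identities $\sum_{x\in A} d_A(x) = 2 w(A)$ and $\sum_{x\in A} d_B(x) = \cut(A,B)$. After simplification, the bracketed factor collapses to $\tfrac{|A|-1}{|B|}\bigl(1+\tfrac{\eps(|B|+1)}{n}\bigr)$, and since $|B|+1 \le n$ the correction is at most $(1+\eps)$. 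This yields
\[
  2 w(A) \;\le\; (1+\eps)\,\tfrac{|A|-1}{|B|}\cut(A,B),
\]
which I would rewrite as $2|B| w(A) \le (1+\eps)(|A|-1)\cut(A,B)$.

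The symmetric argument applied to vertices $x\in B$ gives $2|A| w(B) \le (1+\eps)(|B|-1)\cut(A,B)$. Adding the two bounds and using $|A|-1+|B|-1 \le |A|+|B|$ yields
\[
  2\bigl(|B| w(A) + |A| w(B)\bigr) \;\le\; (1+\eps)(|A|+|B|)\cut(A,B).
\]
Finally, since $1/(1+\eps) \ge 1-\eps$, this is equivalent to the desired inequality $(|A|+|B|)\cut(A,B) \ge 2(1-\eps)(|B| w(A) + |A| w(B))$. No serious obstacle is expected; the only care point is tracking the correction factor $(1+\eps(|B|+1)/n)$ carefully and bounding $|B|+1 \le n$ so that the $(1+\eps/n)$-local optimality cleanly promotes to a global $(1+\eps)$ factor.
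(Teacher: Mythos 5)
Your proposal is correct and follows essentially the same route as the paper's proof: sum the single-vertex local-optimality inequality over $x\in A$ using $\sum_{x\in A} d_A(x)=2w(A)$ and $\sum_{x\in A}d_B(x)=\cut(A,B)$, bound $|B|+1\le n$ to turn the $\eps/n$ correction into a $(1+\eps)$ factor, do the symmetric computation for $B$, and add. Your algebra checks out, and the final conversion from the $(1+\eps)$ factor to the stated $(1-\eps)$ via $1/(1+\eps)\ge 1-\eps$ is exactly what is needed.
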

\begin{proof}
  Let $v \in A$. By definition of the algorithm,
  $$(1+\eps/n)\frac{\cut(A,B)}{|A||B|} \ge 
  \frac{\cut(A \setminus \{v\}, B\cup\{v\})}{(|A|-1)(|B|+1)}.$$
  Rearranging, 
  $$ 
  \frac{(|A|-1)(|B|+1)}{|A||B|} 
  (1+\eps/n) \cut(A,B) \ge
  \cut(A \setminus \{v\}, B\cup\{v\}) =
   \cut(A,B) + w(v,A) - w(v,B)
  $$
  Summing over all vertices of $A$, we obtain
  $$
  |A|\frac{(|A|-1)(|B|+1)}{|A||B|} 
  (1+\eps/n) \cut(A,B) \ge |A|\cut(A,B) + 2w(A) - \cut(A,B).
  $$
  Rearranging  and simplifying,
  $$
  (|A|-1)(|B|+1)\frac{\eps}{n} \cut(A,B) + 
  (|A|-1)(1+\eps/n) \cut(A,B) \ge 2|B|w(A).
  $$
  Since $|B|+1 \le n$, this gives
  $$
  |A| \cut(A,B) \ge 2(1-\eps)|B|w(A).
  $$
  Proceeding similarly with $B$ and 
  summing the two inequalities
  yields the lemma.
\end{proof}

\begin{proof}[Proof of Theorem~\ref{T:dis:betterapprox}]
  We first show the approximation guarantee.
  We proceed by induction on the number of vertices.  
  The base case is trivial.
  By inductive hypothesis,
  $$\val(T) \ge n\cut(A,B) + 
  \frac{2}{3} \cdot (1-\eps)(|A|w(A) + |B| w(B)),$$ 
  where $n = |A| + |B|$.
  Lemma~\ref{L:dis:betterapprox} implies
  $$n \cut(A,B) = 
  (|A|+|B|)\cut(A,B) \ge 2(1-\eps)(|B|w(A) + |A|W(B)).$$
  Hence, $$\frac{|A|+|B|}{3} \cut(A,B) \ge \frac{2}{3}(1-\eps)(|B|w(A) 
  + |A|w(B)).$$
  Therefore, 
  $$\val(T) \ge  \frac{2n}{3} (1-\eps) (\cut(A,B) + w(A) + w(B))
  = (1-\eps) \frac{2n}{3} \sum_e w(e).$$

  To analyze the running time, observe that
  by Theorem~\ref{T:dis:densestcut}, a recursive call on a graph
  $G'=(V',E')$ takes time $\tilde{O}(|V'|(|V'|+|E'|)/\eps)$ and that the depth of the
  recursion is $O(n)$.
\end{proof}

\begin{remark}
  The average-linkage and the recursive 
  locally-densest-cut algorithms achieve an $O(g_n)$-
  and $O(h_n)$-approximation respectively,
  for any admissible cost function $f$, where 
  $g_n = \max_n f(n)/f(\lceil n/2 \rceil)$.
  $h_n = \max_n f(n)/f(\lceil 2n/3 \rceil)$.
  An almost identical proof yields the result.
\end{remark}

\begin{remark}
  In Section~\ref{ssec:wc-lower-bounds}, we show that
  other commonly used algorithms, such as complete-linkage,
  single-linkage, or bisection $2$-Center, can perform arbitrarily badly
  (see Theorem~\ref{T:badcase:dis:pract}). Hence average-linkage
  is more robust in that sense.
%
\end{remark}

\section{Perfect \StructInputs and Beyond}
\label{S:perfectdata}
\label{sec:perfectdata}
In this section, we focus on \structinputs.
We state that when the input is a perfect \structinput, commonly used algorithms 
(single linkage, average linkage, and complete linkage; as well as some
divisive algorithms -- the bisection $k$-Center and 
sparsest-cut algorithms) yield a  tree of optimal cost, hence (by Definition~\ref{defn:admissibleCF})  a 
ground-truth tree. Some of those results are folklore (and straightforward when there are no ties), but we have been unable to pin down a reference, so we include them for completeness (Section~\ref{section:perfect-is-easy}).
We also introduce a faster optimal algorithm for ``strict''
\structinputs (Section~\ref{S:sub:faster}).
%
%
The proofs
present no difficulty.
The meat of this section is 
Subsection~\ref{S:sub:robust}, where we go beyond \structinputs; 
 we introduce 
$\delta$-adversarially-perturbed
\structinputs and design a simple, more robust algorithm that,  \emph{for any}
admissible objective function, is a $\delta$-approximation.

\subsection{Perfect \StructInputs are Easy}\label{section:perfect-is-easy}

%

\begin{algorithm}
  \caption{Linkage Algorithm for Hierarchical Clustering 
    (similarity setting)}
  \label{alg:linkage4sim}
  \begin{algorithmic}[1]
    \State \textbf{Input:} A graph $G=(V,E)$ with edge weights 
    $w: E \mapsto \R_+$
    \State Create $n$ singleton trees. Root labels: $\calC = \{ \{v_1\}, \ldots, \{ v_n \} \}$
    \State Define $\dist: \calC \times \calC  \mapsto \R_+$:
    $\dist(C_1,C_2) =
    \begin{cases}
      \frac{1}{|C_1||C_2|}      \sum\limits_{x\in C_1,y\in C_2} w((x,y)) &\textbf{Average Linkage}\\
      \min_{x \in C_1, y \in C_2} w((x,y)) & \textbf{Single Linkage} \\
      \max_{x \in C_1, y \in C_2} w((x,y)) & \textbf{Complete Linkage} \\
    \end{cases}$.
    \While{there are at least two trees 
    }
    \State Take the two trees with root labels $C_1,C_2$ such that 
    $\dist(C_1,C_2)$ is maximum
    \State Create a new tree by making those two tree children of 
    a new root node labeled $C_1\cup C_2$ 
    \State Remove $C_1,C_2$ from $\calC$, add $C_1 \cup C_2$ to 
    $\calC$, and update $\dist$
    \EndWhile\\
    \Return the resulting binary tree T
  \end{algorithmic}
\end{algorithm}

In the following, we refer to the \emph{tie breaking} rule of 
Algorithm~\ref{alg:linkage4sim}
as the rule followed by the algorithm for deciding which of 
$C_i,C_j$ or $C_k,C_{\ell}$ to merge, when 
$\max_{C_1,C_2 \in \calC} \dist(C_1,C_2) = \dist(C_i,C_j) = \dist(C_k,C_{\ell})$.

\begin{theorem}\footnote{This Theorem may be folklore, at least when there are no ties, but we have been unable to find a reference.}
  \label{T:linkageAlgs:Perfectdata}
   Assume that the input is 
   a (dissimilarity or similarity)  \structinput. Then, for any admissible objective function,
   the agglomerative heuristics average-linkage, single-linkage, and complete-linkage (see Algorithm~\ref{alg:linkage4sim})
   return an optimal solution. This holds no matter the tie breaking rule of Algorithm~\ref{alg:linkage4sim}.
\end{theorem}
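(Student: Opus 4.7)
The plan is to show by induction on the number of merges that the linkage algorithm maintains the invariant that every cluster $C\in\calC$ is the leaf set of a subtree of some common generating tree for $G$. Once this invariant is established throughout, the algorithm terminates with the single cluster $V$ and outputs a rooted binary tree that is itself generating for $G$; admissibility (Definition~\ref{defn:admissibleCF}) then implies that this output has optimal cost.

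The key technical fact underlying the invariant is that whenever $C_i,C_j\in\calC$ are leaf sets of two disjoint subtrees of a generating tree $T^*$ rooted at internal nodes $r_i$ and $r_j$, every cross-edge $\{x,y\}$ with $x\in C_i$, $y\in C_j$ carries the same weight $W(\lca_{T^*}(r_i,r_j))$, since $\lca_{T^*}(x,y)=\lca_{T^*}(r_i,r_j)$ for all such pairs. Consequently the three definitions of $\dist(C_i,C_j)$ for average-, single-, and complete-linkage all coincide with $W(\lca_{T^*}(r_i,r_j))$. By the root-to-leaf monotonicity of $W$ (ancestors have smaller weight in the similarity case), the maximum of $\dist$ over $\calC\times\calC$ is attained at pairs whose $T^*$-LCA has the largest weight $W^*$ achieved by an internal node separating distinct clusters of $\calC$, and in the absence of ties this is necessarily a sibling pair in $T^*$; merging it produces a new cluster that is again a subtree of $T^*$, preserving the invariant.

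The main obstacle, and the reason no tie-breaking assumption is required, is handling ties. Suppose the algorithm merges a pair $(C_i,C_j)$ with $\dist(C_i,C_j)=W^*$ that is not a sibling pair in $T^*$. Let $L=\lca_{T^*}(r_i,r_j)$ and consider the top portion of the subtree of $T^*$ rooted at $L$, consisting of $L$ itself together with all internal nodes of $T^*$ that are strict descendants of $L$ and strict ancestors of some cluster root in $\calC$. Each such internal node $M$ has $W(M)\geq W(L)=W^*$ by monotonicity, and if $W(M)>W^*$ held then the two children of $M$ would span a pair of clusters of $\calC$ with $\dist>W^*$, contradicting the maximality of $W^*$; hence $W(M)=W^*$ throughout this top portion. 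One may therefore rebinarize this top portion in an arbitrary way, assigning every new internal node weight $W^*$: the resulting tree $T^{**}$ still satisfies $w(\{x,y\})=W(\lca_{T^{**}}(x,y))$ for all pairs and the monotonicity of $W$, and thus is again a generating tree for $G$. In particular $T^{**}$ can be chosen so that $C_i$ and $C_j$ are siblings, preserving the invariant with respect to $T^{**}$. The dissimilarity case is symmetric: the algorithm minimizes $\dist$ rather than maximizing it, the direction of monotonicity of $W$ is reversed, and the same argument goes through with ``max'' and ``min'' swapped.
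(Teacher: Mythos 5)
Your argument is correct and follows essentially the same route as the paper's: both proofs maintain by induction that the current candidate clusters sit as subtrees inside a generating tree, both identify the same key lemma (all internal nodes of the generating tree lying strictly between the roots of the two merged clusters share the same weight, proved by the same interplay of monotonicity and the extremality of the merged pair's linkage distance), and both then re-attach the two cluster subtrees as siblings to obtain the next generating tree. Your version is a cosmetic strengthening of the paper's Claim~\ref{Claim:tech:main} -- you show the entire ``top portion'' below $L$ is at weight $W^*$, not just the $r_i$--$r_j$ path -- and then you rebinarize that whole portion, whereas the paper makes a more local rearrangement of the path; these are interchangeable.

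One small imprecision worth tightening: you state the invariant as ``every cluster $C$ is the \emph{leaf set} of a subtree of a generating tree,'' but to conclude that the algorithm's output tree is itself generating you need the stronger (and equally maintainable) invariant that the \emph{built candidate trees, with their internal structure,} are subtrees of the generating tree $T^t$ -- this is exactly how the paper phrases it. The rebinarization you describe only touches nodes above the cluster roots, so it preserves the structure below, and the new internal node created by the merge becomes the parent of the two siblings $r_i,r_j$ in $T^{t}$; hence the stronger invariant does propagate through your construction. Without it, the final step (``the algorithm $\ldots$ outputs a rooted binary tree that is itself generating'') does not literally follow from the stated leaf-set invariant, since the output tree need not be the same tree as $T^*$.
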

\begin{proof}

We focus on the similarity setting;   the proof for the dissimilarity 
setting is almost identical.
We define the \emph{candidate trees} after $t$ iterations
of the while loop to be sets 
of trees in $\calC$ at that time.
The theorem  follows from
 the following statement, which we will prove by induction on $t$: 
    If $C^t = \{C_1,\ldots,C_k\}$ denotes the set of clusters after
    $t$ iterations, then there exists a generating tree
    $T^t$ for $G$,
    such that the candidate trees are subtrees of $T^t$.

For the base case, initially each candidate tree contains exactly one 
  vertex and the statement holds.
For the general case, let $C_1,C_2$ be the two trees that constitute the $t$$\th$ iteration. 
  By induction, there exists a generating 
  tree $T^{t-1}$ for $G$, and associated weights $W^{t-1}$ (according to Definition~\ref{defn:generating-tree}) such that $C_1$ and
  $C_2$ are subtrees of $T^{t-1}$, rooted at nodes $N^1$ and $N^2$ of $T^{t-1}$ respectively.

 To define $T^t$, we start from $T^{t-1}$. 
  Consider the path $P = \{N^1, N_1,N_2,\ldots,N_k, N^2\}$ joining
  $N^1$ to $N^2$ in $T^{t-1}$ and let $N_r = \lca_{T^{t-1}}(N^1,
  N^2))$.
  If $N_r$ is the parent of $N_1$ and $N_2$, then $T^t=t^{t-1}$, else do the following transformation: remove the subtrees rooted at $N^1$ and at $N_2$; create a new node $N^*$ as second child of $N_k$, and let  $N_1$ and $N_2$ be its children. This defines $T^t$. To define $W^t$, extend $W^{t-1}$ by setting $W^t(N^*)=W(N_r)$.
    \begin{claim}
    \label{Claim:tech:main}
    For any $N_i,N_j \in P$, $W^{t-1}(N_i)= W^{t-1}(N_j)$.
  \end{claim}
Thanks to the inductive hypothesis, with Claim~\ref{Claim:tech:main} it is easy to verify that $W^t$ certifies that $T^t$ is generating for $G$.
\end{proof}

\begin{proof}[Proof of Claim~\ref{Claim:tech:main}]
  Fix a node $N_i$ on the path from $N_r$ to $N^1$ (the 
  argument for nodes on the path from $N_r$ to $N^2$ is similar). 
  By induction 
  $W^{t-1}(N_i) \ge W^{t-1}(N_r)$.
  We show that since the linkage algorithms merge
  the trees $C_1$ and $C_2$, we also have $W^{t-1}(N_i) \le W^{t-1}(N_r)$
  and so $W^{t-1}(N_i) = W^{t-1}(N_r)$, hence the claim.
  Let $w_0 = W^{t-1}(N_r)$.

  By induction, for all $u \in C_1$, $v \in C_2$,
  $w(u,v) = w_0$, and thus $\dist(C_1,C_2) = w_0$ in the execution
  of all the algorithms. 
  Fix a candidate tree $C' \in \calC^t$, $C' \neq C_1,C_2$
  and $C' \subseteq V(N_i)$. 
  Since $\calC$ is 
  a partition of the vertices of the graph and 
  since candidate trees are subtrees of $T^{t-1}$,
  such a cluster exists.
  Thus, for $u \in C_1$, $v \in C'$
  $w(u,v) = 
  W^{t-1}(\lca_{T^{t-1}}(u,v)) = W^{t-1}(N_i) \ge w_0$ since $N_i$ 
  is a descendant of $N_r$.

  It is easy to check that by their definitions, 
  for any of the linkage algorithms, we thus have that 
  $\dist(C_1,C') \ge w_0 = \dist(C_1,C_2)$.
  But since the algorithms merge the clusters at maximum distance,
  it follows that 
  $\dist(C_1,C') \le  \dist(C_1,C_2) = w_0$ and therefore,
  $W^{t-1}(N_i) \le W^{t-1}(N_r)$ and so, $W^{t-1}(N_i) = W^{t-1}(N_r)$
  and the claim follows. This is true no matter the tie breaking
  chosen for the linkage algorithms.
\end{proof}

\paragraph{Divisive Heuristics.}
In this section, we focus on 
two well-known divisive 
heuristics: (1) the bisection $2$-Center which uses a partition-based
clustering objective (the $k$-Center objective)
to divide the input into two 
(non necessarily equal-size) 
parts (see Algorithm~\ref{alg:bisect2centers}), 
and (2) the recursive sparsest-cut algorithm, which can be implemented efficiently for \structinputs (Lemma ~\ref{lemma:sparset-cut-easy}). 

\begin{algorithm}
  \caption{Bisection $2$-Center (similarity setting)}
  \label{alg:bisect2centers}
  \begin{algorithmic}[1]
    \State \textbf{Input:} A graph $G=(V,E)$ and a weight function 
    $w:E \mapsto \R_+$
    \State Find $\{u,v\} \subseteq V$ that maximizes 
    $\min_x \max_{y \in \{u,v\}} w(x,y)$
    \State $A \gets \{x \mid w(x,u) \ge  \max_{y \in \{u,v\}} w(x,y)\}$
    \State $B \gets V \setminus A$.
    \State Apply Bisection $2$-Center on $G[A]$ and $G[B]$ to obtain
    trees $T_A$,$T_B$ respectively\\
    \Return The union tree of $T_A,T_B$.
  \end{algorithmic}
\end{algorithm}

Loosely speaking, 
we show that this algorithm computes an optimal solution
if the optimal solution is unique. More precisely,
for any similarity 
graph $G$, we say that a tree $T$ is \emph{strictly generating}
for $G$ if there exists a weight function $W$ such that 
for any nodes $N_1,N_2$, if $N_1$ appears on the path from $N_2$ to the 
root, then $W(N_1) < W(N_2)$ and for every 
$x, y \in V$, $w(x, y) = W(\lca_{T}(x,y))$.
In this case we say that the input is a strict \structinput.
In the context of dissimilarity, an analogous notion can be defined
and we obtain a similar result.

\begin{theorem}\footnote{This Theorem may be folklore, but we have been unable to find a reference.}
  \label{T:bisectionkcenter}
  For any admissible objective function,
  the bisection $2$-Center algorithm 
  returns an optimal solution for any similarity or dissimilarity 
  graph $G$ that is a strict \structinput.
\end{theorem}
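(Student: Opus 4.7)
My plan is to prove by induction on $|V|$ that Algorithm~\ref{alg:bisect2centers} returns (up to left/right swap at each internal node) the unique strictly generating tree $T^*$ of $G$; since any admissible objective function is optimized by every generating tree (Definition~\ref{defn:admissibleCF}), this will give the theorem. I focus on the similarity setting, as the dissimilarity case follows from an analogous argument applied to the natural dissimilarity variant of the algorithm (the $\max$--$\min$ becomes $\min$--$\max$, and points are sent to the closer rather than the farther center).

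For the inductive step (the base case $|V|\le 2$ is immediate), let $A^*,B^*$ be the vertex sets of the two subtrees at the root of $T^*$, and let $w_0=W(\text{root of }T^*)$. By the strict generating property, $w(a,b)=w_0$ for every $a\in A^*, b\in B^*$, while $w(x,y)>w_0$ whenever $x,y$ lie on the same side. The crux is to show that the pair $\{u,v\}$ chosen by the algorithm must split $A^*$ from $B^*$. Using the standard $k$-center convention that a designated center is closest to itself (so the outer $\min_x$ effectively ranges over $x\notin\{u,v\}$), if $u,v$ both lie in $A^*$, then for any $x\in B^*$ we have $\max_{y\in\{u,v\}} w(x,y)=w_0$, so $\min_x\max_y w(x,y)\le w_0$. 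Conversely, with $u\in A^*$ and $v\in B^*$, for every $x\in A^*\setminus\{u\}$ we have $w(x,u)>w_0=w(x,v)$ and symmetrically for $x\in B^*\setminus\{v\}$; hence $\min_x\max_y w(x,y)>w_0$, so the cross pair strictly beats every same-side pair and must be the one selected.

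Given such a cross pair, the set $A=\{x:w(x,u)\ge\max_{y\in\{u,v\}}w(x,y)\}$ computed by the algorithm equals $A^*$: for $x\in A^*$ we have $w(x,u)>w_0=w(x,v)$ and so $x\in A$, while for $x\in B^*$ we have $w(x,v)>w_0=w(x,u)$ and so $x\notin A$. Hence the top cut of the returned tree coincides with the top cut of $T^*$.

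To close the induction, observe that the two subtrees of $T^*$ below its root are themselves strict generating trees for $G[A^*]$ and $G[B^*]$ (lcas within one side stay inside that side, and the strict ancestor-weight inequalities are inherited), so by the inductive hypothesis the two recursive calls return them; taking their union recovers $T^*$, which is optimal for every admissible objective by Definition~\ref{defn:admissibleCF}. The main subtlety is the treatment of the cases $x\in\{u,v\}$ in the inner maximization: without the $k$-center self-assignment convention, a cross pair only weakly beats a same-side pair, and one needs this convention (or a matching tie-breaking rule consistent with it) to force the algorithm to commit to the correct split. Once that convention is in place, the strict gap between cross-edge weights ($=w_0$) and within-side weights ($>w_0$) drives the whole argument.
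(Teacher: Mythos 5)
Your proof is correct and follows essentially the same route as the paper's: induction on $|V|$, using the strict weight gap ($w_0$ across the root cut versus strictly larger weights within each side) to show the $2$-center solution must place one center on each side and therefore recovers the ground-truth root cut, then recursing on the two strictly-generated subinstances. The one place you are more careful than the paper is in flagging the $x \in \{u,v\}$ edge case in the $\min_x \max_y$ objective: the paper's claim that a cross pair strictly beats a same-side pair silently uses the standard $k$-center self-assignment convention, which you make explicit, so your write-up is if anything slightly more rigorous.
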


\begin{proof}
  We proceed by induction on the number of nodes in the graph.
  Consider a strictly generating tree $T$ and the corresponding
  weight function $W$.
  Consider the root node $N_r$ of $T$ and let $N_1,N_2$ be the children 
  of the root. Let $(\alpha, \beta)$ be the cut induced by the root 
  node of $T$ (\ie $\alpha = V(N_1)$, $\beta = V(N_2)$). 
  Define $w_0$ to be the weight of an edge
  between $u\in \alpha$ and $v \in \beta$ for any $u,v$ (recall
  that since $T$ is strictly generating all the edges between
  $\alpha$ and $\beta$ are of same weight).
  We show that the bisection $2$-Centers
  algorithm divides the graph into $\alpha$ and $\beta$. Applying the
  inductive hypothesis on both subgraphs yields the result.
  
  Suppose that the algorithm locates the two centers in $\beta$. Then,
  $\min_x \max_{y \in \{u,v\}} w(x,y) = w_0$ since
  the vertices of $\alpha$ are connected
  by an edge of weight $w_0$ to the centers.
  Thus, the value of the clustering is $w_0$. Now, consider a
  clustering consisting of
  a center $c_0$ in $\alpha$ and a center $c_1$ in $\beta$.
  Then, for each vertex $u$, we have 
  $\max_{c \in \{c_0,c_1\}} w(u,c) \ge \min(W(N_1),W(N_2)) > W(N_r) = w_0$ 
  since $T$ and $W$ are strictly generating; Hence a strictly better
  clustering value. Therefore, the algorithm
  locates $x \in \alpha$ and $y \in \beta$. 
  Finally, it is easy to see that the 
  partitioning induced by the centers
  yields parts $A= \alpha$ and $B= \beta$.
\end{proof}

\begin{remark}
To extend our result to (non-strict) \structinputs,
one could consider the 
following variant of the algorithm (which bears
similarities with the popular \emph{elbow method}
for partition-based clustering): Compute a $k$-Center
clustering for all $k \in \{1,\ldots,n\}$ and
partition the graph according to the $k$-Center clustering 
of the smallest $k>1$ 
for which the value of the clustering increases. 
Mimicking the proof
of Theorem~\ref{T:bisectionkcenter}, one can show that
the tree output by the algorithm is generating. 
\end{remark}

We now turn to the recursive sparsest-cut algorithm 
(\ie the recursive $\phi$-sparsest-cut algorithm of 
Section~\ref{S:sim:worstcase}, for $\phi=1$).
The recursive sparsest-cut 
consists in recursively partitioning
the graph according to a sparsest cut of the graph. 
We show (1) that this algorithm yields a tree of optimal cost and
(2) that computing a sparsest cut of a similarity 
graph generated from an ultrametric can be done in linear time.
Finally, we observe that
the analogous algorithm for the dissimilarity setting consists
in recursively partitioning the graph according to the densest
cut of the graph and achieves similar guarantees (and similarly 
the densest cut of a dissimilarity graph
generated from an ultrametric can be computed in linear time).
\begin{theorem}\footnote{This Theorem may be folklore, at least when there are no ties, but we have been unable to find a reference.}
  \label{T:sc:perfect}
  For any admissible objective function,
  the recursive sparsest-cut (respectively densest-cut)
  algorithm computes a tree of optimal cost
  if the input is 
  a similarity (respectively dissimilarity) \structinput.
\end{theorem}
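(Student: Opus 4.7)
The plan is to treat the similarity case (the dissimilarity case is symmetric under min $\leftrightarrow$ max and sparsest $\leftrightarrow$ densest) and argue by induction on $|V|$, the base case being trivial. By Definition~\ref{defn:admissibleCF}, for any admissible objective a generating tree is optimal, so it suffices to show that the recursive sparsest-cut algorithm outputs a tree that is generating for $G$.

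The first step is to identify what the sparsest cut of a similarity ground-truth input looks like. Let $w_{\min} = \min_{e} w(e)$. Since $G$ is generated from an ultrametric and so is complete, every cut $(S, V\setminus S)$ satisfies $w(S, V\setminus S) \geq w_{\min} \cdot |S| \cdot |V\setminus S|$, with equality iff all crossing edges have weight exactly $w_{\min}$. Thus every cut has sparsity at least $w_{\min}$, and a cut is sparsest iff all its edges have weight $w_{\min}$.

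Next I would use the minimal generating ultrametric $(V, \tilde{d})$ from Section~\ref{SubSec:ultrametric} to pin down which cuts are sparsest. Let $d_{\max} = \max_{u \neq v} \tilde{d}(u,v)$. Because $f$ is a bijection from distances to weights on the minimal ultrametric, $w(u,v) = w_{\min}$ iff $\tilde{d}(u,v) = d_{\max}$. Define $u \sim v$ iff $u = v$ or $\tilde{d}(u,v) < d_{\max}$. The isosceles condition $\tilde{d}(u,w) \leq \max(\tilde{d}(u,v), \tilde{d}(v,w))$ implies that if $\tilde{d}(u,v), \tilde{d}(v,w) < d_{\max}$ then $\tilde{d}(u,w) < d_{\max}$, so $\sim$ is transitive, hence an equivalence relation. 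A cut has sparsity $w_{\min}$ iff it cleanly separates equivalence classes, i.e., $S$ is a proper nonempty union of classes of $\sim$.

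Finally, fix any sparsest cut $(S, V\setminus S)$ returned by the algorithm. Restricting $\tilde{d}$ to $S$ (respectively $V\setminus S$) yields an ultrametric that generates $G[S]$ (respectively $G[V\setminus S]$), so both recursive calls satisfy the inductive hypothesis and return generating trees $T_S, T_{V\setminus S}$ with certifying weight functions whose values strictly exceed $w_{\min}$ (any intra-class edge does). I would then assemble $T$ by attaching $T_S$ and $T_{V\setminus S}$ to a new root carrying weight $w_{\min}$: the monotonicity condition of Definition~\ref{defn:generating-tree} is preserved because the root weight is the minimum, for any cross-cut pair $u\in S, v\in V\setminus S$ the $\lca$ in $T$ is the root with weight $w(u,v) = w_{\min}$, and for same-side pairs the $\lca$ and its weight are inherited from the appropriate subtree. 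Hence $T$ is generating and, by admissibility, optimal. The main obstacle is the transitivity of $\sim$, which crucially requires using the \emph{minimal} ultrametric so that $f$ is a bijection; without minimality two distances strictly smaller than $d_{\max}$ could map to $w_{\min}$ and break the argument. The dissimilarity case is handled identically, replacing $w_{\min}$ by $w_{\max}$, $d_{\max}$ by $d_{\min}$, and flipping the inequalities.
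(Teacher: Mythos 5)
Your proof is correct in substance and follows the same inductive skeleton as the paper's (show the output is a generating tree by patching the recursively-built generating subtrees under a new root of weight $w_{\min}$, then invoke admissibility). The difference is how you establish the key fact that the algorithm's chosen cut has all its crossing edges at weight $w_{\min}$. You do this by passing to the minimal ultrametric, defining an equivalence relation $u \sim v \iff \tilde d(u,v) < d_{\max}$, proving transitivity, and characterizing sparsest cuts as exactly those separating unions of $\sim$-classes. The paper takes a shorter route that avoids the minimal ultrametric entirely: it fixes \emph{any} generating tree $T$ and observes that the cut induced by the root of $T$ has all edges at weight $W(N_r) = w_{\min}$ (the global minimum), so a cut of sparsity $w_{\min}$ exists; since every cut has sparsity $\ge w_{\min}$, whatever cut $(\gamma,\delta)$ the algorithm picks has sparsity exactly $w_{\min}$, and an averaging step forces every crossing edge to have weight $w_{\min}$. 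That is the only property of the cut needed for the patching step, so the full characterization of all sparsest cuts (and hence your transitivity lemma and the bijectivity of $f$ on the minimal ultrametric) is unnecessary — your closing remark that transitivity is ``the main obstacle'' is a red herring. One small slip: you assert the certifying weight functions of $T_S, T_{V\setminus S}$ ``strictly exceed $w_{\min}$,'' but if $S$ is a union of several $\sim$-classes then some internal weights of $T_S$ equal $w_{\min}$ exactly. This does not break anything, since Definition~\ref{defn:generating-tree} only requires the non-strict inequality $W(N_1)\le W(N_2)$ along root-to-leaf paths, but the parenthetical justification ``(any intra-class edge does)'' is not what is needed there.
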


\begin{proof}
  The proof, by induction,  has no difficulty and it may be easier to recreate it than to read it. 
  
  Let $T$ be a generating tree and $W$ be the associated
  weight function.
  Let $N_r$ be the root of $T$, $N_1,N_2$ the children of $N_r$, and $(\alpha= V(N_1), \beta= V(N_2))$  the induced root cut. 
  Since $T$ is strictly generating, all the edges between
  $\alpha$ and $\beta$ are of same weight $w$, which is therefore also the sparsity of $(\alpha,\beta)$.
For every edge $(u,v)$ of the 
  graph, $w(u,v) = W(\lca_T(u,v)) \ge w$, so every cut has sparsity at least $w$, so $(\alpha,\beta)$ has minimum sparsity.
  
  Now, consider the tree $T^*$ computed by the algorithm, and let $(\gamma,\delta)$ denote the sparsest-cut  used by the algorithm at the root (in case of ties it might not different from $(\alpha,\beta)$). 
  By induction the algorithm on $G[\gamma]$ and $G[\delta]$ gives two generating trees $T_{\gamma}$ and $T_{\delta}$
  with associated weight functions $W_{\gamma}$ and
  $W_{\delta}$.  To argue that $T^*$  is generating, we define $W^*$ as follows, where $N^*_r$ denotes the root of $T^*$. 
  $$W^*(N) = 
  \begin{cases}
    W_{\gamma}(N) &\text{if $N \in T_{\gamma}$}\\
    W_{\delta}(N) &\text{if $N \in T_{\delta}$}\\
    w &\text{if $N = N^*_r$}
  \end{cases}
  $$
  By induction $w(u,v) = W(\lca_T(u,v))$ if 
  either both $u,v \in \gamma$, or both $u,v \in \delta$.
  For any $u \in \gamma, v\in \delta$, we have 
  $w(u,v) = w = W(N^*_r) = W(\lca_T(u,v))$.
  Finally, since $w \le w(u,v)$ for any $u,v$, we
  have $W(N^*_r) = w \le W(N)$, for any $N \in T^*$, and therefore
  $T^*$ is generating.
\end{proof}

We then show how to compute a sparsest-cut of a graph that is a
\structinput.

\begin{lemma}\label{lemma:sparset-cut-easy}
If the input graph 
  is a \structinput then the sparsest cut is computed  in $O(n)$ time by the following algorithm: pick an arbitrary vertex $u$, let $w_{\min}$ be the minimum weight of edges adjacent to $u$, and partition $V$ into $A= \{ x \mid w(u,x) > w_{\min} \} $ and $B=V\setminus A$. 
\end{lemma}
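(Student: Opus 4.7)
My plan is to show that the partition produced by the algorithm is exactly the cut $(\alpha, \beta)$ induced at the root of a generating tree $T$ for $G$, and that this cut achieves the global minimum sparsity. Fix a generating tree $T$ with weight function $W$ (Definition~\ref{defn:generating-tree}), and let $w_0 := W(N_r)$ where $N_r$ is the root of $T$; let $(\alpha,\beta)$ denote the partition of $V$ into the leaf sets of the two subtrees hanging off $N_r$.

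The first step is a characterization of $w_0$ via the monotonicity of $W$: every internal node $N$ satisfies $W(N_r)\le W(N)$, so every edge obeys $w(x,y)=W(\lca_T(x,y))\ge w_0$, with equality iff $\lca_T(x,y)=N_r$. In particular, for any choice of $u$ the minimum weight over edges incident to $u$ equals $w_0$, and the set of neighbours of $u$ achieving this minimum is exactly the side of $(\alpha,\beta)$ opposite $u$. Hence the two sides produced by the algorithm (with $u$ grouped together with its strictly-greater-weight neighbours) are exactly $\alpha$ and $\beta$.

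For the optimality, observe that every one of the $|\alpha|\cdot|\beta|$ crossing edges of $(\alpha,\beta)$ has weight exactly $w_0$, so its sparsity equals $w_0$. Since $G$ is a complete weighted graph on $V$ and every edge has weight at least $w_0$, any nontrivial cut $(S,V\setminus S)$ has sparsity $w(S,V\setminus S)/(|S|\cdot|V\setminus S|)\ge w_0$, so $(\alpha,\beta)$ is a sparsest cut. The $O(n)$ running time is immediate: a single pass through the $n-1$ weights incident to $u$ suffices to compute $w_{\min}$ and classify every other vertex in one shot.

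The main subtlety is the non-strict case, in which some non-root internal node $N$ may also satisfy $W(N)=w_0$. The monotonicity constraint then forces every ancestor of such an $N$ to have weight $w_0$ as well, so the set of weight-$w_0$ internal nodes forms a connected subtree containing the root; consequently several different cuts can witness the minimum sparsity $w_0$, and the particular cut produced by the algorithm depends on the choice of $u$. The argument of the previous paragraph nevertheless guarantees that whichever cut the algorithm outputs has sparsity exactly $w_0$, which is all the lemma requires.
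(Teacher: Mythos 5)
Your proof is correct for \emph{strict} \structinputs but has a genuine gap in the non-strict case, and you know it: you flag the subtlety in the last paragraph but then assert rather than prove the conclusion. Your argument's structure is (i) the algorithm's cut equals the root cut $(\alpha,\beta)$ of the generating tree, and (ii) $(\alpha,\beta)$ has sparsity $w_0$ while every cut has sparsity at least $w_0$. In the non-strict case step (i) fails: when some non-root node $N$ also has $W(N)=w_0$, a neighbour $y$ with $w(u,y)=w_0$ may well sit on $u$'s side of $(\alpha,\beta)$ (via $\lca_T(u,y)=N$), so the cut $(A,B)$ the algorithm outputs is not $(\alpha,\beta)$. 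At that point step (ii) gives you only a one-sided bound: every cut, including $(A,B)$, has sparsity at least $w_0$, which is the wrong direction. Your final sentence — ``the argument of the previous paragraph nevertheless guarantees that whichever cut the algorithm outputs has sparsity exactly $w_0$'' — does not follow from anything you established.

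The missing idea, which is how the paper closes this, is to argue directly that every edge crossing $(A,B)$ (not $(\alpha,\beta)$) has weight $w_{\min}$. Let $N_0$ be the lowest node on the path from $u$ to the root of $T$ with $W(N_0)=w_{\min}$. For $x\in A$ one has $w(u,x)>w_{\min}$, so $\lca_T(u,x)$ lies strictly below $N_0$ on that path; for $y\in B$, $y\neq u$, one has $w(u,y)=w_{\min}$, so $\lca_T(u,y)$ is $N_0$ or an ancestor, hence a strict ancestor of $\lca_T(u,x)$, which forces $\lca_T(x,y)=\lca_T(u,y)$ and therefore $w(x,y)=w_{\min}$. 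This pins the sparsity of the actual output cut at $w_{\min}$ in all cases, strict or not. (Aside: you implicitly place $u$ in $A$, while the lemma statement's set $A=\{x\mid w(u,x)>w_{\min}\}$ as written omits $u$; the intended reading is $u\in A$, otherwise the cut $(A,B)$ contains the edges $(u,x)$, $x\in A$, of weight $>w_{\min}$ and the claim fails. You and the paper both need this reading.)
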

\begin{proof}
  Let $w_{\min}  = w(u,v)$. We show that $\cut(A,B)/(|A||B|) = w_{\min}$
  and since $w_{\min}$ is the minimum edge weight of the graph, that
  the cut $(A,B)$ only contains edges of weight $w_{\min}$.
  Fix a generating tree $T$. 
  Consider the path from $u$ to the root of $T$ and let 
  $N_0$ be the first node on the (bottom-up) path such that
  $W(N_0) =w_{\min}$.
  For any vertex $x \in A$, we have that
  $w(u,x) > w_{\min}$. Hence by definition, we have that 
  $N_0$ is an ancestor of $\lca_T(u,x)$.
  Therefore, for any other node $y$ such that $w(u,y) = w_{\min}$,
  we have $\lca_T(u,y) = \lca_T(x,y)$ and so,
  $w(x,y) = W(\lca_T(x,y)) = W(\lca_T(u,y)) = w_{\min}$.
  It follows that all the edges in the cut $(A,B)$ are of weight
  $w_{\min}$ and so, the cut is a sparsest cut.
\end{proof}

\subsection{A Near-Linear Time Algorithm}
\label{S:sub:faster}
In this section, we propose a simple, optimal, 
algorithm for computing a generating tree 
of a \structinput.
For any graph $G$, the running time of this algorithm is $O(n^2)$,
and $\tilde{O}(n)$ if there exists
a tree $T$ that is \emph{strictly generating} for the input.
For completeness we recall that
for any graph $G$, we say that a tree $T$ is strictly generating
for $G$ if there exists a weight function $W$ such that 
for any nodes $N_1,N_2$, if $N_1$ appears on the path from $N_2$ to the 
root, then $W(N_1) < W(N_2)$ and for every 
$x, y \in V$, $w(x, y) = W(\lca_{T}(x,y))$.
In this case we say that the inputs is a \emph{strict \structinput}.

The algorithm is described for the similarity setting but
could be adapted to the dissimilarity case to achieve the
same performances.

\begin{algorithm}
  \caption{Fast and Simple Algorithm for Hierarchical Clustering
  on Perfect Data (similarity setting)}\label{alg:fastsimple}
  \begin{algorithmic}[1]
    \State \textbf{Input:} A graph $G=(V,E)$ and a weight function 
      $w:E \mapsto \R_+$
    \State $p \gets $ random vertex of $V$
    \State Let $w_1 > \ldots > w_k$ be the edge weights of the edges
    that have $p$ as an endpoint
    \State Let $B_i = \{v \mid w(p,v) = w_i\}$, 
    for $1 \le i \le k$.\label{step:alg:2} 

    \State Apply the algorithm recursively on each $G[B_i]$ and obtain a 
    collection of trees $T_1,\ldots,T_k$
    \State Define $T^*_0$ as a tree with $p$ as a single vertex
    \State For any $1 \le i \le k$, define $T^*_i$ to be the union of 
    $T^*_{i-1}$ and $T_i$
    \State Return $T^*_k$
  \end{algorithmic}
\end{algorithm}

\begin{theorem}
  \label{T:alg:fastsimple}
  For any admissible objective function,
  Algorithm~\ref{alg:fastsimple} computes a tree of optimal cost 
  in time $O(n \log^2 n)$
  with high probability if the input is a strict \structinput 
  or in time $O(n^2)$ if the input is a (non-necessarily strict)
  \structinput.
\end{theorem}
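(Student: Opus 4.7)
The plan is to prove Theorem~\ref{T:alg:fastsimple} by induction on $n=|V|$, first establishing that the output $T^*_k$ is a generating tree for $G$ (which, by admissibility, yields optimality for any admissible objective function), and then analyzing the running time separately in the general and strict cases.

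\textbf{Correctness.} Fix an arbitrary generating tree $\hat T$ of $G$ with weight function $\hat W$ (Defn.~\ref{defn:generating-tree}). The key observation is that the buckets $B_i$ computed in Step~\ref{step:alg:2} are consistent with $\hat T$: since $w(p,v)=\hat W(\lca_{\hat T}(p,v))$ and the weights along the path from $p$ to the root of $\hat T$ are monotone (non-increasing going up, in the similarity setting), each $B_i$ equals the union of the sibling subtrees attached to the contiguous block of ancestors of $p$ whose weight is $w_i$. Consequently each $G[B_i]$ is itself a ground-truth input (generated by restricting the underlying ultrametric to $B_i$), so by the inductive hypothesis the recursive call returns a generating tree $T_i$ for $G[B_i]$ with weight function $W_i$ whose minimum value is at least $w_i$. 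I would then verify that the caterpillar union $T^*_k$, equipped with the weight function $W^*$ that extends each $W_i$ and assigns weight $w_i$ to the new internal node created when $T_i$ is merged, is a generating tree for $G$: for $u,v$ both lying in some $B_i$ this follows directly from the inductive hypothesis applied to $T_i$; for $u\in\{p\}\cup B_1\cup\cdots\cup B_{i-1}$ and $v\in B_i$, the LCA of $u,v$ in $T^*_k$ is the root of $T^*_i$ by construction, while the LCA in $\hat T$ is an ancestor of $p$ of weight $w_i$, giving $W^*(\lca_{T^*_k}(u,v))=w_i=w(u,v)$. The required monotonicity of $W^*$ along root-to-leaf paths follows from the strict ordering $w_1>\cdots>w_k$ together with the lower bound on the root weight of each $T_i$.

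\textbf{Running time, general case.} Each recursive call on a subproblem of size $s$ takes $O(s)$ time using a hash table to group the edges of $p$ by weight (or $O(s\log s)$ via sorting, which is still within budget). Since the pivot $p$ is excluded from every subsequent recursive subproblem, each vertex lies in at most $n$ subproblems over the course of the recursion, giving the overall $O(n^2)$ bound.

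\textbf{Running time, strict case, and main obstacle.} In the strict setting, the buckets are in bijection with the sibling subtrees along the root-path of $p$ in $\hat T$. The plan is to adapt the classical randomized-quicksort analysis: for a fixed vertex $v$, bound the expected number of recursive calls that contain $v$ by defining, for each other vertex $u$, an indicator for the event that the pivot separating $u$ from $v$ belongs to a specific ``separating set'' determined by the induced subtree of $\hat T$ restricted to the current subproblem. A pairwise-summation argument should yield an expected $O(\log n)$ bound on the number of calls per vertex, hence $O(n\log n)$ expected bucketing work; the extra logarithmic factor in the claimed $O(n\log^2 n)$ comes from sorting the distinct edge weights in each call to identify $w_1>\cdots>w_k$. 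To upgrade this expected bound to a high-probability statement, I would apply a Chernoff-style concentration argument on the recursion depth of each vertex and take a union bound over the $n$ vertices. The main obstacle is precisely this pairwise analysis: unlike in classical quicksort, the separating set for a pair $(u,v)$ depends on the current subproblem and the induced subtree of $\hat T$ rather than on a fixed linear order, so some care is required to show that the relevant indicator probabilities sum to $O(\log n)$ uniformly over $v$ and over the random pivot choices.
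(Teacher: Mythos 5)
Your correctness argument is essentially the paper's: induct, show the buckets $B_i$ are unions of contiguous sibling-subtrees along $p$'s root path in any generating tree, invoke the inductive hypothesis on each $G[B_i]$, and glue with the caterpillar weight assignment. That part is fine. The $O(n^2)$ bound in the general case is also the paper's observation.

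The strict-case running-time analysis is where you diverge, and where your proposal has the gap you yourself flag. You propose a quicksort-style pairwise-indicator argument to bound the expected recursion depth per vertex, and you correctly note that the ``separating set'' for a pair $(u,v)$ is subproblem-dependent, which makes the harmonic-sum calculation delicate; you leave this unresolved. The paper sidesteps this entirely with a much simpler \emph{balanced-pivot} argument: for the current subproblem of size $n_0$, take the node $N$ of the generating tree reached by always descending into the larger child, stopping once $|V(N)|\in[n_0/3,2n_0/3]$. Strict generation guarantees that any pivot $p\in V(N)$ is separated (by the bucketing step) from every vertex $u$ whose LCA with $p$ is a strict ancestor of $N$; hence every bucket has size $<2n_0/3$. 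Since $|V(N)|\ge n_0/3$, a uniformly random pivot is balanced with probability at least $1/3$, so after $c\log n$ pivot draws at a given level of the recursion the probability of never drawing a balanced pivot is $O(n^{-c})$; a union bound then gives whp recursion depth $O(\log^2 n)$, and since each level does $O(n)$ work (the subproblems at one level are disjoint), the total is $O(n\log^2 n)$.

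This also corrects a misattribution in your proposal: the extra $\log$ factor over the ``ideal'' $O(n\log n)$ does not come from sorting within each call (each call is charged $O(n_0)$ time in the paper), but from the whp recursion depth being $O(\log^2 n)$ rather than $O(\log n)$ --- you pay an $O(\log n)$ factor to wait for a balanced pivot, times $O(\log n)$ levels of genuine size reduction. If you want to complete your pairwise approach instead, you would in effect need to re-derive a depth bound of this strength, so the paper's argument is both simpler and already gives exactly what the theorem claims.
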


\begin{proof}
  We proceed by induction on the number of vertices
  in the graph.
  Let $p$ be the first pivot chosen by the algorithm and
  let $B_1,\ldots,B_k$ be the sets defined by $p$ at 
  Step~\ref{step:alg:2} of the algorithm, with
  $w(p,u) > w(v,p)$, for any $u \in B_i, v\in B_{i+1}$.
  
  We show that for any $u \in B_i, v \in B_{j}$, $j > i$,
  we have $w(u,v) = w(p,v)$.
  Consider a generating tree $T$ and define
  $N_1 = \lca_T(p,u)$ and $N_2 = \lca_T(p,v)$. 
  Since $T,h,\sigma$ is generating and $w(p,u) > w(p,v)$, 
  we have that $N_2$ is an ancestor of $N_1$,
  by Definition~\ref{defn:generating-tree}.
  Therefore, 
  $\lca_T(u,v) = N_2$, and so $w(u,v) = W(N_2) = w(p,v)$.
  Therefore, combining the inductive hypothesis on any $G[B_i]$ and
  by Definition~\ref{defn:generating-tree}
  the tree output by the algorithm is generating.

  A bound of $O(n^2)$ for the running time follows directly from
  the definition of the algorithm.
  We now argue that the running time is $O(n \log^2 n)$ with high 
  probability if the input is strongly generated from a tree $T$.
  First, it is easy to see that a given recursive
  call on a subgraph with $n_0$ vertices takes $O(n_0)$ time.
  Now, observe that if at each recursive call the pivot partitions
  the $n_0$ vertices of its subgraph into buckets of size 
  at most $2n_0/3$, then applying the master theorem implies
  a total running time of $O(n \log n)$.
  Unfortunately, there are trees where picking an arbitrary vertex as
  a pivot yields a single bucket of size $n-1$.
  
  Thus, consider the node $N$ of $T$ that is the first node reached 
  by the walk from the root that always goes to 
  the child tree with the higher number of leaves, stopping when 
  the subtree of $T$ rooted at $N$ contains fewer than $2n/3$ 
  but at least $n/3$ leaves.
  Since $T$ is strongly generating we have that the partition
  into $B_1,\ldots,B_k$ induced by any vertex $v \in V(N)$
  is such that any $B_i$ contains less than $2n/3$ vertices.
  Indeed, for any $u$ such that $\lca_T(u,v)$ is an ancestor 
  of $N$ and $x \in V(N)$, we have that $w(u,v) < w(x,v)$, and so
  $u$ and $x$ belong to different parts of the partition 
  $B_1,\ldots,B_k$.

  Since the number of vertices in $V(N)$ is at least $n/3$,
  the probability of picking one of them is at least $1/3$.
  Therefore, since the pivots are chosen independently, 
  after $c \log n$ recursive calls, the 
  probability of not picking a vertex of $V(N)$ as a pivot is
  $O(1/n^c)$. Taking the union bound yields the theorem.
\end{proof}

\subsection{Beyond Structured Inputs}
\label{S:sub:robust}
Since real-world inputs might sometimes differ 
from our definition of \structinputs introduced 
in the Section~\ref{sec:prelim}, 
we introduce the notion of 
$\delta$-adversarially-perturbed \structinputs.
This notion aims at accounting for noise in the data.
We then design a simple and arguably more reliable 
algorithm (a robust variant of Algorithm~\ref{alg:fastsimple}) 
that achieves a $\delta$-approximation for
$\delta$-adversarially-perturbed \structinputs in
$O(n (n+m))$ time.
An interesting property of this algorithm is that its approximation
guarantee is the same for any admissible objective function.

We first introduce the definition of 
$\delta$-adversarially-perturbed \structinputs.
For any real
$\delta\ge 1$, we say that a weighted graph $G = (V, E, w)$ is a 
\emph{$\delta$-adversarially-perturbed \structinput}
if there exists an ultrametric $(X, d)$, such
that $V \subseteq X$, and for every $x, y \in V, x \neq y$, 
$e = \{x, y\}$
exists, and $ f(d(x, y)) \le w(e) \le \delta f(d(x, y))$, 
where $f : \reals^+ \rightarrow \reals^+$ is a
non-increasing function.  This defines 
$\delta$-adversarially-perturbed \structinputs for similarity graphs 
and an analogous definition applies for dissimilarity graphs.

We now introduce a robust, simple version of 
Algorithm~\ref{alg:fastsimple} that returns a $\delta$-approximation
if the input is a 
$\delta$-adversarially-perturbed \structinputs.
Algorithm~\ref{alg:fastsimple} was partitioning the input graph
based on a single, random vertex. In this slightly more robust
version, the partition is built iteratively:
Vertices are added to the current part if there exists at least
one vertex in the current part or in the parts that were built before
with which they share an edge of high enough weight
(see Algorithm~\ref{alg:robustsimple} for a complete description).

\begin{algorithm}
  \caption{Robust and Simple Algorithm for Hierarchical Clustering
  on $\delta$-adversarially-perturbed \structinputs 
  (similarity setting)}\label{alg:robustsimple}
  \begin{algorithmic}[1]
    \State \textbf{Input:} A graph $G=(V,E)$ and a weight function 
    $w:E \mapsto \R_+$, a parameter $\delta$
    \State $p \gets$ arbitrary vertex of $V$
    \State $i \gets 0$
    \State $\tvi \gets \{p\}$
    \While{$\tvi \neq V$}
    \State Let $p_1 \in \tvi, p_2 \in V \setminus \tilde{V}_i$ s.t.
    $(p_1,p_2)$ is an edge of maximum weight 
    in the cut $(\tilde{V}_i,V \setminus \tilde{V}_i)$
    \State $w_i \gets w(p_1,p_2)$
    \State $B_i \gets \{u \mid w(p_1,u) = w_i\}$
    \While{$\exists u \in V \setminus (\tvi \cup B_i)$ s.t. 
      $\exists v \in B_i \cup \tvi$, $w(u,v) \ge w_i $ 
    }
    \State $B_i \gets B_i \cup \{u\}$.
    \EndWhile
    \State $\tilde{V}_{i+1} \gets \tilde{V}_i \cup B_i$
        \State $i\gets i+1$
    \EndWhile
    \State Let $B_1,\ldots,B_k$ be the sets obtained
    \State Apply the algorithm recursively on each $G[B_i]$ and obtain a 
    collection of trees $T_1,\ldots,T_k$
    \State Define $T^*_0$ as a tree with $p$ as a single vertex
    \State For any $1 \le i \le k$, define $T^*_i$ to be the union of 
    $T^*_{i-1}$ and $T_i$
    \State Return $T^*_k$
  \end{algorithmic}
\end{algorithm}


\begin{theorem}
  \label{T:robustsimple}
  For any admissible objective function,
  Algorithm~\ref{alg:robustsimple} returns a $\delta$-approximation 
  if  the input is a 
  $\delta$-adversarially-perturbed \structinput.
\end{theorem}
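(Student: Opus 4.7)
The plan is to reduce the $\delta$-approximation to a single structural claim about Algorithm~\ref{alg:robustsimple}'s output, and then close a short admissibility-based chain of inequalities. Throughout, let $\tilde{w}(e) = f(d(e))$ denote the unperturbed weights induced by the underlying ultrametric, so that $\tilde{w}(e) \le w(e) \le \delta\,\tilde{w}(e)$, and let $\tilde{G} = (V, E, \tilde{w})$.

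The structural claim I would establish is that the output tree $T$ is a generating tree for some graph $\hat{G} = (V, E, \hat{w})$ satisfying $w(e) \le \hat{w}(e) \le \delta\,\tilde{w}(e)$ for every edge. Granted this, the theorem follows by chaining: $\Costtree(T; G) \le \Costtree(T; \hat{G})$ since $w \le \hat{w}$; $\Costtree(T; \hat{G}) = \opt(\hat{G})$ by admissibility together with the generating property of $T$ for $\hat{G}$; $\opt(\hat{G}) \le \Costtree(T^{um}; \hat{G}) \le \delta\,\Costtree(T^{um}; \tilde{G}) = \delta\,\opt(\tilde{G})$, where $T^{um}$ is any generating tree of $\tilde{G}$ (the middle inequality uses $\hat{w} \le \delta\,\tilde{w}$ edgewise and the last equality uses admissibility on $\tilde{G}$); and $\opt(\tilde{G}) \le \opt(G)$, since $\tilde{w} \le w$ edgewise implies $\Costtree(T'; \tilde{G}) \le \Costtree(T'; G)$ for every tree $T'$.

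I would prove the structural claim by induction on $|V|$. The key ultrametric ingredient is that $\tilde{V}_{i+1}$ is connected in the subgraph of $G$ induced by edges of weight $\ge w_i$: each vertex added by the inner loop is adjacent via such an edge to $\tilde{V}_i \cup B_i$, and $\tilde{V}_i$ itself is inductively connected via edges of weight $\ge w_{i-1} \ge w_i$. Applying the ultrametric inequality to a path of such edges, combined with $\tilde{w}(e) \ge w(e)/\delta \ge w_i/\delta$ on each edge, yields $\tilde{w}(u, v) \ge w_i/\delta$ for every pair $u, v \in \tilde{V}_{i+1}$. I would then define $\hat{w}(u, v) = w_i$ for pairs with $u \in \tilde{V}_i$ and $v \in B_i$ (whose LCA in $T$ is the caterpillar node at level $i$), and $\hat{w}(u, v) = \max(\hat{w}^{(i)}(u, v),\, w_i)$ for pairs inside $B_i$, where $\hat{w}^{(i)}$ is provided by the inductive hypothesis applied to the recursive call on $G[B_i]$. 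The bounds $w \le \hat{w} \le \delta\,\tilde{w}$ then follow from $w(u, v) \le w_i$ for cross-level pairs (by maximality of $w_i$), from the ultrametric inequality above, and from the inductive hypothesis on intra-$B_i$ pairs. The resulting weight function on internal nodes of $T$ is non-increasing along root-ward paths because $w_1 > w_2 > \cdots > w_k$ on the caterpillar, and because the boost by $w_i$ both preserves the inductive monotonicity inside each $T_i$ and matches $w_i$ at the junction with the caterpillar node above.

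The main obstacle is ensuring the non-increasing property at the junction between a caterpillar node $N_i$ (assigned weight $w_i$) and the subtree $T_i$ sitting below it: the recursive assignment $\hat{w}^{(i)}$ can a priori take values smaller than $w_i$, which would break monotonicity. The boost $\max(\cdot, w_i)$ resolves this, and the ultrametric lower bound $\tilde{w}(u, v) \ge w_i/\delta$ for pairs entirely inside $\tilde{V}_{i+1}$ is precisely what keeps the upper bound $\hat{w}(u, v) \le \delta\,\tilde{w}(u, v)$ intact after boosting.
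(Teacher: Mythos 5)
Your proof is correct and follows essentially the same approach as the paper. The structural claim you establish—that the algorithm's output tree $T$ is generating for a graph sandwiched near $G$, via the ultrametric path argument showing $\tilde{w}(u,v)\ge w_i/\delta$ for pairs within $\tilde{V}_{i+1}$—is exactly the content of the paper's Lemma~\ref{L:robustsimple} and Claim~\ref{claim:technical:robust}, proved by the same iterated ultrametric inequality along the $\sigma$-chains. The only difference is in the final bookkeeping: you build a ``ceiling'' graph $\hat G$ with $w\le\hat w\le\delta\tilde w$ and route the chain through the unperturbed $\tilde G$, whereas the paper builds a single ``floor'' graph $G'$ with $w'\le w\le \delta w'$ directly from the sandwich lemma, avoiding any reference to $\tilde G$ in the final step; both chains yield the same $\delta$ factor.
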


To prove the theorem we introduce the following lemma
whose proof is temporarily differed. The lemma states that the
tree built by the algorithm is almost generating (up to a factor
of $\delta$ in the edge weights).
\begin{lemma}
  \label{L:robustsimple}
  Let $T$ be a tree output by Algorithm~\ref{alg:robustsimple},
  let $\calN$ be the set of internal nodes of $T$.
  For any node $N$ with children $N_1,N_2$ 
  there exists a function $\omega: \calN \mapsto \R_+$, such that for any
  $u \in V(N_1), v \in V(N_2)$, $\omega(N) \le w(u,v) 
  \le \delta \omega(N)$.
  Moreover, for any nodes $N,N'$, if $N'$ is an ancestor of $N$,
  we have that $\omega(N) \ge \omega(N')$.
\end{lemma}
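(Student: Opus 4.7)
The plan is to define $\omega$ using the thresholds generated by the recursive invocations of the algorithm, and to establish two structural properties of these thresholds: an ``across-cut upper bound'' coming from the algorithm's greedy step, and a ``within-cluster lower bound'' coming from the ultrametric structure of the input. For each internal node $N$ of the output tree $T$, let $\tau(N)$ denote the threshold $w_i$ of the iteration (of whichever recursive invocation of Algorithm~\ref{alg:robustsimple}) that produced the cut associated with $N$. I would first try $\omega_{\mathrm{nat}}(N) := \tau(N)/\delta$ and then \emph{lift} it to enforce monotonicity by setting $\omega(N) := \max\{\tau(N')/\delta : N' = N \text{ or } N' \text{ is an ancestor of } N \text{ in } T\}$. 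Monotonicity along ancestor paths is then immediate by construction.

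The upper bound $w(u,v) \le \delta\,\omega(N)$ for $u \in V(N_1), v \in V(N_2)$ follows directly from the algorithm: the edge $\{u,v\}$ lies in the cut $(\tilde{V}_i, V \setminus \tilde{V}_i)$ considered when $N$ was created, so $w(u,v) \le w_i = \tau(N) \le \delta\,\omega(N)$. The core of the proof is the lower bound, which I would derive from the following key claim, proved by induction on $i$ (and by recursion on the algorithm's depth): every vertex $v \in \tilde{V}_{i+1}$ satisfies $d(p,v) \le D_i$, where $p$ is the pivot of the current recursive invocation, $d$ is the underlying ultrametric, and $D_i := \max\{d \ge 0 : f(d) \ge w_i/\delta\}$. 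The base case $\tilde{V}_1 = \{p\}$ is trivial. For the inductive step, any vertex $v$ added to $B_i$ comes with a chain $v = v_0, v_1, \ldots, v_\ell$ terminating either in $\tilde{V}_i$ or at $p_1$, along which each edge satisfies $w(v_{j-1}, v_j) \ge w_i$; since $w(x,y) \le \delta f(d(x,y))$, this gives $f(d(v_{j-1}, v_j)) \ge w_i/\delta$, hence $d(v_{j-1}, v_j) \le D_i$. The ultrametric inequality, together with the inductive hypothesis and the monotonicity $D_{i-1} \le D_i$ (a consequence of $w_{i-1} > w_i$ and $f$ non-increasing), then yields $d(v, p) \le D_i$.

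From this claim, for any $u,v \in V(N) \subseteq \tilde{V}_{i+1}$ the ultrametric inequality gives $d(u,v) \le D_i$, and therefore $w(u,v) \ge f(d(u,v)) \ge w_i/\delta = \tau(N)/\delta$. Since $V(N) \subseteq V(N')$ for every ancestor $N'$ of $N$ in $T$, applying the claim at each such $N'$ yields $w(u,v) \ge \tau(N')/\delta$, and therefore $w(u,v) \ge \omega(N)$. The same argument applies verbatim inside each recursive call on $G[B_j]$, because both the ultrametric and the $\delta$-adversarial perturbation restrict to $B_j$. The main obstacle is the inductive chain argument bounding $d(p,v) \le D_i$: one must carefully handle chains that pass back into $\tilde{V}_i$ (invoking the outer induction and using the monotonicity of the $D_i$'s). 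A secondary subtlety is that deeper recursive calls can produce local thresholds strictly smaller than the outer threshold, which is precisely why $\omega$ must be lifted by taking a maximum over ancestors rather than using $\tau(N)/\delta$ directly; verifying that this lifting preserves the upper bound is immediate, since $\delta\,\omega(N) \ge \tau(N)$ and the bound $w(u,v) \le \tau(N)$ still applies.
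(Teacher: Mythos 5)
Your proof is correct and, while it follows the same high-level plan as the paper's (bound the cross-cut weights by the greedy thresholds, bound the within-cluster weights via chains and the ultrametric inequality, then stitch the recursive calls together), the central technical claim is formulated in a genuinely cleaner way. The paper's Claim~\ref{claim:technical:robust} directly bounds $w(x,y)$ for $x\in B_i$, $y\in\tilde V_i$, and its proof traces a $\sigma$-chain from $x$ and a second $\sigma$-chain from $y$ back into $\tilde V_i$, then has to handle a separate case in which the distance $d(z,\hy)$ between the two chain endpoints inside $\tilde V_i$ dominates the ultrametric maximum; that case requires a secondary argument descending through earlier indices $j<i$. Your reformulation -- bound $d(p,v)\le D_i$ for the single pivot $p$ for every $v\in\tilde V_{i+1}$, inductively on $i$ -- absorbs this entirely: the ``link inside $\tilde V_i$'' is automatically controlled because the inductive hypothesis already puts every vertex of $\tilde V_i$ within $D_{i-1}\le D_i$ of $p$, and an arbitrary pair $u,v\in\tilde V_{i+1}$ is then handled by one application of $d(u,v)\le\max\bigl(d(p,u),d(p,v)\bigr)$. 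The explicit lifting $\omega(N)=\max\{\tau(N')/\delta:N'\text{ is }N\text{ or an ancestor}\}$ is also a good move: the paper leaves the splicing of the recursive $\omega$'s with the outer thresholds implicit (``combining with the inductive hypothesis on $B_i$''), whereas your lifting makes it manifest that monotonicity survives across recursion boundaries even when a recursive call produces a locally larger first threshold, and you correctly observe that this lifting costs nothing on the upper-bound side since $\delta\omega(N)\ge\tau(N)$. One small point worth tightening in a final writeup: $D_i$ should be a supremum (or you should note that $f$ right-continuous suffices for the maximum to be attained), and one should record that $w_i\le\delta f(0)$ guarantees the defining set is nonempty -- minor hygiene, not a gap.
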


Assuming Lemma~\ref{L:robustsimple}, 
the proof of Theorem~\ref{T:robustsimple} is as follows.
\begin{proof}[Proof of Theorem~\ref{T:robustsimple}]
  Let $G=(V,E)$, $w: E \mapsto \R_+$ be the input graph 
  and  $T^*$ be a tree of optimal cost.
  By Lemma~\ref{L:robustsimple}, the tree $T$ output by the
  algorithm is such that for any node $N$ with children $N_1,N_2$ 
  there exists a real $\omega(N)$, such that for any 
  $u \in V(N_1), v \in V(N_2)$, $\omega \le w(u,v) \le \delta \omega$.
  Thus, consider the slightly different input graph
  $G' = (V,E,w')$, where $w' : E \mapsto \R_+$ is defined as follows.
  For any edge $(u,v)$, define $w'(u,v) = \omega(\lca_T(u,v))$.
  Since by Lemma~\ref{L:robustsimple}, for any nodes $N,N'$ of $T$, 
  if $N'$ is an ancestor of $N$,
  we have that $\omega(N) \ge \omega(N')$ and 
  by definition~\ref{defn:generating-tree}, $T$ is 
  generating for $G'$.
  Thus, for any admissible cost function, we have that
  for $G'$, $\cost_{G'}(T) \le \cost_{G'}(T^*)$.

  Finally, observe that for any edge $e$, we have 
  $w'(e) \le w(e) \le \delta w'(e)$. 
  It follows that $\cost_G(T) \le \delta \cost_{G'}(T)$ for 
  any admissible cost function and
  $\cost_{G'}(T^*) \le \cost_{G}(T^*)$.
  Therefore, $\cost_G(T) \le \delta \cost_{G}(T^*) = \delta \opt$.
\end{proof}

\begin{proof}[Proof of Lemma~\ref{L:robustsimple}]
  We proceed by induction on the number of 
  vertices in the graph (the base case is trivial).
  Consider the first recursive call of the algorithm.
  We show the following claim.
  \begin{claim}
    \label{claim:technical:robust}
    For any $1 \le i \le k$, 
    for any $y \in \tvi$, $x \in B_i$,
    $w_i \ge w(x,y) \ge w_i/\delta$. Additionally,
    for any $x,y \in B_i$, $w(x,y) \ge w_i/\delta$.
  \end{claim}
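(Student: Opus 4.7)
The plan is to pass to the unperturbed weights $w^*(u,v) := f(d(u,v))$, which satisfy $w^*(u,v) \le w(u,v) \le \delta \cdot w^*(u,v)$ and, unlike $w$, form a genuine ultrametric. In particular $w^*$ obeys the chain-transitivity
\[
w^*(u_0, u_\ell) \ge \min_{0 \le j < \ell} w^*(u_j, u_{j+1})
\]
with no per-hop loss. This is the key point: the naive approximate triangle inequality for $w$ only yields $w(x,z) \ge \min(w(x,v), w(v,z))/\delta$, which costs a factor of $\delta$ at every hop and would be fatal in a multi-hop argument. Working with $w^*$ lets us pay the factor of $\delta$ only once, when translating back to $w$ at the very end.

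Before the lower bounds I would record two easy facts. The upper bound $w(x,y) \le w_i$ for $y \in \tilde{V}_{i-1}$, $x \in B_i$ is immediate: the edge $(x,y)$ lies in the cut $(\tilde{V}_{i-1}, V \setminus \tilde{V}_{i-1})$ from which $w_i$ was picked as the maximum. Second, I claim $w_1 > w_2 > \cdots$ strictly; indeed any edge from $\tilde{V}_{j-1}$ to outside $\tilde{V}_j$ has weight at most $w_j$ by the cut-max property, while any edge from $B_j$ to outside $\tilde{V}_j$ has weight strictly less than $w_j$ (otherwise the inner while loop would have absorbed the outside endpoint into $B_j$), so $w_{j+1} < w_j$.

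For the lower bounds I would prove jointly, by induction on $i$, that $w^*(x,y) \ge w_i/\delta$ both for $y \in \tilde{V}_{i-1}$, $x \in B_i$ and for $x,y \in B_i$; the bounds on $w$ then follow from $w \ge w^*$. The central observation is a \emph{chain structure}: each $x \in B_i$ is either in the initial set (so $w(p_1,x) = w_i$) or was absorbed because of a witness edge of weight $\ge w_i$ to some vertex already in $B_i \cup \tilde{V}_{i-1}$. Unrolling the witnesses in reverse order of absorption yields a chain $x = u_0, u_1, \ldots, u_\ell$ whose intermediate vertices lie in $B_i$, whose endpoint $u_\ell$ lies in $\tilde{V}_{i-1}$, and whose every edge has $w$-weight $\ge w_i$, hence $w^*$-weight $\ge w_i/\delta$. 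Chain-transitivity of $w^*$ then gives $w^*(x, u_\ell) \ge w_i/\delta$.

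To upgrade this to an arbitrary $y \in \tilde{V}_{i-1}$ I would invoke the ultrametric inequality $w^*(x,y) \ge \min(w^*(x,u_\ell), w^*(u_\ell, y))$. The second term is bounded by the induction hypothesis applied to the earlier iteration $j \le i-1$ at which $u_\ell$ (or $y$) was added, combined with monotonicity: $w^*(u_\ell, y) \ge w_{j}/\delta \ge w_{i-1}/\delta > w_i/\delta$. The case $x,y \in B_i$ is analogous: one concatenates the chain from $x$ to some $u_\ell \in \tilde{V}_{i-1}$, the inductively bounded segment $u_\ell \leftrightarrow u'_{\ell'}$ inside $\tilde{V}_{i-1}$, and the reverse chain from $u'_{\ell'}$ to $y$. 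The main bookkeeping subtlety is invoking the correct instance of the induction hypothesis for each pair inside $\tilde{V}_{i-1}$, but the strict monotonicity $w_j > w_i$ for $j < i$ always leaves enough slack so that the overall minimum along the concatenated chain is $\ge w_i/\delta$. The base case $i = 1$, where $\tilde{V}_0 = \{p\}$, is handled directly by the chain from $x$ back to $p$.
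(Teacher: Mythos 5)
Your proof is correct and follows essentially the same route as the paper's: unwind the witness chain $\sigma$ back into the previously built portion of $\tilde{V}$, exploit ultrametric chain-transitivity of the unperturbed quantity $w^* = f\circ d$ so the factor $\delta$ is incurred only once when translating back to $w$, and close using monotonicity of the thresholds $w_j$. The one organizational difference is that you fold the troublesome case where the $\tilde{V}$-internal link $(z,\hat y)$ dominates the chain into a strong induction on the iteration index $i$, whereas the paper unrolls that case by hand (locating the minimal $j$ at which a pair at distance $\ge d(z,\hat y)$ entered $\tilde V_j$ and re-chaining inside $\tilde V_{j-1}$); both instantiate the same argument, and your packaging is arguably cleaner to read.
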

  We first argue that Claim~\ref{claim:technical:robust} implies
  the lemma.
  Let $T$ be the tree output by the algorithm.
  Consider the nodes on the path from $p$ to the root of $T$;
  Let $N_i$ denote the node whose subtree is the union of
  $T^*_{i-1}$ and 
  $T_i$. By definition, $V(T^*_{i-1}) = \tilde{V}_i$ and 
  $V(T_i) = B_i$. Applying Claim~\ref{claim:technical:robust}
  and observing that  $w_i > w_{i+1}$ implies that the lemma holds
  for all the nodes on the path.
  Finally, since for any edge $\{u,v\}$, for $u,v \in B_i$,
  we also have $w(u,v) \ge w_i/\delta$, 
  combining with the inductive hypothesis on $B_i$ 
  implies the lemma for all the nodes of the subtree $T_i$.
\end{proof}
\begin{proof}[Proof of Claim~\ref{claim:technical:robust}]
  Let $(X,d)$ and $f$ be a pair of ultrametric and function that
  is generating for $G$. Fix $i \in \{1,\ldots,k\}$.
  For any vertex $x \in B_i$, let $\sigma(x)$ denote a vertex $y$ that
  is in $\tvi$ or inserted to $B_i$ before $x$ and such that
  $w(y,x) = w_i$. For any vertex $v$, let $\sigma^i(x)$ denotes
  the vertex obtained by applying $\sigma$ $i$ times to $x$ (\ie 
  $\sigma^2(x) = \sigma(\sigma(x))$).
  By definition of the algorithm 
  that for any $x \in B_i$, $\exists s \ge 1$, such that
  $\sigma^s(x) \in \tvi$.
  
  Fix $x \in B_i$.  For any $y \in \tvi$, we have that
  $w(y,x) \le w_i $ 
  since otherwise, the algorithm
  would have added $x$ before.  

  Now, let $y \in \tvi$ or $y$ inserted to $B_i$
  prior to $x$.
  We aim at showing that
  $w(y,x) \ge w_i /\delta$.  Observe that since $X,d$ is an
  ultrametric, $d(x,y) \le \max(d(x,\sigma(x)), d(\sigma(x),y))$.

  We now ``follow'' $\sigma$ by applying the function $\sigma$
  to $\sigma(x)$ and repeating
  until we reach $\sigma(x)^{\ell} = z \in \tvi$.
  Combining with the definition of an ultrametric,
  it follows that
  $$d(x,y) \le \max(d(x,\sigma(x)), d(\sigma(x),\sigma^2(x)),
  \ldots, d(\sigma(x)^{\ell-1}, z), d(z,y)).$$
  If $y$ was in $\tvi$, we define $\hy = y$.
  Otherwise $y$ is also in $B_i$ (and so was added to $B_i$ before
  $x$). We then proceed similarly than for $x$ and 
  ``follow'' $\sigma$. 
  In this case, let $\hy = \sigma^k(y) \in \tvi$.
  Applying the definition of an ultrametric again,
  we obtain
  $$d(x,y) \le \max(d(x,\sigma(x)), d(\sigma(x),\sigma^2(x)),
  \ldots, d(\sigma^{\ell-1}, z),  d(z,\hy), d(y,\sigma(y)), \ldots, 
  d(\sigma^{k-1}(y),\hy)).$$

  Assume for now that
  $d(z,\hy)$ 
  is not greater than the others. 
  Applying the definition of a
  $\delta$-adversarially-perturbed input, we have that
  $$\delta w(x,y) \ge \min(\ldots, w(\sigma^a(x),\sigma^{a+1}(x)), 
  \ldots, w(\sigma^b(y),\sigma^{b+1}(y)), \ldots).$$ 
  Following the definition of $\sigma$,
  we have $w(v,\sigma(v)) \ge w_i$,
  $\forall v$. Therefore, we conclude 
  $\delta w(x,y) \ge w_i$.  

  We thus turn to the case where $d(z,\hy)$ is greater than the
  others.
  Since both $z,\hy \in \tvi$, we have that they belong
  to some $B_{j_0},B_{j_1}$, where $j_0,j_1 < i$. 
  We consider the minimum $j$
  such that a pair at distance at least $d(z,\hy)$ was added to
  $\tilde{V}_j$.  
  Consider such a pair
  $u,v \in \tilde{V}_j$ satisfying $d(u,v) \ge d(z,\hy)$ and suppose
  w.l.o.g that $v \in B_{j-1}$ (we could have 
  either $u \in B_{j-1}$ or $u \in \tilde{V}_{j-1}$). Again, we follow
  the path $\sigma(v), \sigma(\sigma(v)),\ldots$, until we reach
  $\sigma^{r_1}(v) \in \tilde{V}_{j-1}$ and similarly for $u$:
  $\sigma^{r_2}(u) \in \tilde{V}_{j-1}$. Applying the 
  definition of an ultrametric this yields that 
  \begin{equation}
    \label{eq:claim:ultram}
    d(u,v) \le \max(\ldots, d(\sigma^a(u),\sigma^{a+1}(u)), \ldots,
    d(\sigma^b(v),\sigma^{b+1}(v)), \ldots, d(\sigma^{r_1}(v), 
    \sigma^{r_2}(u))).
  \end{equation}

  Now the difference is that $\tilde{V}_{j-1}$
  does not contain any pair at distance at least $d(z,\hy)$.
  Therefore, we have
  $d(\sigma^{r_1}(v), \sigma^{r_2}(u)) < d(z,\hy)$.
  Moreover, recall that
  by definition of $u,v$,
  $d(z,\hy) \le d(u,v)$.
  Thus, $d(\sigma^{r_1}(v),\sigma^{r_2}(u))$ is not 
  the maximum in Equation~\ref{eq:claim:ultram} since
  it is smaller than the left-hand side.
  Simplifying Equation~\ref{eq:claim:ultram}
  yields
  $$
  d(x,y) < d(z,\hy) \le d(u,v) 
  \le \max(\ldots, d(\sigma^a(u),\sigma^{a+1}(u)), \ldots,
  d(\sigma^b(v),\sigma^{b+1}(v)), \ldots).
  $$

  By definition of a
  $\delta$-adversarially-perturbed input, we obtain
  $\delta w(x,y) \ge \min_{\ell} w(\sigma^{\ell}(b),
  \sigma^{\ell+1}(b)) \ge w_j$.  Now, it is easy to see that
  for $j < i$,
  $w_i < w_j$ and therefore $\delta w(x,y) \ge w_i$.
  
  We conclude that for any $y \in \tvi$, $x \in B_i$,
  $w_i \ge w(x,y) \ge w_i/\delta $ and
  for $x,y \in B_i$,
  we have that $w(x,y) \ge w_i/\delta$, as claimed.  
\end{proof}

\section{Worst-Case Analysis of Common Heuristics}
\label{ssec:wc-lower-bounds}
The results presented in this section shows 
that for both the similarity and dissimilarity settings, 
some of the widely-used heuristics may perform badly.
The proofs are not difficult nor particularly interesting, but the results stand in sharp contrast to structured inputs and help motivate our study of inputs beyond worst case.

\paragraph{Similarity Graphs.}
We show that for very simple input graphs (\ie unweighted trees), 
the linkage algorithms (adapted to 
the similarity setting, see Algorithm~\ref{alg:linkage4sim}) 
may perform badly. 

\begin{theorem}
  \label{T:sim:hard:sandc}
  There exists an infinite family of inputs on which the single-linkage
  and complete-linkage algorithms output a solution of cost 
  $\Omega(n \opt/\log n)$.
\end{theorem}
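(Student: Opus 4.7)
My plan is to exhibit a single family of bad inputs that defeats both executions of Algorithm~\ref{alg:linkage4sim} (the single-linkage variant and the complete-linkage variant). Take $G_n$ to be the unit-weight path on $n$ vertices: $V=\{v_1,\dots,v_n\}$, $w(v_i,v_{i+1})=1$ for $1\le i\le n-1$, and $w(v_i,v_j)=0$ whenever $|i-j|\ge 2$. To make the output deterministic rather than merely ``achievable'' under adversarial tie-breaking, one may add an arbitrarily small multi-scale perturbation to the weights; the argument below is cleanest without it, so I will describe the tie-breaks explicitly and then note how a perturbation enforces them.

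For the upper bound on $\opt$, I would use the balanced bisection cluster tree $T^\star$ obtained by recursively splitting the current index interval at its median. An internal node at depth $\ell$ is associated with an interval of $n/2^\ell$ vertices, and the induced cut is crossed by exactly one path edge of weight $1$. The cost contributed by depth $\ell$ is therefore $2^\ell\cdot(n/2^\ell)\cdot 1=n$, and summing over the $O(\log n)$ depths gives $\opt\le\cost(T^\star)=O(n\log n)$.

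For complete-linkage, all adjacent singletons start tied at $\dist=1$. The invariant I would carry by induction is that after $k$ merges the growing cluster is the interval $\{v_1,\dots,v_{k+1}\}$: indeed $\dist(\{v_1,\dots,v_k\},\{v_{k+1}\})=\max(0,\dots,0,1)=1$, which still ties with every remaining adjacent singleton pair $(\{v_j\},\{v_{j+1}\})$, so a tie-break that always extends the current cluster is consistent and produces the caterpillar. Its $k$-th internal node from the top separates a cluster of size $n-k+1$ through a cut of weight $1$, contributing $n-k+1$ to the cost, for a total of $\sum_{k=1}^{n-1}(n-k+1)=\Theta(n^2)$. For single-linkage, I would exploit that once a cluster contains two path-nonadjacent vertices its $\min$-similarity to every outside vertex is $0$; hence in each of the first $n/2$ iterations the only pairs attaining $\dist=1$ are adjacent singleton pairs, and a tie-break that picks them in the order $(v_1,v_2),(v_3,v_4),\dots,(v_{n-1},v_n)$ is consistent. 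After this pairing phase every remaining inter-cluster $\dist$ equals $0$; choosing the tie-break so that pair-clusters are absorbed one at a time into a single growing super-cluster yields a ``caterpillar of pairs'' in which the top $\Theta(n)$ splits each carry cluster size $\Theta(n)$ and cut weight $1$, so its cost is also $\Theta(n^2)$.

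Combining the three estimates gives $\cost(T)=\Omega(n^2)=\Omega(n/\log n)\cdot\opt$ for both algorithms. The delicate point—and the main obstacle—is the second phase of the single-linkage analysis, where every candidate merge has $\dist=0$ and a ``balanced'' tie-break would produce an $O(n\log n)$-cost tree that is essentially optimal. I would handle this by either (i) appealing to the fact that the algorithm's specification does not pin down tie-breaking, so exhibiting one consistent bad execution suffices, or (ii) introducing an infinitesimal multi-scale perturbation of the weights—monotone in index at the first scale, and recursively monotone in pair-cluster index at the higher scales—designed so that the unique maximum at every stage corresponds to the imbalanced merge, while verifying that $\opt$ changes by only a $1+o(1)$ factor.
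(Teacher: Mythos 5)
Your proof uses the same family of inputs (unit-weight paths on $n$ vertices), the same balanced-bisection argument giving $\opt = O(n\log n)$, and an essentially identical caterpillar analysis for complete-linkage, all matching the paper. The interesting difference is in the single-linkage case. The paper's own proof asserts that the single-linkage algorithm ``merges the two candidate clusters $C_i,C_j$ that minimize $w(u,v)$'', and on that basis has the algorithm immediately merge a \emph{non-adjacent} pair such as $(v_1,v_3)$ and grow a cluster of all odd-indexed vertices. But this contradicts Algorithm~\ref{alg:linkage4sim}, which selects the pair \emph{maximizing} $\dist(C_1,C_2) = \min_{x\in C_1, y\in C_2} w(x,y)$; under the algorithm as actually specified, the initial merges must be adjacent singleton pairs (tied at $\dist = 1$), not non-adjacent ones ($\dist = 0$). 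Your two-phase argument --- a pairing phase that absorbs all adjacent singletons, followed by an adversarial tie-break into a ``caterpillar of pairs'' once every remaining inter-cluster $\dist$ equals $0$ --- is the reading consistent with the algorithm's specification and still yields $\Theta(n^2)$, so it effectively repairs the paper's single-linkage case rather than merely replicating it.

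One further point you handle more carefully than the paper: both proofs exhibit a single bad execution and thus implicitly treat tie-breaking as adversarial (the paper writes ``could merge'' and ``w.l.o.g.''). You flag this explicitly and note that a balanced tie-break in the second phase would give a near-optimal tree, then offer the standard remedy (an infinitesimal perturbation forcing the bad merges). The theorem is indeed stated for the algorithm as a nondeterministic procedure with unspecified tie-breaking, so option~(i) suffices, but making this point visible is a genuine improvement in rigor over the paper's exposition.
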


\begin{proof}
  The family of inputs consists of the graphs that represent paths
  of length $n>2$.
  More formally, Let $G_n$ be a graph on $n$ vertices such that
  $V = \{v_1,\ldots,v_n\}$ and that has the following edge weights.
  Let $w(v_{i-1},v_i) = w(v_i,v_{i+1}) = 1$, for all $1 < i < n$ and
  for any $i,j$, $j \notin \{i-1,i,i+1\}$, define $w(v_i,v_j) = 0$.
  \begin{claim}
    \label{cl:wc:nlogn}
    $\opt(G_n) = O(n \log n)$.
  \end{claim}
  \begin{proof}
    Consider the tree $T^*$ that recursively divides the path into two
    subpaths of equal length.
    We aim at showing that $\cost(T^*) = O(n \log n)$.
    The cost induced by the root node is $n$ (since there is
    only one edge joining the two subpaths). The cost induced
    by each child of
    the root is $n/2$ since there is again only one edge joining the
    two sub-subpaths and now only $n/2$ leaves in the two subtrees.
    A direct induction shows that for a descendant at distance 
    $i$ from the root, the cost induced by this node is $n/2^i$.
    Since the number of children at distance 
    $i$ is $2^i$, we obtain that the total cost induced by all
    the children at distance $i$ is $n$.
    Since the tree divides the graph into two subgraph of equal size,
    there are at most $O(\log n)$ levels and therefore, the total cost
    of $T$ (and so $\opt(G_n)$) is $O(n \log n)$.
  \end{proof}
  \textbf{Complete-Linkage.}
  We show that the complete-linkage algorithm
  could perform a sequence of merges that would induce a 
  tree of cost $\Omega(n^2)$.
  At start, each cluster contains a single vertex and so, 
  the algorithm could merge any two clusters $\{v_i\}$, $\{v_{i+1}\}$
  with $1\le i<n$ since their distance are all 1 (and it is the
  maximum edge weight in the graph).
  Suppose w.l.o.g that the algorithm merges $v_1,v_2$.
  This yields a cluster $C_1$ such that 
  the maximum distance between vertices of $C_1$ and $v_3$ is 1.
  Thus, assume w.l.o.g that the second merge of the algorithm
  is $C_1,v_3$.
  Again, this yields a cluster $C_2$ whose maximum distance to $v_4$
  is also 1. A direct induction shows that the algorithm output a 
  tree whose root induces the cut $(V \setminus \{v_n\}, v_n)$
  and one of the child induces the cut 
  $(V \setminus \{v_{n-1},v_n\}, v_{n-1})$ and so on.
  We now argue that this tree $\hat T$ has cost $\Omega(n^2)$.
  Indeed, for any $1< i\le n$, we have 
  $V(\lca_{\hat T}(v_{i-1},v_i)) = i$. Thus the cost is 
  at least $\sum_{i = 2}^n i = \Omega(n^2)$.

  \textbf{Single-Linkage.}
  We now turn to the case of the single-linkage algorithm.
  Recall that the algorithm merges the two candidate clusters $C_i,C_j$
  that minimize $w(u,v)$ for $u \in C_i$, $v\in C_j$.

  At start, each cluster contains a single vertex and so, 
  the algorithm could merge any two clusters $\{v_i\}, \{v_j\}$ for 
  $j \notin \{i-1,i,i+1\}$ since the edge weight is 0 (and it
  is the minimum edge weight).
  Suppose w.l.o.g that the algorithm merges $v_1,v_3$.
  This yields a cluster $C_1$ such that 
  the distance between vertices of $C_1$ and any $v_i$ for
  $i = 1\mod 2$ is 0.
  Thus, assume w.l.o.g that the second merge of the algorithm
  is $C_1,v_5$. 
  A direct induction shows that w.l.o.g the output tree contains
  a node $\tilde N$ such that $V(\tilde N)$ contains all the 
  vertices of odd indices.
  Now observe that the cost of the tree is at least
  $|V(\tilde N)|\cdot \cut(V(\tilde N), V \setminus V(\tilde N)) 
  = \Omega(n^2)$.

  Thus, by Claim~\ref{cl:wc:nlogn} the single-linkage and 
  complete-linkage algorithms can output a solution of cost 
  $\Omega(n \opt/\log n)$.
\end{proof}

\begin{theorem}
  \label{T:sim:hard:avg}
  There exists an infinite family of inputs on which the average-linkage
  algorithm output a solution of cost $\Omega(n^{1/3} \opt)$.
\end{theorem}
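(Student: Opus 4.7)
The strategy is to exhibit a family of weighted similarity graphs $\{G_n\}$ on which average-linkage (Algorithm~\ref{alg:linkage4sim}) is forced into a ``caterpillar''-style trajectory that places heavy-weight edges near the root of the tree, while a smarter tree separates them at low levels. My construction will plant two scales of structure: roughly $k = \Theta(n^{1/3})$ ``hub'' vertices with carefully tuned pairwise weights, together with $\Theta(k)$ ``satellite'' groups of $\Theta(k^2)$ vertices each, with hub--satellite and satellite--satellite edge weights calibrated so that, at every stage of the execution, the current hub-cluster has the strictly highest average similarity to some satellite group (as well as to every other candidate), so the algorithm keeps absorbing one satellite group after another into a single growing cluster.

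Given such a $G_n$, I would first exhibit a comparison tree $T^*$ that first peels off each satellite group as its own branch and only handles the hubs at the top. Using the identity $\cost(T) = \sum_{(u,v)\in E} w(u,v)\cdot|V(\lca_T(u,v))|$ from the proof of Theorem~\ref{thm:SCapprox}, the heavy within-satellite edges in $T^*$ contribute $|V(\lca_{T^*})|=O(k^2)$ and the (fewer) cross-satellite and hub edges dominate only lightly; combining these yields $\opt \le \cost(T^*) = O(n^{5/3})$. Next I would analyse average-linkage: by induction on the number of merges, and using a symmetry argument that renders all satellite groups interchangeable, I would show that the algorithm maintains a single growing ``main'' cluster $C_t$ which, at time $t$, already contains the hubs together with $t$ whole satellite groups. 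Consequently, every heavy within-satellite edge of the group absorbed at step $t$ has $|V(\lca)| \ge t\cdot \Theta(k^2)$, and summing over $t$ gives $\cost(T_{\mathrm{AL}}) = \Omega(n^2)$, hence the ratio $\cost(T_{\mathrm{AL}})/\opt = \Omega(n^{1/3})$.

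The main obstacle is verifying rigorously that average-linkage really does follow the intended bad trajectory. Unlike single- and complete-linkage, whose decisions depend only on extreme edge weights in each cluster, average-linkage compares \emph{running averages} that shift after every merge, so one cannot reason merge-by-merge from raw weight inequalities. The crux of the proof will be maintaining, by induction, the invariant ``the (hub-cluster, next satellite group) pair is the unique maximiser of $\dist(\cdot,\cdot)$ at every iteration,'' which in turn constrains the four or five weight parameters of the construction. I expect that a clean inductive step is possible by exploiting the full symmetry among the satellite groups: under that symmetry the only two kinds of candidate merges to be compared at any stage are (i) absorbing another satellite into the hub-cluster and (ii) starting to merge two singletons inside some satellite, reducing the verification to a one-shot algebraic inequality between the weight parameters rather than a step-by-step combinatorial trace.
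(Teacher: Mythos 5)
Your proposal differs from the paper's proof in two important ways, and the second of them is a genuine gap that no choice of weight parameters can close.

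The paper's proof uses an \emph{unweighted} graph (a tree $T_n$: a path of $k = n^{1/3}$ hub vertices, each carrying $k$ pendant paths of length $k$). Because all present edges have weight~$1$, average-linkage faces massive families of exact ties, and the proof (Claim~\ref{cl:prop:avg2}) only shows that \emph{some} resolution of the ties is bad: after a symmetric preprocessing phase the hub path and each pendant path have coalesced into clusters of equal size $k$, and from then on the hub cluster $\tT$ is forced to absorb pendant clusters one block at a time, being the unique cluster with positive similarity to anything. Crucially, the $\Omega(n^{5/3})$ lower bound is then charged to the $k^2/2$ remaining \emph{hub--satellite connecting edges}, whose least common ancestors sit above a cluster of size $\geq n/2$; the within-pendant edges have already paid at small LCAs when the pendant paths coalesced. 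You instead want a weighted graph that \emph{forces} a unique trajectory at every step, and you want to charge the cost to ``heavy within-satellite edges'' by arguing that each edge of the group absorbed at step $t$ has $|V(\lca)| \geq t\cdot\Theta(k^2)$.

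That LCA lower bound only holds if the group is absorbed \emph{vertex by vertex} into the growing cluster $C$ (if a group first coalesces into its own $\Theta(k^2)$-size subtree and then joins $C$ as a block, its internal edges have LCA $O(k^2)$, exactly as in your comparison tree $T^*$, and those edges contribute nothing beyond $\opt$). And vertex-by-vertex absorption with heavy within-group weights is exactly what average-linkage refuses to do. For a singleton $v$, $\dist(C,\{v\}) = \bigl(\sum_{u\in C}w(u,v)\bigr)/|C|$; the denominator grows while the numerator is capped by the weighted degree of $v$. Once $|C|=\Theta(tk^2)$, keeping $C$ preferred over merging two singletons inside the next group requires $\sum_{u\in C}w(u,v) > w_{\mathrm{grp}}\cdot\Theta(tk^2)$, where $w_{\mathrm{grp}}$ is the within-group weight. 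If $w_{\mathrm{grp}}$ is large enough for your within-group charge to reach $\Omega(n^2)$ (this needs $w_{\mathrm{grp}}=\Omega(k^{-2})$ for $\Theta(k^2)$-vertex groups), then for $t=\Theta(k)$ the hub-edge weight from $v$ must total $\Omega(k\cdot k^2\cdot k^{-2})=\Omega(k)$. With $\Theta(k^3)$ satellites, these hub edges alone contribute $\Omega(k^3\cdot k\cdot k^2)=\Omega(n^2)$ to $\opt$ for \emph{any} comparison tree, since the LCA of a hub edge is at least the group size $\Theta(k^2)$; this destroys your $\opt=O(n^{5/3})$ upper bound. Conversely, if you shrink $w_{\mathrm{grp}}$ (say to $\Theta(k^{-3})$) so that a single $\Theta(1)$-weight hub edge per satellite keeps $C$ preferred, then the within-group edges contribute only $O(n^{5/3})$ in the caterpillar, the same order as $\opt$, and your stated charge does not give $\Omega(n^{1/3}\opt)$. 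In that parameter regime the $\Omega(n^2)$ does appear, but it comes from the \emph{hub--satellite} edges (each with $|V(\lca)|\approx tk^2$), just as in the paper. So the ``one-shot algebraic inequality'' you hope for is not merely unverified: for your stated accounting it is infeasible, and a working version of your construction must swap which family of edges it charges and must make the within-group weights tiny rather than heavy.
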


\begin{proof}
  For any $n = 2^i$ for some integer $i$, 
  we define a tree $T_n = (V,E)$
  as follows. Let $k= n^{1/3}$.
  Let $P = (u_1,\ldots,u_k)$ be a path of length $k$
  (\ie for each $1 \le i < k$, 
  we have an edge between $u_i$ and $u_{i+1}$).
  For each $u_i$, we define a collection 
  $P_i = \{P^i_1 = (V^i_1,E^i_1),\ldots, P^i_k = (V^i_k,E^i_k) \}$ 
  of $k$ paths of length $k$ and for each $P^i_j$ we connect 
  one of its extremities to $u_i$.
  Define $V_i = \{u_i\} \bigcup_j V^i_j$.

  \begin{claim}
    \label{cl:prop:avg}
    $\opt(T_n) \le 3n^{4/3}$
  \end{claim}
  \begin{proof}
    Consider the following non-binary solution tree $T^*$:
    Let the root have children $N_1,\ldots,N_k$ such that 
    $V(N_i) = V_i$ and for each child $N_i$
    let it have children $N^j_i$ such that $V(N^j_i) = V^j_i$.
    Finally, for each $N^j_i$ let the subtree rooted at $N^j_i$ 
    be any tree.

    We now analyze the cost of $T^*$.
    Observe that for each edge $e$ in the path $P$, we have 
    $|V(\lca_{T^*}(e))| = n$.
    Moreover, for each edge $e$ connecting a path $P^i_j$ to 
    $u_i$, we have $|V(\lca_{T^*}(e))| = k^2 = n^{2/3}$.
    Finally, for each edge $e$ whose both endpoints are in 
    a path $P^i_j$, we have that 
    $|V(\lca_{T^*}(e))| \le k = n^{1/3}$.
    
    We now sum up over all edges to obtain the overall cost of $T_n$.
    There are $k = n^{1/3}$ edges in $P$; They incur a cost of 
    $nk = n^{4/3}$.
    There are $k^2$ edges joining a vertex $u_i$ to a path $P^i_j$;
    They incur a cost of $k^2 \cdot n^{2/3} = n^{4/3}$.
    Finally, there are $k^3$ edges whose both endpoints are in
    a path $P^i_j$; They incur a cost of $k^3 \cdot n^{1/3} \le n^{4/3}$.
    Thus, the total cost of this tree is at 
    most $3 n^{4/3} \ge \opt(T_n)$.
  \end{proof}

  We now argue that there exists a sequence of merges done
  by the average-linkage algorithm that yield a solution of 
  cost at least $n^{5/3}$.
  \begin{claim}
    \label{cl:prop:avg2}
    There exists a sequence of merges and an
    integer $t$ such that the candidate trees at time $t$
    have leaves sets $\{\{u_1,\ldots,u_k\}\} \bigcup_{i,j} \{V^i_j\}$.
  \end{claim}
  Equipped with this claim, we can finish the proof of the proposition.
  Since there is no edge between $V^i_j$ and $V^{i'}_{j'}$ for $i' \neq i$
  or $j' \neq j$ the distance between those trees in the algorithm
  will always be 0. However, the distance between the tree $\tT$ that
  has leaves set $\{u_1,\ldots,u_k\}$ and 
  any other tree is positive (since there
  is one edge joining those two sets of vertices in $T_n$).
  Thus, the algorithm will merge $\tT$ with some 
  tree whose vertex set is exactly $V^i_j$ for some $i,j$.
  For the same reasons, the resulting cluster will be merged to a cluster
  whose vertex set is exactly $V^{i'}_{j'}$, and so on. 
  Hence,
  after $n/2k = k^2/2$ such merges, the tree $\tT$ has a leaves set
  of size  
  $k \cdot k^2/2 = n/2$. However, the number of edges from this cluster
  to the other candidate clusters is $k^2/2$ (since the other remaining
  clusters corresponds to vertex sets $V^i_j$ for some $i,j$).
  For each such edge $e$ we have 
  $|V(\lca_T(e))| \ge n/2$. Since there are $k^2/2$ of them, 
  the resulting tree has cost $\Omega(n^{5/3})$. Combining with 
  Claim~\ref{cl:prop:avg} yields the theorem.
\end{proof}  
  We thus turn to the proof of Claim~\ref{cl:prop:avg2}.
  \begin{proof}[Proof of Claim~\ref{cl:prop:avg2}]
    Given a graph $G$ a set of candidate trees $\calC$, define 
    $G/\calC$ to be the graph resulting from the contraction
    of all the edges whose both endpoints belong to the same cluster.
    We show a slightly stronger claim. We show that for any graph
    $G$ and candidate trees $\calV$ such that
    \begin{enumerate}
    \item All the candidate clusters in $\calV$ have the same size; and
    \item There exists a bijection $\phi$ between vertices $v \in T_n$ 
      and vertices in $G/\calC$;
    \end{enumerate}
    There exists a sequence of merges and an
    integer $t$ such that the candidate trees at time $t$
    have leaves sets $\{\{\phi(u_1),\ldots,\phi(u_k)\}\} 
    \bigcup_{i,j} \{\phi(V^i_j)\}$ where $\phi(V^i_j) = 
    \{\phi(v) \mid v \in V^i_j \}$.
    
    This slightly stronger statement yields the claim by observing 
    that $T_n$ and the candidate trees at the start of the algorithm
    satisfies the conditions of the statement.

    We proceed by induction on the number of vertices of the graph.
    Let $V^i_j = \{v^i_j(1),\ldots,v^i_j(k)\}$ such that
    $(v^i_j(\ell),v^i_j(\ell+1)) \in E^i_j$ for any $1 \le \ell < k$,
    and $(v^i_j(k), u_i) \in E$.

    We argue that the algorithm could perform a sequence of merges
    that results in the following set $\calC$
    of candidate trees.
    $\calC$ contains candidate trees 
    $U^{i} = \phi(u_{2i-1}) \cup \phi(u_{2i})$ for 
    $1 \le i < k/2$, and
    for each  $i,j$, candidate trees 
    $v_{i,j,\ell} = \phi(v^i_j(2\ell-1) \cup \phi(v^i_j(2\ell))$, 
    for $1\le \ell <k/2$. Let $s_0$ be the number of vertices
    in each candidate tree.

    At first, all the trees contain a single vertex and so, 
    for each adjacent vertices of the graph the distance between
    their corresponding trees in the algorithm is $1/s_0$. For any
    non-adjacent pair of vertices, the corresponding
    trees are at distance 0.
    Thus, w.l.o.g assume the algorithm first merges $u_1,u_2$.
    Then, the distance between the newly created tree $U^1$ and 
    any other candidate tree $C$ is 0 if there is no edge between
    $u_1$ and $u_2$ and $C$ or $1/(2s_0)$ if there is one 
    (since $U^1$ contains
    now two vertices). For the other candidate trees the distance
    is unchanged.
    Thus, the algorithm could merge vertices
    $u_3,u_4$. Now, observe that the distance between $U^2$ and 
    $U^1$ is at most $1/(4s_0)$. 
    Thus, it is possible to repeat the argument
    and assume that the algorithm merges the candidate trees 
    corresponding to $u_5,u_6$.
    Repeating this argument $k/2$ times yields that after $k/2$ merges,
    the algorithm has generated the candidates trees
    $U_1,\ldots,U_{k/2-1}$. The other candidate trees still contain
    a single vertex.
    Thus, the algorithm is now forced to merge candidate trees
    that contains single vertices that are adjacent (since their
    distance is $1/s_0$ and any other distance is $< 1/s_0$).
    Assume, w.l.o.g, that the algorithm merges $v^1_1(1), v^1_1(2)$.
    Again, applying a similar reasoning to each 
    $v^1_1(2\ell-1),v^1_1(2\ell)$ yields the set of candidate clusters 
    $v_{1,1,1},\ldots,v_{1,1,k/2-1}$. Applying this argument to all sets
    $V^i_j$ yields that the algorithm could perform a sequence of merges
    that results in the set $\calC$ of candidate clusters 
    described above.

    Now, all the clusters have size $2s_0$ and there exists 
    a bijection between vertices of $G/\calC$ and $T_{n/2}$.
    Therefore, combining with the induction hypothesis 
    yields the claim.
  \end{proof}

\paragraph{Dissimilarity Graphs.}
We now show that single-linkage, complete-linkage, 
and bisection $2$-Center
might return a solution that is arbitrarily bad compared to $\opt$
in some cases.
Hence, since average-linkage achieves a
$2$-approximation in the worst-case it seems that
it is more robust than the other
algorithms used in practice. 
 
\begin{theorem}
  \label{T:badcase:dis:pract}
  For each of the single-linkage, 
  complete-linkage, and bisection $2$-Center algorithms,
  there exists a family of inputs for which the
  algorithm outputs a solution of value $O(\opt/n)$.
\end{theorem}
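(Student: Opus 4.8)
The plan is to prove the three claims separately, each with its own explicit family of instances, all driven by one observation. Combining Fact~\ref{F:optcost} with Theorem~\ref{T:avgapprox} gives $\opt = \Theta(n\sum_e w(e))$ on every instance; and in edge-centric form $\val(T) = \sum_{\{u,v\}\in E} w(u,v)\,|V(\lca_T(u,v))|$. So it suffices, for each algorithm, to force it to output a tree in which a constant fraction of $\sum_e w(e)$ is carried by edges whose lowest common ancestor has only $O(1)$ leaves --- that is, to force the endpoints of the weight-carrying edges into constant-size clusters --- whereas the optimum keeps that weight at the root.

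The cleanest instance is a perfect matching: $n=2m$, $V=\{a_1,b_1,\dots,a_m,b_m\}$, $w(a_i,b_i)=1$, and no other edges. Putting $a_1,\dots,a_m$ under one child of the root and $b_1,\dots,b_m$ under the other sends every edge across the root, so $\opt = mn = n^2/2$, which is tight by Fact~\ref{F:optcost}. For single-linkage the inter-cluster distance is $0$ whenever some non-edge joins the two clusters, so with adversarial tie-breaking single-linkage may first form $\{a_i,a_j\}$ and $\{b_i,b_j\}$ (both at distance $0$) and then merge these two (again at distance $0$, via the non-edge $a_ib_j$); the edges $a_ib_i$ and $a_jb_j$ then have $\lca$ of size $4$. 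Iterating over a pairing of the $m$ matched pairs and merging the $m/2$ resulting size-$4$ clusters arbitrarily, every matched edge is cut at a cluster of size $4$, so the output has value $4m = 2n = O(\opt/n)$. For complete-linkage the same matching works with no tie-breaking trickery: adopting the usual convention that two clusters with no connecting edge are at complete-linkage distance $+\infty$ (i.e. never merged until forced), the only finite merges are the $m$ matched pairs, so complete-linkage is forced to make each $\{a_i,b_i\}$ a bottom-level cluster of size $2$ and then merge at distance $+\infty$; every matched edge is cut at a cluster of size $2$, for value $2m = n = O(\opt/n)$.

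The bisection $2$-Center case is the step I expect to be the main obstacle and requires a genuinely different gadget: on the matching above, $2$-Center peels a single vertex at each step and cuts the first matched edges at the root, so it does well. I would instead look for an instance on which the optimal $2$-Center partition is ``blind'' to the weight-carrying edges --- their endpoints share a $2$-Center cell at every recursion level and so are frozen into $O(1)$-size clusters, while the optimum pulls them apart at the top --- or on which the $2$-Center step is forced to return a nearly balanced split even though the objective calls for a caterpillar. The delicate point, and the reason the naive attempts fail, is that the weight-carrying edges must dominate $\sum_e w(e)$ yet must not be ``clique-like'': a weighted clique has tree-independent value, so no algorithm loses a factor on it. A natural candidate is a sparse graph on which every locally-optimal bipartition is forced close to balanced, exploiting that a balanced tree loses an $\tilde\Omega(n)$ factor against the caterpillar on a path-like instance; pinning this down to the clean $O(\opt/n)$ bound (rather than $O(\opt\log n/n)$) is what I expect to take the most care. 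Throughout, the analyses are of the same elementary charging/averaging flavor as the similarity-side lower bounds (Theorems~\ref{T:sim:hard:sandc},~\ref{T:sim:hard:avg}).
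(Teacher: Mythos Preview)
Your single-linkage argument on the matching is fine. The complete-linkage argument, however, leans on the convention that a non-edge contributes complete-linkage distance $+\infty$; in the paper's framework the input is a complete weighted graph (set the missing weights to $0$), and under that convention the behaviour of complete-linkage on your matching is not what you describe --- non-matched singleton pairs are at max-distance $0$, matched pairs at max-distance $1$, so standard complete-linkage (merge the pair of smallest max-distance) would \emph{not} be forced to merge the matched pairs first. That case would need to be reworked.

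The real gap is bisection $2$-Center: you have no instance, only speculation about forcing balanced splits on a path-like graph, and you yourself note this route threatens only $O(\opt\log n/n)$ rather than $O(\opt/n)$. The paper's construction is far simpler and covers all three algorithms with a single instance: take $V=\{v_1,\dots,v_{n-1},u\}$ with every pairwise weight equal to $1$ except one heavy edge $w(v_1,u)=W\ge n^3$. Cutting off $u$ at the root gives $\opt\ge nW$, while any tree that places $u$ and $v_1$ together in a subtree of constant size has value at most $n^2+O(W)=O(\opt/n)$. For $2$-Center specifically, every pair of centers achieves the same objective (all non-heavy weights are $1$), so an adversarial choice places centers at $v_2,v_3$; every other vertex is then equidistant from both, and an adversarial assignment puts $v_1$ and $u$ on the same side of the very first split. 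Single-linkage and complete-linkage succumb to the same heavy-edge trick after one or two tie-broken merges. No sparse gadget or balanced-versus-caterpillar analysis is needed; a single heavy edge does all the work.
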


\begin{proof}
  We define the family of inputs as follow.
  For any $n>2$, the graph $G_n$ consists of $n$ vertices 
  $V = \{v_1,\ldots,v_{n-1}, u\}$ and the edge weights are the 
  following: For any $i,j \in \{1,\ldots,n-1\}$, $w(v_i,v_j) = 1$,
  for any $1 < i \le n- 1$, $w(v_i, u) = 1$, and $w(v_1,u) = W$
  for some fixed $W \ge n^3$.
  Consider the tree $T^*$ whose root induces a cut 
  $(V \setminus \{u\},\{u\})$. Then, the value of this tree
  (and so $\opt$) is at least $n W$, since $|V(\lca_{T^*}(v_1,u))| = n$.
  
  \textbf{Single-Linkage.}
  At start, all the clusters are at distance 1 from each other 
  except $v_1$ and $u$ that are at distance $W$. Thus, suppose that 
  the first merge generates a candidate tree $C_1$
  whose leaves set is $\{v_1,v_2\}$. 
  Now, since $w(v_2,u) = 1$, we have that all the clusters
  are at distance 1 from each other.
  Therefore, the next merge could possibly generate the cluster
  $C_2$ with leaves sets $\{u,v_1,v_2\}$. 
  Assume w.l.o.g that this is the case 
  and let $T$ be the tree output by the algorithm.
  We obtain $|V(\lca_T(u, v_1))| = 3$ and so, since for any
  $v_i,v_j$, $|V(v_i,v_j)| \le n$,
  $\val(T) \le n^2+ 3W \le 4W$, since $W > W^3$.
  Hence, $\val(T) = O(\val(T^*)/n)$.  

  \textbf{Complete-Linkage.}
  Again, at first all the clusters are at distance 1 from each other 
  except $v_1$ and $u$ that are at distance $W$. Since the algorithm 
  merges the two clusters that are at maximum distance, it merges
  $u$ and $v_1$. Again, let $T$ be the tree output by the algorithm.
  We have $\val(T) \le n^2+ 2W \le 3W$, since $W > W^3$.
  Hence, $\val(T) = O(\val(T^*)/n)$.

  \textbf{Bisection $2$-Center.}
  It is easy to see that for any location of the two centers, 
  the cost of the clustering is $1$.
  Thus, suppose that the algorithm locates centers in $v_2,v_3$
  and that the induced partitioning is 
  $\{v_1,v_2,u\}, V\setminus \{v_1,v_2,u\}$.
  It follows that $|V(\lca_T(u, v_1))| \le 3$ and so,
  $\val(T) \le n^2+ 3W \le 4W$, since $W > W^3$.
  Again, $\val(T) = O(\val(T^*)/n)$.  
\end{proof}

\begin{proposition}
\label{prop:dis:ms}
  For any input $\calI$ lying in a metric space, 
  for any solution tree $T$ for $\calI$, we have $\val(T) = O(\opt)$.
\end{proposition}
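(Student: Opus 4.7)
The plan is to derive $\val(T) = O(\opt)$ from a simple pointwise upper bound on $\val(T)$ combined with a lower bound on $\opt$ that is already a byproduct of Theorem~\ref{T:avgapprox}. First, I would observe that for \emph{any} cluster tree $T$, every edge $\{x,y\}\in E$ is cut at exactly one internal node of $T$, namely $\lca_T(x,y)$, and contributes $w(x,y)\cdot|V(\lca_T(x,y))|$ to $\val(T)$. Since $|V(N)|\le n$ for every internal node $N$, this yields the pointwise bound
$$\val(T) \;=\; \sum_{\{x,y\}\in E} w(x,y)\cdot |V(\lca_T(x,y))| \;\le\; n\sum_{e\in E} w(e).$$
This mirrors the calculation underlying Fact~\ref{F:optcost}, but applied to an arbitrary tree rather than to $\opt$.

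Next, I would invoke Theorem~\ref{T:avgapprox}, which establishes that the average-linkage algorithm already produces a tree of value at least $n\sum_{e\in E} w(e)/2$. Since $\opt$ is the maximum of $\val$ over all cluster trees, we obtain
$$\opt \;\ge\; \frac{n}{2}\sum_{e\in E} w(e).$$
Chaining these two inequalities gives $\val(T)\le n\sum_{e\in E} w(e)\le 2\opt$, which is precisely $\val(T)=O(\opt)$, with explicit constant $2$.

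I want to flag that the metric hypothesis is not actually used in the argument above: both the pointwise upper bound and the lower bound on $\opt$ coming from Theorem~\ref{T:avgapprox} hold for arbitrary nonnegative edge weights. The purpose of Proposition~\ref{prop:dis:ms} is therefore to emphasize that on dissimilarity inputs drawn from a metric---the typical practical scenario---\emph{every} hierarchical clustering is automatically a constant-factor approximation, so the algorithmic effort in the dissimilarity regime reduces to sharpening the constant (as achieved by the locally-densest-cut algorithm of Section~\ref{S:dissim:approx}). I do not anticipate any real obstacle: the only thing to be careful about is the summation identity $\sum_N \val(N) = \sum_{\{x,y\}\in E} w(x,y)\cdot |V(\lca_T(x,y))|$, which follows immediately by unfolding the definition $\val(N)=w(V(N_1),V(N_2))\cdot|V(N)|$ and noting that each pair $\{x,y\}$ contributes to exactly one internal node.
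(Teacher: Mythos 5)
Your argument establishes only the inequality $\val(T)\le 2\,\opt$, but that direction is vacuous: since $\opt=\max_{T'}\val(T')$ in the dissimilarity setting, every tree trivially satisfies $\val(T)\le\opt$, so no argument (let alone one invoking Theorem~\ref{T:avgapprox}) is needed. The actual content of Proposition~\ref{prop:dis:ms}, as the paper's proof makes explicit, is the reverse bound: for metric inputs \emph{every} cluster tree satisfies $\val(T)=\Omega(n\sum_e w(e))$ and hence $\val(T)=\Omega(\opt)$ by Fact~\ref{F:optcost}; i.e.\ on metric inputs every hierarchical clustering is automatically a constant-factor approximation. (The big-$O$ in the statement is best read as asserting $\opt=O(\val(T))$; the paper's proof concludes exactly with ``$\val(T)=\Omega(n\sum_e w(e))$ and by Fact~\ref{F:optcost}, $\Omega(\opt)$''.) Your own closing remark that ``the metric hypothesis is not actually used'' is the tell-tale sign that the wrong statement was proved: the lower bound genuinely needs the triangle inequality, since Theorem~\ref{T:badcase:dis:pract} exhibits non-metric inputs on which some trees (e.g.\ those produced by single-linkage or complete-linkage) have value only $O(\opt/n)$.

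For the record, the missing argument runs as follows. Take the node $u_0$ of $T$ obtained by walking from the root, always descending into the child with more leaves, and stopping when the subtree first has fewer than $2n/3$ leaves; set $A=V(u_0)$ and $B=V\setminus A$, so $n/3\le|A|,|B|\le 2n/3$. Every edge of the cut $(A,B)$ has its least common ancestor strictly above $u_0$, hence in a subtree with at least $2n/3$ leaves, so $\val(T)\ge\frac{2n}{3}\,w(A,B)$. The triangle inequality $w(u,v)\le w(u,x)+w(x,v)$ for $u,v\in A$, $x\in B$ (and symmetrically inside $B$), together with the fact that the number of internal pairs $\binom{|A|}{2}+\binom{|B|}{2}$ is $\Theta(|A|\cdot|B|)$, lets one charge the total weight of edges inside $A$ and inside $B$ to the cut edges and conclude $w(A,B)=\Omega\bigl(\sum_e w(e)\bigr)$. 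Combining, $\val(T)=\Omega\bigl(n\sum_e w(e)\bigr)=\Omega(\opt)$. This charging step is where the metric hypothesis enters, and it is precisely the step absent from your proposal.
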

\begin{proof}
  Consider a solution tree $T$ and the node $u_0$ of $T^*_S$ 
  that is the first node reached by the walk from the root 
  that always goes to 
  the child tree with the higher number of leaves, stopping when the 
  subtree of $T^*_S$ rooted at $u_0$ contains fewer than $2n/3$ leaves.
  Let $A = V(u_0), B = V \setminus V(u_0)$.
  Note that the number of edges in $G[A]$ is at 
  most $a = {|A| \choose 2}$, the number of edges in $G[B]$ is at 
  most $b = {|B| \choose 2}$, whereas the number of 
  edges in the cut $(A,B)$ 
  is $|A| \cdot |B|$. Recall that $n/3 \le |A|,|B| \le 2n/3$ and
  so $a,b = \Theta(|A| \cdot |B|)$.
  Finally observe that for each edge $(u,v) \in G[A]$, we have
  $w(u,v) \le w(u,x) + w(x,v)$ for any $x \in B$. 
  Thus, since $a+b = \Theta(|A| \cdot |B|)$, by a simple counting 
  argument, we deduce
  $\val(T) = \Omega(n \sum_e w(e))$ and by Fact~\ref{F:optcost},
  $\Omega(\opt)$.
\end{proof}

\subsubsection*{Acknowledgments} 
The authors are grateful to Sanjoy Dasgupta for sharing thoughtful comments at
various stages of this project.

\bibliographystyle{plainnat}
\bibliography{biblio.bib}

\begin{thebibliography}{37}
\providecommand{\natexlab}[1]{#1}
\providecommand{\url}[1]{\texttt{#1}}
\expandafter\ifx\csname urlstyle\endcsname\relax
  \providecommand{\doi}[1]{doi: #1}\else
  \providecommand{\doi}{doi: \begingroup \urlstyle{rm}\Url}\fi

\bibitem[Arora and Kannan(2001)]{ArK01}
Sanjeev Arora and Ravi Kannan.
\newblock Learning mixtures of arbitrary gaussians.
\newblock In \emph{Proceedings on 33rd Annual {ACM} Symposium on Theory of
  Computing, July 6-8, 2001, Heraklion, Crete, Greece}, pages 247--257, 2001.
\newblock \doi{10.1145/380752.380808}.
\newblock URL \url{http://doi.acm.org/10.1145/380752.380808}.

\bibitem[Arora et~al.(2009)Arora, Rao, and Vazirani]{AroraRV09}
Sanjeev Arora, Satish Rao, and Umesh~V. Vazirani.
\newblock Expander flows, geometric embeddings and graph partitioning.
\newblock \emph{J. {ACM}}, 56\penalty0 (2), 2009.
\newblock \doi{10.1145/1502793.1502794}.
\newblock URL \url{http://doi.acm.org/10.1145/1502793.1502794}.

\bibitem[Awasthi and Sheffet(2012)]{AwS12}
Pranjal Awasthi and Or~Sheffet.
\newblock Improved spectral-norm bounds for clustering.
\newblock In \emph{Approximation, Randomization, and Combinatorial
  Optimization. Algorithms and Techniques - 15th International Workshop,
  {APPROX} 2012, and 16th International Workshop, {RANDOM} 2012, Cambridge, MA,
  USA, August 15-17, 2012. Proceedings}, pages 37--49, 2012.
\newblock \doi{10.1007/978-3-642-32512-0_4}.
\newblock URL \url{http://dx.doi.org/10.1007/978-3-642-32512-0_4}.

\bibitem[Awasthi et~al.(2010)Awasthi, Blum, and Sheffet]{ABS10}
Pranjal Awasthi, Avrim Blum, and Or~Sheffet.
\newblock Stability yields a {PTAS} for k-median and k-means clustering.
\newblock In \emph{51th Annual {IEEE} Symposium on Foundations of Computer
  Science, {FOCS} 2010, October 23-26, 2010, Las Vegas, Nevada, {USA}}, pages
  309--318, 2010.
\newblock \doi{10.1109/FOCS.2010.36}.
\newblock URL \url{http://dx.doi.org/10.1109/FOCS.2010.36}.

\bibitem[Awasthi et~al.(2012)Awasthi, Blum, and Sheffet]{ABS12}
Pranjal Awasthi, Avrim Blum, and Or~Sheffet.
\newblock Center-based clustering under perturbation stability.
\newblock \emph{Inf. Process. Lett.}, 112\penalty0 (1-2):\penalty0 49--54,
  2012.
\newblock \doi{10.1016/j.ipl.2011.10.006}.
\newblock URL \url{http://dx.doi.org/10.1016/j.ipl.2011.10.006}.

\bibitem[Balcan and Liang(2016)]{BaL16}
Maria{-}Florina Balcan and Yingyu Liang.
\newblock Clustering under perturbation resilience.
\newblock \emph{{SIAM} J. Comput.}, 45\penalty0 (1):\penalty0 102--155, 2016.
\newblock \doi{10.1137/140981575}.
\newblock URL \url{http://dx.doi.org/10.1137/140981575}.

\bibitem[Balcan et~al.(2008)Balcan, Blum, and Vempala]{BalcanBV10}
Maria-Florina Balcan, Avrim Blum, and Santosh Vempala.
\newblock A discriminative framework for clustering via similarity functions.
\newblock In \emph{Proceedings of the Fortieth Annual ACM Symposium on Theory
  of Computing}, STOC '08, pages 671--680, New York, NY, USA, 2008. ACM.
\newblock ISBN 978-1-60558-047-0.
\newblock \doi{10.1145/1374376.1374474}.
\newblock URL \url{http://doi.acm.org/10.1145/1374376.1374474}.

\bibitem[Balcan et~al.(2009{\natexlab{a}})Balcan, Blum, and Gupta]{BBG09}
Maria{-}Florina Balcan, Avrim Blum, and Anupam Gupta.
\newblock Approximate clustering without the approximation.
\newblock In \emph{Proceedings of the Twentieth Annual {ACM-SIAM} Symposium on
  Discrete Algorithms, {SODA} 2009, New York, NY, USA, January 4-6, 2009},
  pages 1068--1077, 2009{\natexlab{a}}.
\newblock URL \url{http://dl.acm.org/citation.cfm?id=1496770.1496886}.

\bibitem[Balcan et~al.(2009{\natexlab{b}})Balcan, R{\"{o}}glin, and
  Teng]{BRT09}
Maria{-}Florina Balcan, Heiko R{\"{o}}glin, and Shang{-}Hua Teng.
\newblock Agnostic clustering.
\newblock In \emph{Algorithmic Learning Theory, 20th International Conference,
  {ALT} 2009, Porto, Portugal, October 3-5, 2009. Proceedings}, pages 384--398,
  2009{\natexlab{b}}.
\newblock \doi{10.1007/978-3-642-04414-4_31}.
\newblock URL \url{http://dx.doi.org/10.1007/978-3-642-04414-4_31}.

\bibitem[Balcan et~al.(2013)Balcan, Blum, and Gupta]{BBG13}
Maria{-}Florina Balcan, Avrim Blum, and Anupam Gupta.
\newblock Clustering under approximation stability.
\newblock \emph{J. {ACM}}, 60\penalty0 (2):\penalty0 8, 2013.
\newblock \doi{10.1145/2450142.2450144}.
\newblock URL \url{http://doi.acm.org/10.1145/2450142.2450144}.

\bibitem[Bilu and Linial(2012)]{BiL12}
Yonatan Bilu and Nathan Linial.
\newblock Are stable instances easy?
\newblock \emph{Combinatorics, Probability {\&} Computing}, 21\penalty0
  (5):\penalty0 643--660, 2012.
\newblock \doi{10.1017/S0963548312000193}.
\newblock URL \url{http://dx.doi.org/10.1017/S0963548312000193}.

\bibitem[Bilu et~al.(2013)Bilu, Daniely, Linial, and Saks]{BDLS13}
Yonatan Bilu, Amit Daniely, Nati Linial, and Michael~E. Saks.
\newblock On the practically interesting instances of {MAXCUT}.
\newblock In \emph{30th International Symposium on Theoretical Aspects of
  Computer Science, {STACS} 2013, February 27 - March 2, 2013, Kiel, Germany},
  pages 526--537, 2013.
\newblock \doi{10.4230/LIPIcs.STACS.2013.526}.
\newblock URL \url{http://dx.doi.org/10.4230/LIPIcs.STACS.2013.526}.

\bibitem[Brubaker and Vempala(2008)]{BrV08}
S.~Charles Brubaker and Santosh Vempala.
\newblock Isotropic {PCA} and affine-invariant clustering.
\newblock In \emph{49th Annual {IEEE} Symposium on Foundations of Computer
  Science, {FOCS} 2008, October 25-28, 2008, Philadelphia, PA, {USA}}, pages
  551--560, 2008.
\newblock \doi{10.1109/FOCS.2008.48}.
\newblock URL \url{http://dx.doi.org/10.1109/FOCS.2008.48}.

\bibitem[Carlsson and M\'{e}moli(2010)]{CM:2010}
Gunnar Carlsson and Facundo M\'{e}moli.
\newblock Characterization, stability and convergence of hierarchical
  clustering methods.
\newblock \emph{Journal of Machine Learning Research}, 11:\penalty0 1425--1470,
  2010.

\bibitem[Castro et~al.(2004)Castro, Coates, and Nowak]{castro2004likelihood}
Rui~M Castro, Mark~J Coates, and Robert~D Nowak.
\newblock Likelihood based hierarchical clustering.
\newblock \emph{IEEE Transactions on signal processing}, 52\penalty0
  (8):\penalty0 2308--2321, 2004.

\bibitem[Charikar and Chatziafratis(2017)]{CC16}
Moses Charikar and Vaggos Chatziafratis.
\newblock Approximate hierarchical clustering via sparsest cut and spreading
  metrics.
\newblock In \emph{Proceedings of the Twenty-Eighth Annual {ACM-SIAM} Symposium
  on Discrete Algorithms, {SODA} 2017, Barcelona, Spain, Hotel Porta Fira,
  January 16-19}, pages 841--854, 2017.
\newblock \doi{10.1137/1.9781611974782.53}.
\newblock URL \url{http://dx.doi.org/10.1137/1.9781611974782.53}.

\bibitem[Dasgupta(1999)]{Das99}
Sanjoy Dasgupta.
\newblock Learning mixtures of gaussians.
\newblock In \emph{40th Annual Symposium on Foundations of Computer Science,
  {FOCS} '99, 17-18 October, 1999, New York, NY, {USA}}, pages 634--644, 1999.
\newblock \doi{10.1109/SFFCS.1999.814639}.
\newblock URL \url{http://dx.doi.org/10.1109/SFFCS.1999.814639}.

\bibitem[Dasgupta(2016)]{Das:2016}
Sanjoy Dasgupta.
\newblock A cost function for similarity-based hierarchical clustering.
\newblock In \emph{Proceedings of the 48th Annual ACM SIGACT {S}ymposium on
  {T}heory of {C}omputing}, STOC 2016, pages 118--127, New York, NY, USA, 2016.
  ACM.
\newblock ISBN 978-1-4503-4132-5.
\newblock \doi{10.1145/2897518.2897527}.
\newblock URL \url{http://doi.acm.org/10.1145/2897518.2897527}.

\bibitem[Dasgupta and Long(2005)]{DL05}
Sanjoy Dasgupta and Philip~M Long.
\newblock Performance guarantees for hierarchical clustering.
\newblock \emph{Journal of Computer and System Sciences}, 70\penalty0
  (4):\penalty0 555--569, 2005.

\bibitem[Dasgupta and Schulman(2007)]{DaS07}
Sanjoy Dasgupta and Leonard~J. Schulman.
\newblock A probabilistic analysis of {EM} for mixtures of separated, spherical
  gaussians.
\newblock \emph{Journal of Machine Learning Research}, 8:\penalty0 203--226,
  2007.
\newblock URL \url{http://www.jmlr.org/papers/v8/dasgupta07a.html}.

\bibitem[Eldridge et~al.(2016)Eldridge, Belkin, and Wang]{EBW16}
Justin Eldridge, Mikhail Belkin, and Yusu Wang.
\newblock Graphons, mergeons, and so on!
\newblock In \emph{Advances in Neural Information Processing Systems 29: Annual
  Conference on Neural Information Processing Systems 2016, December 5-10,
  2016, Barcelona, Spain}, pages 2307--2315, 2016.
\newblock URL
  \url{http://papers.nips.cc/paper/6089-graphons-mergeons-and-so-on}.

\bibitem[Felsenstein and Felenstein(2004)]{felsenstein2004inferring}
Joseph Felsenstein and Joseph Felenstein.
\newblock \emph{Inferring phylogenies}, volume~2.
\newblock Sinauer Associates Sunderland, 2004.

\bibitem[Friedman et~al.(2001)Friedman, Hastie, and
  Tibshirani]{friedman2001elements}
Jerome Friedman, Trevor Hastie, and Robert Tibshirani.
\newblock \emph{The elements of statistical learning}, volume~1.
\newblock Springer series in statistics Springer, Berlin, 2001.

\bibitem[Hoeffding(1963)]{H63}
Wassily Hoeffding.
\newblock Probability inequalities for sums of bounded random variables.
\newblock \emph{Journal of the American Statistical Association}, 58\penalty0
  (301):\penalty0 13--30, 1963.
\newblock ISSN 01621459.
\newblock URL \url{http://www.jstor.org/stable/2282952}.

\bibitem[Jardine and Sibson(1972)]{jardine1972mathematical}
N.~Jardine and R.~Sibson.
\newblock \emph{Mathematical Taxonomy}.
\newblock Wiley series in probability and mathematical statistiscs. John Wiley
  \& Sons, 1972.
\newblock URL \url{https://books.google.dk/books?id=uMYZogEACAAJ}.

\bibitem[Kleinberg(2002)]{kleinberg2002impossibility}
Jon Kleinberg.
\newblock An impossibility theorem for clustering.
\newblock In \emph{NIPS}, volume~15, pages 463--470, 2002.

\bibitem[Kumar and Kannan(2010)]{KuK10}
Amit Kumar and Ravindran Kannan.
\newblock Clustering with spectral norm and the k-means algorithm.
\newblock In \emph{51th Annual {IEEE} Symposium on Foundations of Computer
  Science, {FOCS} 2010, October 23-26, 2010, Las Vegas, Nevada, {USA}}, pages
  299--308, 2010.
\newblock \doi{10.1109/FOCS.2010.35}.
\newblock URL \url{http://dx.doi.org/10.1109/FOCS.2010.35}.

\bibitem[Lin et~al.(2006)Lin, Nagarajan, Rajaraman, and Williamson]{LNRW06}
Guolong Lin, Chandrashekhar Nagarajan, Rajmohan Rajaraman, and David~P
  Williamson.
\newblock A general approach for incremental approximation and hierarchical
  clustering.
\newblock In \emph{Proceedings of the seventeenth annual ACM-SIAM symposium on
  Discrete algorithm}, pages 1147--1156. Society for Industrial and Applied
  Mathematics, 2006.

\bibitem[Lyzinski et~al.(2017)Lyzinski, Tang, Athreya, Park, and
  Priebe]{LTAPP:2017}
Vince Lyzinski, Minh Tang, Avanti Athreya, Youngser Park, and Carey~E Priebe.
\newblock Community detection and classification in hierarchical stochastic
  blockmodels.
\newblock \emph{IEEE Transactions on Network Science and Engineering},
  4\penalty0 (1):\penalty0 13--26, 2017.

\bibitem[McSherry(2001)]{McS01}
Frank McSherry.
\newblock Spectral partitioning of random graphs.
\newblock In \emph{42nd Annual Symposium on Foundations of Computer Science,
  {FOCS} 2001, 14-17 October 2001, Las Vegas, Nevada, {USA}}, pages 529--537,
  2001.
\newblock \doi{10.1109/SFCS.2001.959929}.
\newblock URL \url{http://dx.doi.org/10.1109/SFCS.2001.959929}.

\bibitem[Ostrovsky et~al.(2012)Ostrovsky, Rabani, Schulman, and Swamy]{ORSS12}
Rafail Ostrovsky, Yuval Rabani, Leonard~J. Schulman, and Chaitanya Swamy.
\newblock The effectiveness of {L}loyd-type methods for the k-means problem.
\newblock \emph{J. {ACM}}, 59\penalty0 (6):\penalty0 28, 2012.
\newblock \doi{10.1145/2395116.2395117}.
\newblock URL \url{http://doi.acm.org/10.1145/2395116.2395117}.

\bibitem[Plaxton(2003)]{Plaxton03}
C~Greg Plaxton.
\newblock Approximation algorithms for hierarchical location problems.
\newblock In \emph{Proceedings of the thirty-fifth annual ACM symposium on
  Theory of computing}, pages 40--49. ACM, 2003.

\bibitem[Roy and Pokutta(2016)]{Roy16}
Aurko Roy and Sebastian Pokutta.
\newblock Hierarchical clustering via spreading metrics.
\newblock In \emph{Advances In Neural Information Processing Systems}, pages
  2316--2324, 2016.

\bibitem[Sneath and Sokal(1962)]{sneath1962numerical}
Peter~HA Sneath and Robert~R Sokal.
\newblock Numerical taxonomy.
\newblock \emph{Nature}, 193\penalty0 (4818):\penalty0 855--860, 1962.

\bibitem[Steinbach et~al.(2000)Steinbach, Karypis, and
  Kumar]{Steinbach00acomparison}
Michael Steinbach, George Karypis, and Vipin Kumar.
\newblock A comparison of document clustering techniques.
\newblock In \emph{In KDD Workshop on Text Mining}, 2000.

\bibitem[Wolfe and Olhede(2013)]{WO:2013}
Patrick~J Wolfe and Sofia~C Olhede.
\newblock Nonparametric graphon estimation.
\newblock \emph{arXiv preprint arXiv:1309.5936}, 2013.

\bibitem[Zadeh and Ben-David(2009)]{zadeh2009uniqueness}
Reza~Bosagh Zadeh and Shai Ben-David.
\newblock A uniqueness theorem for clustering.
\newblock In \emph{Proceedings of the twenty-fifth conference on uncertainty in
  artificial intelligence}, pages 639--646. AUAI Press, 2009.

\end{thebibliography}

\end{document}